\date{\today}
\begin{document}

\title{Universal Neyman-Pearson Classification with a Known Hypothesis}
\author{Parham Boroumand and Albert Guill\'en i F\`abregas

%
  %\IEEEauthorblockN{Parham Boroumand}
%  \IEEEauthorblockA{University of Cambridge\\
%                    \tt pb702@cam.ac.uk}
%  \and
%  \IEEEauthorblockN{Albert ~Guill\'en i F\`abregas}
%  \IEEEauthorblockA{ICREA \& Universitat Pompeu Fabra\\
%                    University of Cambridge\\
%                    \tt guillen@ieee.org}

\thanks{P. Boroumand is with the Department of Engineering, 
University
of Cambridge, Cambridge CB2 1PZ, U.K. (e-mail: pb702@cam.ac.uk).

A.~Guill\'en i F\`abregas is with the Department of Engineering, 
University
of Cambridge, Cambridge CB2 1PZ, U.K., and also with the Department of Information and Communication Technologies, Universitat Pompeu Fabra, 08018 Barcelona, Spain (e-mail: guillen@ieee.org)
}
\thanks{
This work was supported in part by the European Research Council under
Grant 725411 and by the Spanish Ministry of Economy and Competitiveness under Grant PID2020-116683GB-C22.
}
}

\maketitle

\begin{abstract} 
	 We propose a universal classifier for binary Neyman-Pearson classification where null distribution is known while only a training sequence is available for the alternative distribution. The proposed classifier interpolates between Hoeffding's classifier and the likelihood ratio test and attains the same error probability prefactor as the likelihood ratio test, i.e., the same prefactor as if both distributions were known. In addition, like Hoeffding's universal hypothesis test, the proposed classifier is shown to attain the optimal error exponent tradeoff attained by the likelihood ratio test whenever the ratio of training to observation samples exceeds a certain value. We propose a lower bound and an upper bound to the training to observation ratio. In addition, we propose a sequential classifier that attains the optimal error exponent tradeoff. 
	%We study binary classification where the null distribution is known while only a training sequence is available for the alternative distribution. Hoeffding proposed a universal test to achieve the same error exponent tradeoff of the likelihood ratio test, i.e., the same exponent as if both distributions were known. However, while the polynomial part of the likelihood ratio test error probability decays independently of the alphabet size, the polynomial part of the error probability of Hoeffding's test is increasing and a function of the alphabet size. In this work, we propose a universal test that exploits the training sequence to improve the polynomial behavior of error probabilities while attaining the optimal error exponent tradeoff. In particular, we show that if the ratio $\alpha=\frac{k}{n}$ exceeds a finite threshold $\alpha^*$, the proposed test can achieve the optimal error exponent tradeoff as well as the polynomial dependence similar to the likelihood ratio test. Furthermore, we show that the extra training samples can improve the error exponent tradeoff in the case where samples can be taken sequentially. We propose a test to achieve the optimal error exponent tradeoff in this scenario. 

\end{abstract}

%%%%%%%%%%%%%%%%%%%%%%%%%%
%%%%%%%%%%%%%%%%%%%%%%%%%%
\section{Introduction}

The problem of deciding between multiple statistical descriptions of a given observation is one of the main problems in statistics and finds applications in a number of fields including engineering, signal processing, medicine and finance among others (see e.g. \cite{Lehmann}). Depending of whether or not the possible probability distributions of the observation are known, the problem is termed hypothesis testing or classification.
In the case where there are only two possible distributions of the observation one typically refers to these decision problems as binary. When priors on the distributions are available, the problem is cast as Bayesian and the average probability of error determines the quality of the detection. In the absence of priors, the design of tests and classifiers minimizes one pairwise error probability while keeping the other upper bounded by a constant. This setting, proposed by Neyman and Pearson \cite{Neyman} has been adopted as the de-facto test design setting since it allows for both pairwise error probabilities being bounded, unlike average risk minimization.

Upon observing a vector of $n$ observations, the hypothesis test that minimizes the above pairwise error probability tradeoff when the two possible distributions are known is given by the likelihood ratio test \cite{Neyman}. Instead, when decisions need to be made each time a new observation arrives, the sequential probability ratio test provides the best exponential decay of the error probability (or error exponent) \cite{Wald, Tanseq}. When one of the hypothesis distributions belongs to some class of distributions and only the class is known, Hoeffding proposed the generalized likelihood ratio test that attains the optimal error exponent \cite{Hoeffding}. When the distributions are not known, several classifiers such as logistic regression, support vector machines, naive Bayes, have been proposed in the literature \cite{VC, Shirani, Gaborpattern, CoverNN}. However, none of these classifiers ensures a guarantee on the type-\RNum{1} error probability  resulting in the possibility of a large type-\RNum{1} errors.  In the Neyman-Pearson setting, \cite{Tong} gives uniform bounds on type-\RNum{1} error probability for the plugin likelihood ratio test when both distributions are unknown, and only training sequences from both distributions are available. Gutman \cite{gutman1989asymptotically} proposed universal classifiers that guarantee a certain type-\RNum{1} error exponent under any probability distribution pair while achieving the lowest type-\RNum{2} error probability. Tan \cite{Tan} proved the second order optimality of Gutman's classifier.

This paper considers the setting when the null hypothesis generating distribution is known while only a training sample is available for the alternative hypothesis. This can be the case in the unbalanced training sample size when a large number of training samples is available for the null hypothesis so that with high accuracy, the distribution can be estimated. In contrast, the alternative distribution cannot be accurately estimated due to small number of training samples. The scenario when the number of test and training samples is fixed can also be viewed as composite hypothesis testing with known null hypothesis and an additional training sequence. Hoeffding's test can achieve the optimal error exponent when the null hypothesis distribution is given. However, the prefactor attained by Hoeffding's test is only optimal for distributions defined over binary alphabets.

In this paper, we propose a classifier that achieves the optimal prefactor using the additional training samples while attaining the optimal error exponent tradeoff whenever the training to observation sample ratio exceeds a certain value. The proposed classifier interpolates between the plugin likelihood ratio and Hoeffding's tests. We also study the case when the test and training samples are generated sequentially. There is no classifier similar to Hoeffding's in this scenario that can achieve the error exponent gain by taking test samples sequentially when only the null hypothesis is known. However, our proposed classifier can improve the achievable error exponent by using the additional training samples from the unknown distribution.

This paper is structured as follows. Section \ref{sec:pre} introduces notation and reviews the preliminaries about the likelihood ratio test and Hoeffding's generalized likelihood ratio test. Section \ref{sec:fixedclas}, proposes the new classifier for a fixed sample size setting and shows the finite length improvements over Hoeffding's test. Section \ref{sec:steinclas} discusses the classification problem in the Stein regime and shows the optimality of Gutman's test in this regime. Section \ref{sec:seqclas},  proposes a sequential classifier achieving the highest error exponent tradeoff in the proposed setting. Proofs of the main results can be found in the Appendices.

%%%%%%%%%%%%%%%%%%%%%%%%%%
\section{Prelimenaries} \label{sec:pre}

Consider the following binary classification problem where an observation $\xv=(x_1,\dotsc,x_n)$ is generated  in an i.i.d. fashion from either of two possible distributions $P_0$ or $P_1$  defined on the probability simplex $\Pc(\Xc)$ with alphabet size $|\Xc|<\infty$. We assume that the distribution $P_0$ is known while only a sequence of training samples $\X=(X_1,\dotsc,X_k)\sim P_1^k$  generated  in an i.i.d. fashion  from $P_1$ is available; training and test sequences are sampled independently from each other. We also assume that both $P_0(x)>0, P_1(x)>0$ and $\frac{P_0(x)}{P_1(x)} \leq c$ for each $x\in\Xc$ for some positive $c$. Also we let $k$, the length of the training, be such that $k=\alpha n$ for some positive $\alpha$. 

The type of an $n$-length sequence $\yv$  is defined as $\hat{T}_{\yv}(a)=\frac{N(a|\yv)}{n}$, where $N(a|\yv)$ is the number of occurrences of symbol $a\in\Xc$ in sequence $\yv$. The types of the observation and training sequences $\xv, \X$ are denoted by $\Tx, \TX $ respectively.  The set of all sequences of length $n$ with type $P$, denoted by $\mathcal{T}_{P}^n$, is called the type class. The set of types formed with length $n$ sequences on the simplex $\Pc(\Xc)$ is denoted as $\Pc_n(\Xc)$. 

Let  $\phi(\X,\xv): \Xc^k \times \Xc^n \rightarrow \{0,1\}$ be a classifier that decides the distribution that generated the observation $\xv$ upon processing the training sequence $\X$. We consider deterministic classifiers $\phi$ that decide in favor of $P_0$ if $\xv\in \Ac_0(P_0,\X)$, where $ \Ac_0(P_0,\X)\subset \Xc^n$ is the decision region for the first hypothesis and is a function of $P_0$ and the training samples $\X$. We define $\Ac_1(P_0,\X)=\Xc^n   \setminus \Ac_0$ to be the decision region for the second hypothesis. If we assume no prior knowledge on either distribution, the two possible pairwise error probabilities determine the performance of the classifier. Specifically, the type-\RNum{1} and type-\RNum{2} error probabilities are defined as
\begin{align}\label{eq:e1}
\epsilon_0 (\phi)& = \sum_{\X \in  \Xc^k} P_1(\X) \sum_{\xv \in \Ac_1(P_0,\X)} P_0(\bx^n),\\
\epsilon_1 (\phi)&=  \sum_{\X \in  \Xc^k} P_1(\X) \sum_{\xv \in \Ac_0(P_0,\X)} P_1(\bx^n).  
\end{align}
In the case where both distributions are known, the training sequence is not needed and the classifier becomes a hypothesis test. In this case, the classifier is said to be optimal whenever it achieves the optimal error probability tradeoff given by
\begin{equation}\label{eq:trade}
\min_{\phi: \epsilon_0 (\phi) \leq \xi} \epsilon_1 (\phi),
\end{equation}
where $\xi \in [0,1]$. It is well known that likelihood ratio test defined as
\begin{equation}
\phi^{\rm lrt}(\xv)= \mathbbm{1} \bigg \{ \frac{P_1^n(\bx)}{P_0^n(\bx)}  \geq e^{n\gamma} \bigg\},
\end{equation}
attains the optimal tradeoff \eqref{eq:trade} for every $\gamma$  \cite{Neyman}. The likelihood ratio test can also be expressed as a function of the type of the observation $\Tx$ as \cite{Cover,Csiszar}
\begin{align}\label{eq:LRTtype}
\phi^{\rm lrt}(\Tx)= \mathbbm{1} \big\{ D(\Tx\|P_0)-D(\Tx\|P_1)    \geq  \gamma \big\}.
\end{align}
where $D(P\|Q)= \sum_{\Xc} P(x) \log \frac{P(x)}{Q(x)}$ is the relative entropy between distributions $P$ and $Q$. The optimal error exponent tradeoff $(E_0,E_1)$ is defined as
\begin{align}\label{eq:tradefix}
E_1^*(E_0) \triangleq \sup \big\{E_1\in \mathbb{R}_{+}: \exists \phi , \exists n_0 \in \mathbb{Z}_+  \  \text{s.t.} \  \forall   n>n_0 , \epsilon_0(\phi) \leq e^{-nE_0}  \quad \text{and} \quad  \epsilon_1(\phi) \leq e^{-nE_1}\big\}.
\end{align}
By using Sanov's Theorem \cite{Cover,Dembo}, the optimal error exponent tradeoff $(E_1,E_0)$, attained by the likelihood ratio is given by
\begin{align}
E_0(\phi^{\rm lrt})=\min_{Q \in \mathcal{Q}_0(\gamma)} D(Q\|P_0)\label{eq:min1},\\
E_1(\phi^{\rm lrt})=\min_{Q \in \mathcal{Q}_1(\gamma)} D(Q\|P_1)\label{eq:min2},
\end{align}
where 
\begin{align}
\mathcal{Q}_0(\gamma)&=  \big\{Q\in \mathcal{P}(\Xc): D(Q\| P_0)-D(Q\|P_1) \geq  \gamma  \big\},\\
\mathcal{Q}_1(\gamma)&= \big\{Q\in \mathcal{P}(\Xc): D(Q\| P_0)-D(Q\|P_1) \leq  \gamma    \big\}.
\end{align}
Furtheremore, the minimizing distribution in \eqref{eq:min1}, \eqref{eq:min2} is the tilted distribution
\begin{equation}\label{eq:tilted}
Q_{\lambda^*}(x)= \frac{ P_{0}^{\lambda^*}(x) P_{1}^{1-\lambda^*}(x) } {\sum_{a \in \Xc }  P_{0}^{1-\lambda^*}(a) P_{1}^{\lambda^*}(a) }, ~~~0\leq\lambda^* \leq 1,
\end{equation}
where $\lambda^*$ is the solution of $D(Q_{\lambda^*}\| P_0)-D(Q_{\lambda^*} \| P_1) = \gamma$. 

The classification problem described above with known $P_0$ and a training sequence from $P_1$, can also be viewed as the composite binary hypothesis problem where additional training sequence samples are given for the second hypotheses. In the case of a composite hypothesis testing problem where $P_0$ is given, and the other hypothesis is unrestricted to $\Pc(\Xc)$, Hoeffding proposed in \cite{Hoeffding}  a generalized likelihood-ratio test given by
\begin{align}
\phi^{\rm glrt}(\xv)=  \mathbbm{1} \big\{ D(\Tx\|P_0)   >  E_0 \big\},
\end{align}
which attains the optimal error exponent tradeoff in \eqref{eq:tradefix}. By Sanov's theorem, the error exponent of Hoeffding's test is given by
\begin{align}
 E_0(\phi^{\rm glrt})&=E_0,\\
 E_1(\phi^{\rm glrt})&=\min_{\substack{Q\in \mathcal{P}(\Xc), \\ D(Q\| P_0) \leq  E_0   }} D(Q\|P_1),\label{eq:min2H}
\end{align}
where the minimizing distributions in \eqref{eq:min2H} is given by
\begin{equation}\label{eq:tiltedH}
Q_{\mu^*}(x)= \frac{ P_{0}^{\frac{\mu^*}{1+\mu^*}}(x) P_{1}^{\frac{1}{1+\mu^*}}(x) } {\sum_{a \in \Xc }  P_{0}^{\frac{\mu^*}{1+\mu^*}}(a) P_{1}^{\frac{1}{1+\mu^*}}(a) }, ~~~\mu^*\geq0,
\end{equation}
and $\mu^*$ is the solution to $D(Q_{\mu^*}\|P_0)=E_0$. Using a large deviations refinement \cite{Iltis,Albert}, the type-\RNum{1}  error probability of the likelihood ratio test is
\begin{equation}\label{eq:LRTsaddle}
\epsilon_0(\phi^{\rm lrt}) = \frac{1}{\sqrt{n}} e^{-nE_0}\big (c+ o(1)\big),
\end{equation}
while, Hoeffding's test type-\RNum{1} error probability is given by \cite{Veeravalli,Iltis}
\begin{equation}\label{eq:Hoefasymp}
\epsilon_0(\phi^{\rm glrt})  =n^{\frac{|\Xc|-3}{2}} e^{-nE_0}\big (c'+ o(1)\big)
\end{equation}  
where $c,c'$ are constants that only depend on $P_0, P_1$ and the corresponding test thresholds. As a result, when the number of observations is large, Hoeffding's test, although attaining the optimal error exponent tradeoff, suffers in exponential prefactor when compared to the likelihood ratio's $\frac{1}{\sqrt{n}}$ for observation alphabets such that $|\Xc|>2$.

%%%%%%%%%%%%%%%%%%%%%%%%%%
%%%%%%%%%%%%%%%%%%%%%%%%%%
\section{Fixed Sample Sized Universal Classifier}\label{sec:fixedclas}

In this section, we propose a classifier that interpolates between the likelihood ratio and Hoeffding's tests that attains a prefactor that is independent of the alphabet size and is equal to $\frac{1}{\sqrt{n}}$. In addition, we show that if the ratio of training samples to the number of test samples $\alpha$ exceeds a certain threshold, the proposed test also achieves the optimal error exponent tradeoff.

Hoeffding's test can favor the second hypothesis for test sequences with types close to $P_0$ while far from $P_1$.
Suppose we have a training sequence type $\TX$, we can relax the Hoeffding's test from a ball centered at $P_0$ to a hyperplane tangent to the Hoffding's test ball, directed towards the type of the training sequence. As we will see, this is precisely what enables the improvement in the prefactor of the type-\RNum{1} probability of error. We propose the following classifier
\begin{align}\label{eq:clsfix}
\cls(\Tx,\TX)=\mathbbm{1} \big\{ \beta D(\Tx\|\TX') - D(\Tx\|P_0)\   \leq  \gamma(E_0,\TX') \big\},
\end{align}
where $0 \leq \beta \leq 1$ controls how much the training weights in the decision, the threshold $\gamma(E_0,Q_1)$ is given by
\begin{align}
\gamma(E_0,Q_1)&= \beta \min_{\substack{Q\in \mathcal{P}(\Xc), \\  D(Q\| P_0) \leq  E_0    }    } D(Q\|Q_1) -E_0 \label{eq:threshcls}, 
\end{align}
the perturbed training type $\TX'(a)$ is
\begin{align}\label{eq:typeplus}
\TX'(a)&=\big(1-\delta_n\big)\TX(a)+\frac{\delta_n}{|\Xc|},
%&\frac{\TX(a)+1}{N+1} \mathbbm{1}\big\{ \TX \nll P_2 \big\} +\TX(a) \mathbbm{1} \big\{\Tx \ll P_2  \big\},  \forall a\in \Xc,
\end{align}
where, $\delta_n$ can be chosen as any function of the order  $o(n^{-1})$. We add this small perturbation of the training type to avoid the cases where some of the alphabet symbols have not been observed in the training sequence. We define the decision regions of the proposed classifier by
 \begin{align}
\Az&=\{Q: Q\in \Pc(\Xc),  \cls(Q,\TX)=0 \}, \\
\Ao&=\{Q: Q\in \Pc(\Xc), \cls(Q,\TX)=1 \}. 
 \end{align}
Observe that when $\beta=0$ we recover Hoeffding's test while for $\beta=1$ the test is reminiscent of a likelihood ratio test where instead of $P_1$, we have the perturbed training type $\TX'(a)$. Intuitively, as long as we have enough training samples, the training type $\TX'(a)$ will be close to $P_1$ and we will attain the optimal error exponent tradeoff. This is indeed what will be shown next.

 Figure \ref{fig:classification} illustrates the proposed classifier for a realization of the training sequence when $\beta=1$. The proposed classifier becomes the plug-in likelihood ratio test, where the test's threshold has been adjusted by the training samples such that the resulting hyperplane of the likelihood ratio test is tangent to the relative entropy ball of radius $E_0$ centered at $P_0$. The type-\RNum{1} error exponent will be equal to $E_0$ for any realization of $\TX$. However, the type-\RNum{2} error exponent for a given training sample $E_1(\TX')$ is the projection of $P_1$ into the separating hyperplane determined by the test, which is the function of the training sequence and can vary accordingly.

\begin{figure}[!h]
	\centering
	\begin{tikzpicture}[scale=0.8]
	\draw [line width=0.3mm]  (5.5,1.2) -- (0,-5) -- (11,-5) --(5.5,1.2)  ;
	\draw [line width=0.3mm, dashed] (3.1,-1.5) --   (4.5,-5)  ;
	\node at (7.4,-4.5) {\small $D(\Tx\|P_0)-D(\Tx\|\TX')    =   \gamma(E_0,\TX') $};
	%	\draw [cyan, xshift=4cm] plot [smooth, tension=0.5] coordinates { (-2,-2.5) (-0.5,-2.4) (1.5,-2.6) (4,-2.5)};
	\node[draw,circle,inner sep=1pt,fill] at (3,-3.5) {};
	\node[draw,circle,inner sep=1pt,fill] at (6.5,-2.6) {};
	\node at (2.7,-3.4) {\small $ P_0$};
	\node at (6.8,-2.95) {\small $\TX'$};
	\draw (2.8,-3.5) ellipse (1.05cm and 0.7cm);
	\node at (1,-4.7) {$\mathcal{P}({\Xc})$};
	\draw [<-,>=stealth] (6.5,-2.6) -- (6.,-1.85);
	\draw [->,>=stealth] (3,-3.5) -- (3.2,-4.15);
	\node at (2.9,-3.9) {\small $E_0$};
	\node[draw,circle,inner sep=1pt,fill] at (6,-1.85) {};
     \draw [line width=0.25mm, gray]plot [smooth, tension=1.5] coordinates{(6,-1.85)  (5,-2.1)  (3.55,-2.7)} ;
	\draw [line width=0.25mm, gray]plot [smooth, tension=1.5] coordinates{(3,-3.5)  (4.1,-3.2) (5.5,-2.8) (6.5,-2.6)} ;
%	\draw [line width=0.25mm, gray]plot [smooth, tension=1] coordinates{(3,-3)  (4,-2.75) (4.55,-2.7)} ;
%	\draw [line width=0.25mm, gray]plot [smooth, tension=1] coordinates{ (6,-2.2)  (5.5,-2.1)  (4.47,-2.4)} ;
%	\node[draw,circle,inner sep=1pt,fill] at (4.55,-2.7) {};
%	\node[draw,circle,inner sep=1pt,fill] at (4.47,-2.4) {};
%	\node at (4.9,-2.8) {\tiny $\hat{Q}_{\lambda_1}$};
%	\node at (4.15,-2.2) {\tiny $\hat{Q}_{\lambda_2}$};
	\node at (4.6,-1.75) {\small $ E_1(\TX')$};
	\node at (6.1,-1.65) {\small $P_1$};
    \node[draw,circle,inner sep=1pt,fill] at (3.57,-2.7) {};
    \node[draw,circle,inner sep=1pt,fill] at (3.8,-3.3) {};
  %   \node at (6.4,-2.1) {\small $R$};
%	\node at (3.6,-1.4) {\small $\Ac_1$};
%	\node at (5.3,-0.7) {\small $\Ac_2$};
	\end{tikzpicture}
	\caption{Proposed classifier with known distribution $P_0$ and a training sequence with type $\TX$.} 
	\label{fig:classification}
\end{figure}
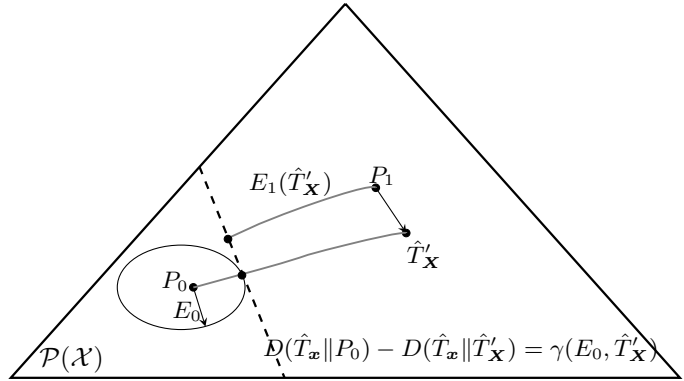

Next, we find a refined expression for the type-\RNum{1} error probability and show that the error probability prefactor is $O(\frac{1}{\sqrt{n}})$, i.e., of the same order of the prefactor achieved by the likelihood ratio test.

\begin{theorem}\label{thm:E0}% Note that for every E1 KKT conditions will satisfy with equality
For every $P_0, P_1, 0<\beta \leq 1$ the type-\RNum{1} probability of error  of the classifier $\cls$ defined in \eqref{eq:clsfix} can be expressed as
\begin{equation}\label{eq:type0classifier}
\epsilon_0(\cls)=  \frac{1}{\sqrt{n}} e^{-nE_0}(c+ o(1)),
\end{equation}
where $c$ is a positive constant that only depends on the data distributions and $E_0$.
\end{theorem}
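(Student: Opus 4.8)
We begin by writing, from \eqref{eq:e1},
\[
\epsilon_0(\cls)=\mathbb{E}_{\X\sim P_1^k}\Big[\Pr\nolimits_{\xv\sim P_0^n}\big(\Tx\in\Ao(\TX')\big)\Big],
\]
where, given the training type $\TX$ (hence $\TX'$), $\Ao(\TX')$ is the deterministic region of \eqref{eq:clsfix}. The plan begins with a geometric reduction. Put $g_{\TX'}(Q)=\beta D(Q\|\TX')-D(Q\|P_0)$; since $(\beta-1)\sum_{x}Q(x)\log Q(x)$ is concave for $\beta\le1$, the function $g_{\TX'}$ is concave, so $\Ao(\TX')=\{Q:g_{\TX'}(Q)\le\gamma(E_0,\TX')\}$ is the complement of a convex set $\mathcal{C}$ with $P_0\in\mathrm{int}\,\mathcal{C}$. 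Using \eqref{eq:threshcls} one verifies directly that the relative-entropy ball $\mathcal{B}=\{Q:D(Q\|P_0)\le E_0\}$ satisfies $\mathcal{B}\subseteq\mathcal{C}$, whence $\min_{Q\in\Ao}D(Q\|P_0)\ge E_0$ for every $\TX'$; moreover the minimizer $Q^{*}=Q^{*}(\TX')$ in \eqref{eq:threshcls} attains this value, lies on the sphere $\{D(\cdot\|P_0)=E_0\}$, is the unique minimizer of $D(\cdot\|P_0)$ over $\Ao$ (it is the information projection of $\TX'$ onto $\mathcal{B}$, unique by strict convexity of $D(\cdot\|\TX')$), and $\partial\mathcal{C}$ is tangent from the outside to that sphere at $Q^{*}$. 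This already gives the crude bound $\Pr_{P_0^n}(\Tx\in\Ao)\le(n+1)^{|\Xc|}e^{-nE_0}$ for \emph{every} realization of $\TX$.

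The core of the argument is a sharp estimate of $\Pr_{P_0^n}(\Tx\in\Ao)$ when $\TX'$ lies in a small fixed neighbourhood $\mathcal{N}$ of $P_1$, chosen (in the nontrivial regime $E_0<D(P_1\|P_0)$) small enough that $D(Q_1\|P_0)>E_0$ on $\mathcal{N}$, so that $Q^{*}(\TX')$ depends smoothly on $\TX'$. The idea is to change measure from $P_0^n$ to $Q^{*n}$, the tilted law of the form \eqref{eq:tiltedH} with $P_1$ replaced by $\TX'$. Writing $\Tx=Q^{*}+Z/\sqrt{n}$, one gets $\frac{dP_0^n}{dQ^{*n}}(\xv)=e^{-nE_0}e^{-\sqrt{n}\langle Z,\ell\rangle}$ with $\ell=\log(Q^{*}/P_0)$, and under $Q^{*n}$ the vector $Z$ obeys a local central limit theorem on $\{\sum_{x}Z(x)=0\}$. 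The event $\Tx\in\Ao$ linearizes at $Q^{*}$ to the half-space $\{\langle Z,\ell\rangle\ge0\}$: by the KKT conditions for the projection, the inward normal of $\Ao$ at $Q^{*}$ is a positive multiple of $\ell$, and this sign alignment is exactly what forces the Gaussian integral $\int_{\langle z,\ell\rangle\ge0}e^{-\sqrt{n}\langle z,\ell\rangle}(\text{density})\,dz$ to be of order $n^{-1/2}$ (the curvature of $\partial\mathcal{C}$, which vanishes when $\beta=1$, affects only the constant, not the order). Making this rigorous — a refined saddlepoint/local-limit estimate of a Sanov-type probability over a curved region, adapting the large-deviation refinements of \cite{Iltis,Albert} — yields
\[
\Pr\nolimits_{P_0^n}(\Tx\in\Ao)=\tfrac{1}{\sqrt{n}}e^{-nE_0}\big(c(\TX')+o(1)\big),
\]
with $c(\cdot)$ continuous and bounded away from $0$ and $\infty$ on $\mathcal{N}$, and the $o(1)$ uniform over $\TX'\in\mathcal{N}$. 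This matches the $n^{-1/2}$ behaviour of the likelihood-ratio test in \eqref{eq:LRTsaddle}, in contrast to Hoeffding's test \eqref{eq:Hoefasymp}, whose region is the ball complement and whose dominating set is the whole sphere.

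It then remains to average over $\TX$, which I would do by splitting on the event $\{\TX'\in\mathcal{N}\}$. On this event the last display applies; since $\TX'\to P_1$ in probability with $\Pr(\TX'\in\mathcal{N})\to1$ and $c$ bounded continuous, bounded convergence gives $\mathbb{E}\big[\mathbbm{1}\{\TX'\in\mathcal{N}\}(c(\TX')+o(1))\big]\to c(P_1)$. On the complementary event one uses the crude bound $(n+1)^{|\Xc|}e^{-nE_0}$ together with Sanov's theorem for the training sequence, which gives $\Pr(\TX'\notin\mathcal{N})\le e^{-\Theta(k)}=e^{-\Theta(n)}$ since $k=\alpha n$; as this decays exponentially while the crude bound is only $\mathrm{poly}(n)\,e^{-nE_0}$, the contribution of this event is $o\big(n^{-1/2}e^{-nE_0}\big)$. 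Combining the two pieces yields $\sqrt{n}\,e^{nE_0}\epsilon_0(\cls)\to c:=c(P_1)>0$, i.e.\ \eqref{eq:type0classifier}. The step that carries the real difficulty is the sharp asymptotics of the preceding paragraph: controlling a refined local-limit/saddlepoint estimate of a large-deviation probability over a region whose boundary is curved, showing the curvature correction does not spoil the $n^{-1/2}$ order, and — crucially for the averaging step — keeping every error term uniform as the dominating point $Q^{*}(\TX')$ and the tilting measure vary with $\TX'$ over $\mathcal{N}$.
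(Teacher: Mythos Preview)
Your proposal is correct and follows essentially the same route as the paper: fix the training type, show the conditional type-\RNum{1} exponent equals $E_0$ for every $\TX'$ via the threshold construction \eqref{eq:threshcls}, obtain the $n^{-1/2}$ prefactor by a refined Sanov/saddlepoint argument, then average over $\TX$. The paper packages the refined step by citing two black-box results (Reeds' sum-to-integral approximation and Iltis' saddlepoint asymptotics for curved regions), verifying Iltis' Hessian condition $\Hm_{\Sc_{E_0}}-\Hm_{\Dc}\succ 0$ for $\beta>0$; you instead unfold the mechanism directly (tilt to $Q^{*n}$, local CLT, linearize the region to a half-space via the KKT normal alignment), which is precisely how such theorems are proved. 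On the final averaging over $\TX$ you are in fact more careful than the paper: the paper simply writes $\sum_{\TX}P_1(\TX)\frac{1}{\sqrt n}e^{-nE_0}(c+o(1))$ and asserts the existence of a single $\tilde c$, whereas you explicitly split into $\TX'\in\mathcal N$ (bounded convergence, $c(\TX')\to c(P_1)$) and $\TX'\notin\mathcal N$ (crude polynomial bound times an exponentially small training probability), and you flag the need for uniformity of the $o(1)$ over $\mathcal N$, which the paper does not address.
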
 

\begin{proof}
The proof can be found in Appendix \ref{sec:proofth1}.
\end{proof}

Having established the improvement of the prefactor of the type-$\RNum{1}$ error probability, we are ready to study the type-\RNum{2} error exponent of our proposed classifier.
\begin{theorem}\label{thm:E1}
	For every $P_0, P_1$ there exists a finite training to sample size ratio $\alpha^*_\beta$ such that for any $\alpha>\alpha^*_\beta$ we have
	\begin{equation}
	E_1(\cls)=E_1^*(E_0),
	\end{equation}
	that is, for any $E_0$, the proposed classifier achieves the optimal error exponent tradeoff defined in \eqref{eq:tradefix}.
%	and for any $\alpha < \alpha^*$
%	\begin{equation}
%	E_2(\cls)<E_2^*(E_0).
%	\end{equation}
\end{theorem}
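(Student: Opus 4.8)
The plan is to prove the two bounds $E_1(\cls)\le E_1^*(E_0)$ and $E_1(\cls)\ge E_1^*(E_0)$ separately. The first is essentially free: by Theorem~\ref{thm:E0}, $\epsilon_0(\cls)=\frac{1}{\sqrt{n}}e^{-nE_0}(c+o(1))\le e^{-nE_0}$ for all large $n$, so the type-\RNum{1} exponent of $\cls$ equals $E_0$, and the definition \eqref{eq:tradefix} of $E_1^*(E_0)$ as a supremum over all admissible classifiers immediately gives $E_1(\cls)\le E_1^*(E_0)$, for every $\alpha$. (If $E_0\ge D(P_1\|P_0)$ then $E_1^*(E_0)=0$ and the theorem already follows, so assume $0<E_0<D(P_1\|P_0)$.) All the work is the reverse bound for $\alpha$ large. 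Here I would condition on the training type: writing $V$ for a training type and $V'$ for the perturbed type \eqref{eq:typeplus} built from it, and using independence of training and test sequences,
\begin{equation}
\epsilon_1(\cls)=\sum_{V\in\Pc_k(\Xc)}P_1^k(\mathcal{T}_V^k)\,P_1^n\big(\{\xv:\cls(\Tx,V)=0\}\big).
\end{equation}
Bounding $P_1^k(\mathcal{T}_V^k)\le e^{-kD(V\|P_1)}=e^{-\alpha nD(V\|P_1)}$ and applying Sanov's theorem to the inner probability bounds it by $(n+1)^{|\Xc|}e^{-nE_1(V')}$, where $E_1(V')\triangleq\min\{D(Q\|P_1):\beta D(Q\|V')-D(Q\|P_0)\ge\gamma(E_0,V')\}$ is the conditional type-\RNum{2} exponent. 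Since there are only polynomially many types, it suffices to show
\begin{equation}\label{eq:plan-goal}
\min_{V\in\Pc(\Xc)}\big[\alpha D(V\|P_1)+E_1(V')\big]\ge E_1^*(E_0)-o(n^{-1}),
\end{equation}
the $o(n^{-1})$ slack being harmless because $\delta_n=o(n^{-1})$ makes $n$ times it vanish.

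I would then record two structural facts about the functional $V'\mapsto E_1(V')$. (i) $E_1(V')\le E_1^*(E_0)$ for every $V'$: the minimiser $Q^*$ of $\min_{Q:D(Q\|P_0)\le E_0}D(Q\|P_1)$ (which attains $E_1^*(E_0)$, cf.\ \eqref{eq:min2H} and the optimality of Hoeffding's test) is feasible for the program defining $E_1(V')$, since $D(Q^*\|P_0)\le E_0$ and $D(Q^*\|V')\ge\min_{Q:D(Q\|P_0)\le E_0}D(Q\|V')$ together imply the constraint \eqref{eq:threshcls}. (ii) $E_1(P_1)=E_1^*(E_0)$: any $Q$ with $D(Q\|P_1)<E_1^*(E_0)$ must have $D(Q\|P_0)>E_0$, hence (using $\beta>0$) $\beta D(Q\|P_1)-D(Q\|P_0)<\beta E_1^*(E_0)-E_0=\gamma(E_0,P_1)$, so $Q$ is infeasible, and with (i) this forces equality. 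Thus $P_1$ is an interior maximiser of $V'\mapsto E_1(V')$ over the simplex, with value $E_1^*(E_0)$.

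The heart of the proof is the quantitative refinement of (ii): there is a constant $C=C(P_0,P_1,E_0,\beta)<\infty$ with
\begin{equation}\label{eq:plan-quad}
E_1^*(E_0)-E_1(V')\le C\,\|V'-P_1\|_1^2\qquad\text{for all }V'\text{ in a neighbourhood of }P_1.
\end{equation}
Granting \eqref{eq:plan-quad}, \eqref{eq:plan-goal} follows by a near/far split. In the far regime $D(V\|P_1)\ge\rho$, trivially $E_1^*(E_0)-E_1(V')\le E_1^*(E_0)\le\frac{E_1^*(E_0)}{\rho}D(V\|P_1)$. In the near regime $D(V\|P_1)<\rho$, Pinsker's inequality together with $\|V'-V\|_1=O(\delta_n)$ give $\|V'-P_1\|_1^2=O(D(V\|P_1))+O(\delta_n^2)$, so \eqref{eq:plan-quad} yields $E_1^*(E_0)-E_1(V')\le K\,D(V\|P_1)+O(\delta_n^2)$. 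Setting $\alpha^*_\beta:=\max\{E_1^*(E_0)/\rho,\,K\}$ (finite, and depending on the distributions and $E_0$), any $\alpha>\alpha^*_\beta$ makes $\alpha D(V\|P_1)+E_1(V')\ge E_1^*(E_0)-O(\delta_n^2)$ uniformly in $V$, which is \eqref{eq:plan-goal}.

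The main obstacle is thus \eqref{eq:plan-quad}: the statement that $V'\mapsto E_1(V')$ is \emph{flat to first order} at its maximiser $P_1$, so its deficit is genuinely quadratic and can be absorbed by the training penalty $\alpha D(V\|P_1)$. I would prove it in one of two ways. The smooth route is to show $V'\mapsto E_1(V')$ is $C^2$ at $P_1$: for $V'$ near $P_1$ the defining program has a unique minimiser $Q^\star(V')$ with $Q^\star(P_1)=Q^*$ and with constraint \eqref{eq:threshcls} active (strictly, since $P_1$ is infeasible when $E_0<D(P_1\|P_0)$), so the envelope/implicit-function theorem gives smoothness and a second-order Taylor estimate, and the interior-maximum property from (ii) kills the linear term. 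The hands-on route uses the active constraint directly: writing $\gamma(E_0,V')=\beta D(Q^\dagger(V')\|V')-E_0$ with $Q^\dagger(V')=\arg\min_{Q:D(Q\|P_0)\le E_0}D(Q\|V')$ and expanding the cross terms $D(Q\|V')-D(Q\|P_1)=\sum_x Q(x)\log\frac{P_1(x)}{V'(x)}$, the local Lipschitz stability near $P_1$ of both $Q^\star(V')$ and $Q^\dagger(V')$ (both equal to $Q^*$ at $V'=P_1$) turns these cross terms into $O(\|V'-P_1\|_1^2)$ quantities, after which the elementary argument of (ii) — if $t:=E_1^*(E_0)-E_1(V')>0$ then $D(Q^\star(V')\|P_0)\le E_0-\beta t+O(\|V'-P_1\|_1^2)$, which forces $t=O(\|V'-P_1\|_1^2)$ — closes the estimate. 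Either route hinges on the local stability/smoothness of these two relative-entropy projections, which is where the real effort lies, and needs only mild non-degeneracy of $(P_0,P_1,E_0)$ (namely $Q^*$ lying in the interior of the simplex on the smooth part of $\{Q:D(Q\|P_0)=E_0\}$), valid throughout the range $0<E_0<D(P_1\|P_0)$ of interest.
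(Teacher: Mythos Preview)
Your proposal is correct and follows essentially the same route as the paper: reduce via the method of types to showing $\min_{Q_1}\big[\alpha D(Q_1\|P_1)+E_1(Q_1)\big]=E_1^*(E_0)$ for large $\alpha$, split near/far in the training type, and argue that the deficit $E_1^*(E_0)-E_1(Q_1)$ is second-order at $Q_1=P_1$ so the linear training penalty absorbs it. The paper parametrizes by $r=D(Q_1\|P_1)$, uses Berge's theorem and Dini's theorem to pass to the limit, and defers the crucial finite-derivative fact to Theorem~\ref{thm:lower} (via Lemma~\ref{lem:HessianE}, which computes the gradient of $E_1(E_0,Q_1)$ at $Q_1=P_1$ explicitly through the envelope theorem and shows it vanishes); your observation that $P_1$ is an interior \emph{maximizer} of $V'\mapsto E_1(V')$ is a cleaner conceptual explanation for that zero gradient, though establishing the smoothness needed to exploit it amounts to the same envelope/implicit-function work the paper carries out.
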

\begin{proof}
The proof can be found in Appendix \ref{sec:proofth2}.
\end{proof}

We have shown that the classifier proposed in \eqref{eq:clsfix} not only achieves the optimal error exponent tradeoff for $\alpha>\alpha^*_\beta$ but also achieves the same prefactor of the type-\RNum{1} error probability of the likelihood ratio test. This represents a significant improvement with respect to the Hoffding's universal test for observation alphabets $|\Xc|>2$. However, this might come as an expense of worse polynomial decay of type-\RNum{2} error probability. In the following result, by using the refinements of large deviation techniques, we show that this is not the case and that the proposed classifier achieves the same prefactor as the type-\RNum{2} error probability of the likelihood ratio test, establishing the optimality of the proposed classifier up to a constant. 

\begin{theorem}\label{thm:refinement}
	For every $P_0, P_1$, and  $\alpha^*_\beta < \alpha $ the type-\RNum{2} probability of error  of the classifier $\cls$ is 
	
	\begin{equation}\label{eq:type1classifier}
	\epsilon_1(\cls)=  \frac{1}{\sqrt{n}} e^{-nE_1(E_0,P_1)}(c+ o(1)), 
	\end{equation}
	where $c$ is a positive constant depending only on data distributions and $E_0$.
\end{theorem}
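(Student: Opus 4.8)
I would follow the same route as the proof of Theorem~\ref{thm:E0}: condition on the type of the training sequence, apply a saddlepoint refinement of Sanov's theorem to the inner error probability, and then evaluate the resulting sum over training types by a Laplace argument. Since $\cls$ depends on $\X$ only through its type,
\[
\epsilon_1(\cls)=\sum_{T\in\Pc_k(\Xc)}P_1^k\big(\mathcal T_T^k\big)\,p_n(T),\qquad p_n(T):=\Pr_{\xv\sim P_1^n}\!\big[\Tx\in\Az(T')\big],
\]
where $T'$ is the perturbation \eqref{eq:typeplus} of $T$ and $\Az(T')=\{Q:\beta D(Q\|T')-D(Q\|P_0)>\gamma(E_0,T')\}$. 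For $0<\beta<1$ the map $Q\mapsto\beta D(Q\|T')-D(Q\|P_0)$ is strictly concave, so $\Az(T')$ is convex with a smooth boundary, and for $\beta=1$ it is a half-space; in either case the relative-entropy projection $Q^*(T'):=\arg\min_{Q\in\Az(T')}D(Q\|P_1)$ is the unique dominating point, lies in the interior of $\Pc(\Xc)$ (it has a tilted form between $P_0$ and $P_1$, hence full support), and depends smoothly on $T'$. Applying the large-deviations refinement of \cite{Iltis,Albert} exactly as in the proof of Theorem~\ref{thm:E0} yields, uniformly over $T$ in a small enough neighborhood $\mathcal N$ of $P_1$,
\[
p_n(T)=\frac{1}{\sqrt n}\,e^{-nE_1(T')}\big(c(T')+o(1)\big),\qquad E_1(T'):=D\big(Q^*(T')\|P_1\big),
\]
with $c(\cdot)$ continuous and strictly positive on $\mathcal N$; for $T\notin\mathcal N$ only the crude bound $p_n(T)\le(n+1)^{|\Xc|}e^{-nE_1(T')}$ is needed.

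The next step is to understand the outer exponent $F(Q_1):=\alpha D(Q_1\|P_1)+E_1(Q_1)$, since $P_1^k(\mathcal T_T^k)p_n(T)$ has exponential order $e^{-nF(T')}$. Three ingredients are required, all of which are contained in or parallel to the proof of Theorem~\ref{thm:E1}. First, $F(P_1)=E_1^*(E_0)$: the calibration \eqref{eq:threshcls} gives $\gamma(E_0,P_1)=\beta E_1^*(E_0)-E_0$, so $Q_{\mu^*}$ of \eqref{eq:tiltedH} lies on $\partial\Az(P_1)$, and a Lagrange (tilting) computation identifies $Q^*(P_1)=Q_{\mu^*}$, whence $E_1(P_1)=D(Q_{\mu^*}\|P_1)=E_1^*(E_0)$ — this is the quantity denoted $E_1(E_0,P_1)$ in \eqref{eq:type1classifier}. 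Second, $\nabla F(P_1)=0$: $\nabla D(\cdot\|P_1)$ vanishes at $P_1$, and the envelope theorem gives $\nabla E_1(P_1)=-\nu\,\beta\big(\nabla_{Q_1}D(Q^*\|Q_1)-\nabla_{Q_1}D(\tilde Q\|Q_1)\big)\big|_{Q_1=P_1}$, where $\tilde Q$ is the minimizer defining $\gamma(E_0,\cdot)$ in \eqref{eq:threshcls}; since both $Q^*(P_1)$ and $\tilde Q$ equal $Q_{\mu^*}$, this gradient is zero. Third, the Hessian $\nabla^2F(P_1)=\alpha\,\mathbf J(P_1)+\nabla^2E_1(P_1)=:M$, with $\mathbf J(P_1)$ the Fisher information matrix of $P_1$ on the tangent space of the simplex, is positive definite for $\alpha>\alpha^*_\beta$; moreover $F(T')\ge E_1^*(E_0)$ for all $T$, with $F(T')\ge E_1^*(E_0)+\rho$ for $T\notin\mathcal N$ and some $\rho>0$.

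Finally I would assemble the estimate. The tail $\sum_{T\notin\mathcal N}P_1^k(\mathcal T_T^k)p_n(T)\le(n+1)^{2|\Xc|}e^{-n(E_1^*(E_0)+\rho)}=o\big(n^{-1/2}e^{-nE_1^*(E_0)}\big)$ is negligible. For $T\in\mathcal N$, use the local limit theorem for the multinomial, $P_1^k(\mathcal T_T^k)=C_k\,k^{-(|\Xc|-1)/2}e^{-kD(T\|P_1)}(1+o(1))$ uniformly near $P_1$ with $C_k\to C>0$, together with $k=\alpha n$, the expansions $D(T\|P_1)=\tfrac12(T-P_1)^\top\mathbf J(P_1)(T-P_1)+O(\|T-P_1\|^3)$ and $F(T')=E_1^*(E_0)+\tfrac12(T-P_1)^\top M(T-P_1)+O(\|T-P_1\|^3)$ (the $o(n^{-1})$ gap $T'-T$ affects the Lipschitz exponents and $c(\cdot)$ only through $1+o(1)$ factors), and $c(T')=c(P_1)+o(1)$, to get
\[
\sum_{T\in\mathcal N}P_1^k(\mathcal T_T^k)p_n(T)=\frac{1}{\sqrt n}\,e^{-nE_1^*(E_0)}\sum_{T\in\mathcal N}\frac{C_k}{k^{(|\Xc|-1)/2}}\,e^{-\frac n2(T-P_1)^\top M(T-P_1)}\big(c(P_1)+o(1)\big).
\]
The lattice sum is a Riemann sum: the length-$k$ types tile the $(|\Xc|-1)$-dimensional simplex with $\Theta(k^{|\Xc|-1})$ points per unit volume while the Gaussian bump has width $\Theta(n^{-1/2})$, so, since $M\succ0$, it converges to a strictly positive constant — the factor $k^{(|\Xc|-1)/2}$ from the point count cancels $k^{-(|\Xc|-1)/2}$ from the local limit theorem, leaving only the $n^{-1/2}$ prefactor. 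This establishes \eqref{eq:type1classifier} with $E_1(E_0,P_1)=E_1^*(E_0)$ and a positive constant $c$ depending only on $P_0,P_1,E_0$ (and $\beta,\alpha$). The main difficulty is the uniform saddlepoint estimate for $p_n(T)$ — in particular controlling the curved boundary of $\Az(T')$ when $\beta<1$ and making the expansion uniform over $\mathcal N$ — while the facts $Q^*(P_1)=Q_{\mu^*}$, $\nabla F(P_1)=0$, and $M\succ0$ for $\alpha>\alpha^*_\beta$ form the conceptual core connecting the refinement back to Theorem~\ref{thm:E1}.
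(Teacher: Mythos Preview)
Your proposal is correct and follows essentially the same route as the paper: condition on the training type, extract a $1/\sqrt{n}$ prefactor for the conditional type-\RNum{2} probability via a saddlepoint refinement, and then evaluate the outer sum by a Laplace argument using exactly the three facts you list---$F(P_1)=E_1^*(E_0)$, $\nabla F(P_1)=0$ (the paper's Lemma~\ref{lem:HessianE}), and $\nabla^2F(P_1)=\Hm_2+\alpha\Jm\succ0$ for $\alpha>\alpha^*_\beta$. The one noteworthy difference is in the inner step: the paper invokes the inclusion $\Ac_0(Q_1,\beta_1)\subseteq\Ac_0(Q_1,\beta_2)$ (its Lemma~\ref{lem:inclusion}) to reduce to the $\beta=1$ half-space case before applying the refinement, whereas you observe directly that $Q\mapsto\beta D(Q\|T')-D(Q\|P_0)$ is strictly concave for $\beta<1$ so $\Az(T')$ is convex and the Iltis estimate applies in one shot; your route is slightly more direct and delivers the two-sided asymptotic immediately, while the paper's inclusion argument gives only the upper bound and defers the matching lower bound to a remark that the method of Theorem~\ref{thm:E0} adapts.
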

 
\begin{proof}
The proof can be found in Appendix \ref{sec:proofth3}.
\end{proof}

\begin{example}\label{ex:ex1}
We present a numerical example to illustrate the performance of the proposed classifier in \eqref{eq:clsfix}. Consider two trinary distributions $P_0=[0.3,0.3,0.4], P_1=[0.35,0.35,0.3]$ and set $\alpha= 2$ and $E_0=0.005$. Each point in the figure is obtained by estimating the average error probability as follows. For each length of the test sequence $n$, we estimate the type-\RNum{1} and type-\RNum{2} error probabilities of the classifier in \eqref{eq:clsfix} as well as those of the likelihood ratio test in \eqref{eq:LRTtype} and Hoeffding's classifier over $5*10^7$ independent experiments. As it can be seen in Figures \ref{fig:e0}, the type-\RNum{1} error exponent of the likelihood ratio test and the proposed classifier are very close to each other and outperform the Hoeffding's test. In addition, we observe that the type-\RNum{2} error exponent of the proposed classifier is slightly worse than that of the likelihood ratio and Hoeffding's tests for small $n$; as $n$ increases, the proposed classifier achieves the optimal exponent. In addition, in order to clearly observe the effect of the prefactor, in Figures \ref{fig:e0asymp} we plot $\log\frac{\epsilon_i(\phi)}{\frac{1}{\sqrt n}e^{-nE_i}}$  for $i\in \{0,1 \}$ for the same classifiers. We first notice that in the case of  $\log\frac{\epsilon_1(\phi)}{\frac{1}{\sqrt n}e^{-nE_1}}$, the three classifiers achieve the optimal $\frac{1}{\sqrt{n}}$ prefactor, though with different constants. We also observe that  $\log\frac{\epsilon_0(\phi)}{\frac{1}{\sqrt n}e^{-nE_0}}$ converges to a constant for the likelihood ratio test and the new classifier, as predicted by our analysis. Instead, as \eqref{eq:Hoefasymp} suggests, Hoeffding's classifier fails to attain the optimal $\frac{1}{\sqrt{n}}$ prefactor for $\epsilon_0$.

\end{example}

\begin{figure}[htpb]
		 \centering
		%\hspace*{-2.3cm}  
		% This file was created by matlab2tikz.
%
%The latest updates can be retrieved from
%  http://www.mathworks.com/matlabcentral/fileexchange/22022-matlab2tikz-matlab2tikz
%where you can also make suggestions and rate matlab2tikz.
%
\definecolor{mycolor1}{rgb}{0.00000,0.44700,0.74100}%
\definecolor{mycolor2}{rgb}{0.85000,0.32500,0.09800}%
\definecolor{mycolor3}{rgb}{0.92900,0.69400,0.12500}%
\begin{tikzpicture}

\begin{axis}[%
width=3.9096in,
height=2.905in,
at={(1.011in,0.651in)},
scale only axis,
xmin=0,
xmax=2000,
xlabel style={font=\color{white!15!black}},
xlabel={$n$},
ymin=0,
ymax=0.06,
ylabel style={font=\color{white!15!black}},
ylabel={$-\frac{1}{n} \log{\epsilon_0}$},
axis background/.style={fill=white},
xmajorgrids,
ymajorgrids,
legend style={legend cell align=left, align=left, draw=white!15!black}
]
\addplot [color=blue, line width=1.0pt, mark=*, mark options={solid, blue}]
  table[row sep=crcr]{%
20	0.0550557709443114\\
40	0.0343591447563654\\
60	0.0259316190917678\\
80	0.0211398801788295\\
100	0.0185553028512672\\
200	0.0126421614794129\\
300	0.0105145097628423\\
400	0.0094200231132231\\
500	0.0086602350598729\\
600	0.00815475376406645\\
700	0.00779917932420399\\
800	0.00752585925723111\\
900	0.00729983943374415\\
1000	0.00711527157441065\\
1100	0.0069389145532319\\
1200	0.00683775535855637\\
1300	0.00668626307221598\\
1400	0.00656537044123971\\
1500	0.00649765586915097\\
1600	0.00643625795037872\\
1700	0.0063193042607495\\
1800	0.00630411723694036\\
1900	0.00618349884131121\\
2000	0.00597960628379932\\
};
\addlegendentry{New Classifier}

\addplot [color=red, densely dotted,line width=1.0pt, mark=o, mark options={solid, red}]
  table[row sep=crcr]{%
20	0.0441669815867235\\
40	0.0288636980249603\\
60	0.0222283651833453\\
80	0.0190350200721819\\
100	0.0171589903353596\\
200	0.012377673519945\\
300	0.0104641106573867\\
400	0.00946105415208606\\
500	0.0086856118430412\\
600	0.00817712546462002\\
700	0.00775164391151132\\
800	0.00747441997030339\\
900	0.00726858248902153\\
1000	0.00711527157441065\\
1100	0.0069389145532319\\
1200	0.00683775535855637\\
1300	0.00668626307221598\\
1400	0.00656537044123971\\
1500	0.00649765586915097\\
1600	0.00643625795037872\\
1700	0.0063193042607495\\
1800	0.00630411723694036\\
1900	0.00618349884131121\\
2000	0.00597960628379932\\
};
\addlegendentry{LRT}

\addplot [color=black, dashed, line width=1.0pt, mark=*, mark options={solid, black}]
  table[row sep=crcr]{%
20	0.0022394091395067\\
40	0.0034442232436224\\
60	0.00436690542838672\\
80	0.00497243252254931\\
100	0.00493208389237933\\
200	0.00496089374453306\\
300	0.00496407427460175\\
400	0.00500128017201726\\
500	0.00499958580456907\\
600	0.00500030359937937\\
700	0.00501633777824441\\
800	0.00499276209552012\\
900	0.00500571541324552\\
1000	0.00500432027586349\\
1100	0.0049935434152163\\
1200	0.00499806987243492\\
1300	0.00500685542090989\\
1400	0.00499701554132185\\
1500	0.00501030598514974\\
1600	0.00500761528310738\\
1700	0.00499702504851848\\
1800	0.00501510323826574\\
1900	0.00501102628898509\\
2000	0.00502022670380464\\
};
\addlegendentry{GLRT}

\end{axis}

\end{tikzpicture}% 
		% This file was created by matlab2tikz.
%
%The latest updates can be retrieved from
%  http://www.mathworks.com/matlabcentral/fileexchange/22022-matlab2tikz-matlab2tikz
%where you can also make suggestions and rate matlab2tikz.
%
\definecolor{mycolor1}{rgb}{0.00000,0.44700,0.74100}%
\definecolor{mycolor2}{rgb}{0.85000,0.32500,0.09800}%
\definecolor{mycolor3}{rgb}{0.92900,0.69400,0.12500}%
\begin{tikzpicture}

\begin{axis}[%
width=3.9096in,
height=2.905in,
at={(1.011in,0.651in)},
scale only axis,
xmin=0,
xmax=600,
xlabel style={font=\color{white!15!black}},
xlabel={$n$},
ymin=0,
ymax=0.2,
ylabel style={font=\color{white!15!black}},
ylabel={$-\frac{1}{n} \log{\epsilon_1}$},
axis background/.style={fill=white},
xmajorgrids,
ymajorgrids,
legend style={legend cell align=left, align=left, draw=white!15!black}
]
\addplot [color=blue, line width=1.0pt, mark=*, mark options={solid, blue}]
  table[row sep=crcr]{%
20	0.0420916410478073\\
40	0.0258381760503044\\
60	0.0208924712794868\\
80	0.0180835188891873\\
100	0.0164851195634384\\
200	0.0127634633918716\\
300	0.0112170443780314\\
400	0.0102902640650093\\
500	0.00968168377575914\\
600	0.00923579994762714\\
};
\addlegendentry{New Classifier}

\addplot [color=red, densely dotted,line width=1.0pt, mark=o, mark options={solid, red}]
  table[row sep=crcr]{%
20	0.0741534806809647\\
40	0.0413112450756848\\
60	0.0304329718769123\\
80	0.0249543896893467\\
100	0.0215943758377556\\
200	0.0145846054656485\\
300	0.0120335609122988\\
400	0.0107999781093861\\
500	0.0100262768486287\\
600	0.00942498718414462\\
};
\addlegendentry{LRT}

\addplot [color=black, dashed, line width=1.0pt, mark=*, mark options={solid, black}]
  table[row sep=crcr]{%
20	0.184684550555558\\
40	0.0713124526367021\\
60	0.0444596711424749\\
80	0.033125214623145\\
100	0.0272596165778936\\
200	0.0168800025299494\\
300	0.013656780586642\\
400	0.0118924257430854\\
500	0.0108584016788739\\
600	0.0101950009326153\\
};
\addlegendentry{GLRT}

\end{axis}
\end{tikzpicture}% 
		\caption{Type-\RNum{1} and type-\RNum{2}  error exponents for the likelihood ratio test, Hoeffding's test and the proposed classifier. } 
		\label{fig:e0}
	\end{figure}
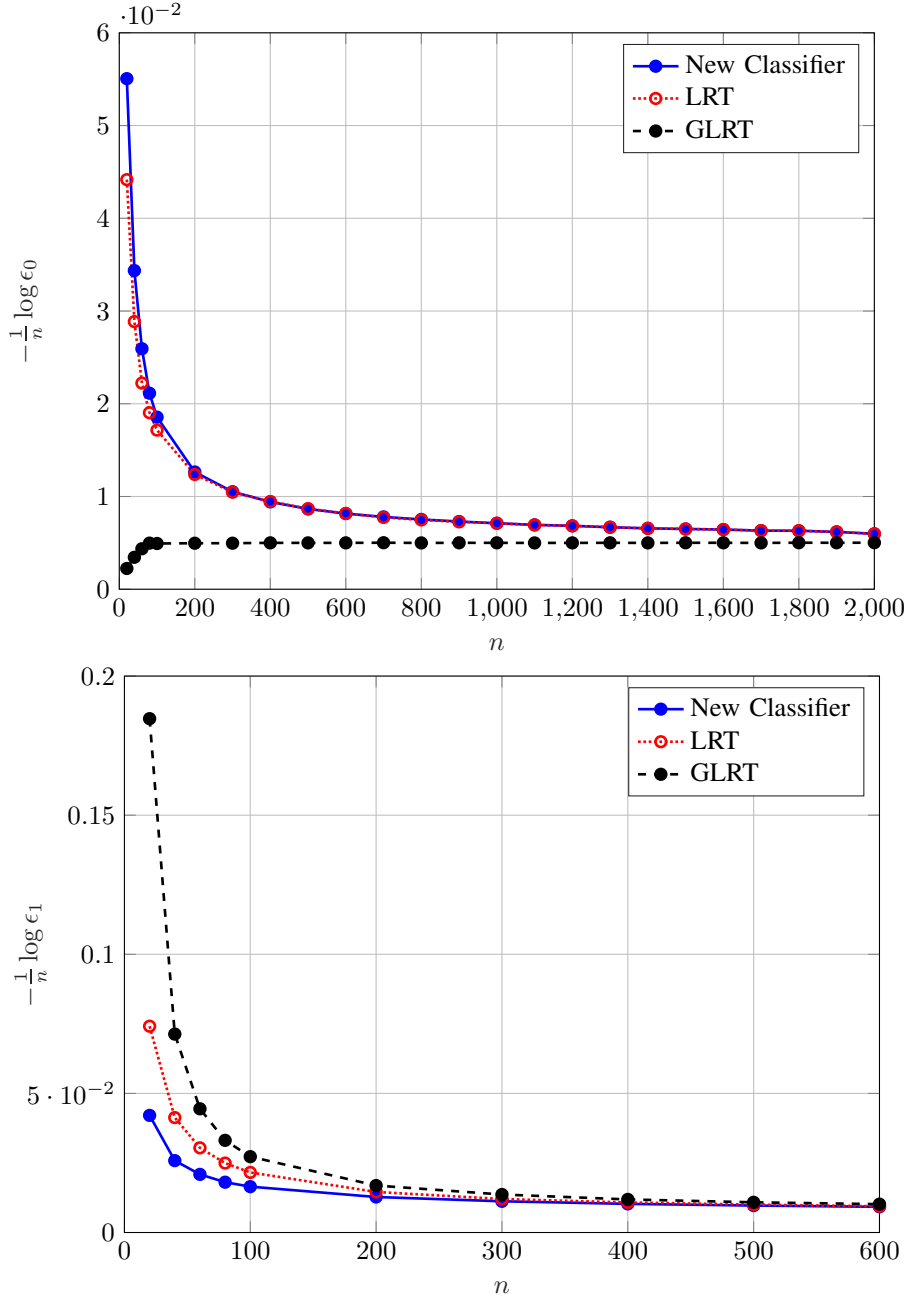

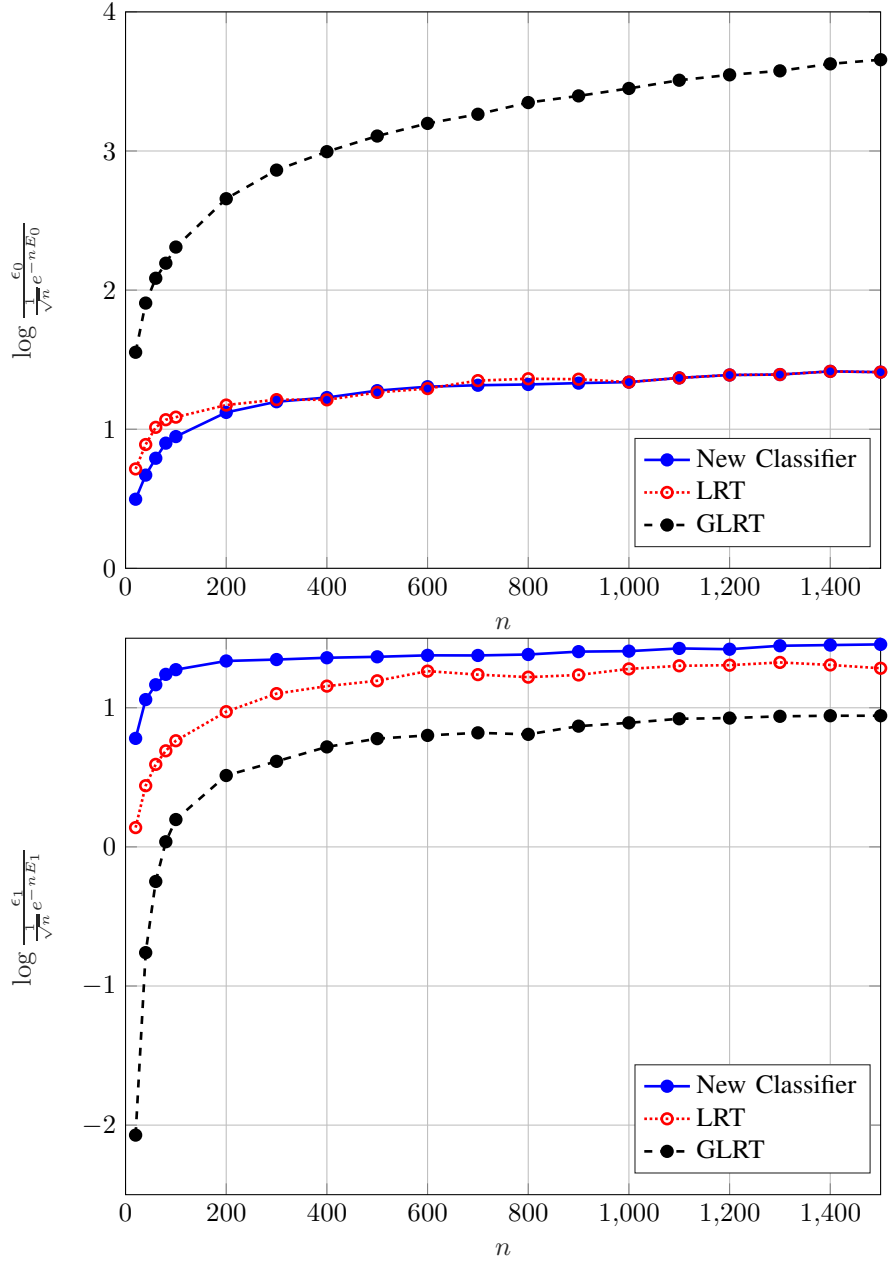
\begin{figure}[htpb]
		 \centering
		%\hspace*{-2.3cm}  
		% This file was created by matlab2tikz.
%
%The latest updates can be retrieved from
%  http://www.mathworks.com/matlabcentral/fileexchange/22022-matlab2tikz-matlab2tikz
%where you can also make suggestions and rate matlab2tikz.
%
\definecolor{mycolor1}{rgb}{0.00000,0.44700,0.74100}%
\definecolor{mycolor2}{rgb}{0.85000,0.32500,0.09800}%
\definecolor{mycolor3}{rgb}{0.92900,0.69400,0.12500}%
\begin{tikzpicture}

\begin{axis}[%
width=3.9096in,
height=2.905in,
scale only axis,
xmin=0,
xmax=1500,
xlabel style={font=\color{white!15!black}},
xlabel={$n$},
ymin=0,
ymax=4,
ylabel style={font=\color{white!15!black}},
ylabel={$\log \frac{\epsilon_0}{{\frac{1}{\sqrt{n}}}e^{-nE_0}}$},
axis background/.style={fill=white},
xmajorgrids,
ymajorgrids,
legend style={at={(0.674,0.038)},anchor=south west,, legend cell align=left, align=left, draw=white!15!black}
]
\addplot [color=blue, line width=1.0pt, mark=*]
  table[row sep=crcr]{%
20	0.496750717890767\\
40	0.670073936802351\\
60	0.79127513560498\\
80	0.899822903030581\\
100	0.947054807867327\\
200	1.12072638739145\\
300	1.1975383084754\\
400	1.22772302826475\\
500	1.27718651927464\\
600	1.3056125691682\\
700	1.31611464057891\\
800	1.32161845804908\\
900	1.33134189129242\\
1000	1.33860606508042\\
1100	1.36872672083814\\
1200	1.38970669709784\\
1300	1.39291777784404\\
1400	1.41576540709503\\
1500	1.41012638981869\\
};
\addlegendentry{New Classifier}

\addplot [color=red, densely dotted, line width=1.0pt, mark=o,mark options={solid, red}]
  table[row sep=crcr]{%
20	0.714526505042526\\
40	0.889891806058555\\
60	1.01347037011033\\
80	1.06821171156239\\
100	1.08668605945809\\
200	1.17362397928501\\
300	1.2126580401121\\
400	1.21131061271957\\
500	1.2644981276905\\
600	1.29218954883606\\
700	1.34938942946378\\
800	1.36276988759126\\
900	1.35947314154278\\
1000	1.33860606508042\\
1100	1.36872672083814\\
1200	1.38970669709784\\
1300	1.39291777784404\\
1400	1.41576540709503\\
1500	1.41012638981869\\
};
\addlegendentry{LRT}

\addplot [color=black, dashed, line width=1.0pt, mark=*,mark options={solid, black}]
  table[row sep=crcr]{%
20	1.55307795398686\\
40	1.90667079731207\\
60	2.08515795540785\\
80	2.193218715533\\
100	2.30937670375611\\
200	2.65697993436741\\
300	2.86266895494758\\
400	2.99522020474709\\
500	3.10751114692656\\
600	3.19828266798045\\
700	3.26410372275061\\
800	3.34809618741787\\
900	3.39605350974119\\
1000	3.44955736362758\\
1100	3.5086349726553\\
1200	3.54735457096615\\
1300	3.57614772454196\\
1400	3.62629199995109\\
1500	3.65615121582054\\
};
\addlegendentry{GLRT}

\end{axis}

\end{tikzpicture}% 
		% This file was created by matlab2tikz.
%
%The latest updates can be retrieved from
%  http://www.mathworks.com/matlabcentral/fileexchange/22022-matlab2tikz-matlab2tikz
%where you can also make suggestions and rate matlab2tikz.
%
\definecolor{mycolor1}{rgb}{0.00000,0.44700,0.74100}%
\definecolor{mycolor2}{rgb}{0.85000,0.32500,0.09800}%
\definecolor{mycolor3}{rgb}{0.92900,0.69400,0.12500}%
\begin{tikzpicture}

\begin{axis}[%
width=3.9096in,
height=2.905in,
scale only axis,
xmin=0,
xmax=1500,
xlabel style={font=\color{white!15!black}},
xlabel={$n$},
ymin=-2.5,
ymax=1.5,
ylabel style={font=\color{white!15!black}},
ylabel={$\log \frac{\epsilon_1}{{\frac{1}{\sqrt{n}}}e^{-nE_1}}$},
axis background/.style={fill=white},
xmajorgrids,
ymajorgrids,
legend style={at={(0.674,0.038)},anchor=south west, legend cell align=left, align=left, draw=white!15!black}
]
\addplot [color=blue, line width=1.0pt, mark=*]
  table[row sep=crcr]{%
20	0.78003331582085\\
40	1.05891268504479\\
60	1.16562400434184\\
80	1.24033180620196\\
100	1.2740731366502\\
200	1.3364660048997\\
300	1.34677792391868\\
400	1.35962664755029\\
500	1.36646216133152\\
600	1.37698485903179\\
700	1.3761884552086\\
800	1.38304464466562\\
900	1.40369889989909\\
1000	1.40709319185616\\
1100	1.42673275799722\\
1200	1.42087617095554\\
1300	1.44585797503964\\
1400	1.45084480630002\\
1500	1.45561329944019\\
};
\addlegendentry{New Classifier}

\addplot [color=red, densely dotted, line width=1.0pt, mark=o,mark options={solid, red}]
  table[row sep=crcr]{%
20	0.138796523157702\\
40	0.439989924029577\\
60	0.593193968496312\\
80	0.690662142189203\\
100	0.763147509218485\\
200	0.972237590144317\\
300	1.10182296363845\\
400	1.15574102979956\\
500	1.19416562489672\\
600	1.2634725171213\\
700	1.23841313960174\\
800	1.22019845362197\\
900	1.23596255123015\\
1000	1.2794852467065\\
1100	1.3018127170257\\
1200	1.30629313802982\\
1300	1.32605677522702\\
1400	1.30798208624661\\
1500	1.28410265875046\\
};
\addlegendentry{LRT}

\addplot [color=black, dashed, line width=1.0pt, mark=*,mark options={solid, black}]
  table[row sep=crcr]{%
20	-2.07182487433416\\
40	-0.760058378411115\\
60	-0.248407987437446\\
80	0.0369961474853406\\
100	0.196623435204683\\
200	0.513158177284136\\
300	0.614857061335491\\
400	0.718761976319825\\
500	0.778103209774147\\
600	0.80146426803889\\
700	0.819898550290024\\
800	0.809073251390055\\
900	0.868001555079049\\
1000	0.891806404614616\\
1100	0.920999906242523\\
1200	0.926065962511522\\
1300	0.938610172621534\\
1400	0.942767380759747\\
1500	0.942809435662099\\
};
\addlegendentry{GLRT}

\end{axis}

\end{tikzpicture}% 
		\caption{$\log\frac{\epsilon_i(\phi)}{{\frac{1}{\sqrt{n}}}e^{-nE_i}}$  for $i\in \{0,1 \}$ and for the likelihood ratio test, Hoeffding's test and the proposed classifier. } 
		\label{fig:e0asymp}
	\end{figure} 
%	\begin{figure}[htp]
%		 \centering
%		%\hspace*{-2.3cm}  
%		\input{e1.tex} 
%	%	\includegraphics[width=.5\textwidth]{figs/Bayes_approx.pdf}
%		\caption{Type-\RNum{2} error exponents for the lieklihood ratio test, Hoeffding's test and the proposed classifier. }
%		\label{fig:e1}
%	\end{figure}

Theorem \ref{thm:E1} shows that the classifier in \eqref{eq:clsfix} can achieve the optimal error exponent tradeoff if the training to sample ratio $\alpha$ is large enough. As it is evident by the proof of Theorem \ref{thm:E1}, the proposed classifier cannot achieve the optimal error exponent tradeoff for $\alpha$ close to zero. Therefore, it is desirable to find the smallest $\alpha^*_\beta$ such that the above theorem holds. The next results introduce  lower and upper bounds to $\alpha^*_\beta$.

\begin{theorem}\label{thm:lower}
		 For every $P_0, P_1 \in \Pc(\Xc)$, and $E_0 > 0$, we have $ \underline{\alpha}_\beta  \leq \alpha^*_\beta $ where 
	\begin{equation}
	\underline{\alpha}_\beta=- \Lambda_{\min}(\Hm) ,
	\end{equation}
	and $ \Lambda_{\min}$ is the smallest eigenvalue of the matrix
\begin{equation}\label{eq:Hess}
\Hm=\beta \eta^*_\beta \sqrt{\Jm} \Big [  \Qm+ \eta^*_1 \Vm +(1-\eta^*_1)\Wm    -\Tm      \Big ] \sqrt{\Jm},
\end{equation}
%\begin{align}
%	 \av&= \tiny \Bigg ( \frac{Q_{\mu^*}(1)}{P_1(1)} \frac{\log\big ( \frac{Q_{\mu^*}(1)}{P_0(1)} \big )-E_0 }{\sqrt{ \Var_{Q_{\mu^*}} \big (\log \frac{Q_{\mu^*}}{P_0} \big ) }}, \ldots,  \frac{Q_{\mu^*}(|\Xc|)}{P_1(|\Xc|)}  \frac{\log\big ( \frac{Q_{\mu^*}(|\Xc|)}{P_0(|\Xc|)} \big )-E_0 }					{\sqrt{ \Var_{Q_{\mu^*}} 	\big (\log \frac{Q_{\mu^*}}{P_0} \big ) }} \Bigg)^T,\label{eq:a}\\
%	 \cv&=	\footnotesize  \Bigg ( Q_{\lambda^*}(1) \frac{\log\big ( \frac{P_1(1)}{P_0(1)} \big )-\gamma }{\sqrt{ \Var_{Q_{\lambda^*}} \big (\log \frac{P_1}{P_0} \big ) }}, \ldots,  Q_{\lambda^*}(|\Xc|) \frac{\log\big ( \frac{P_1(|\Xc|)}{P_0(|\Xc|)} \big )-\gamma }{\sqrt{ \Var_{Q_{\lambda^*}} \big 		(\log 		\frac{P_1}{P_0} \big ) }} \Bigg)^T, \label{eq:c}
%	 \end{align}
%	 \begin{align}
%	\dv&=  \big ( Q_{\lambda^*}(1), \ldots,  Q_{\lambda^*}(|\Xc|)  \big)^T,\label{eq:d}\\
%	\Jm&=\diag\bigg( \frac{1}{P_1(1)},\dotsc,\frac{1}{P_1({|\Xc|})}\bigg) \label{eq:fisher} ,
%	\end{align}
	where $\Wm=\wv \wv^T, \Vm=\vv \vv^T, \Qm=\qv \qv^T $ and
	\begin{align}
	  \wv^T= \frac{1}{{\sqrt{ \Var_{Q_{\mu^*}} \Big (\log \frac{Q_{\mu^*}}{P_0} \Big ) }}} \bigg ( Q_{\mu^*}(1) {\log\Big ( \frac{Q_{\mu^*}(1)}{P_0(1)} \Big )-E_0 }, \ldots,   Q_{\mu^*}(|\Xc|){\log\Big ( \frac{Q_{\mu^*}(|\Xc|)}{P_0(|\Xc|)} \Big )-E_0 } \Bigg),
	\end{align} 
	\begin{align}
	\vv=\frac{1}{\sqrt{\Var{Q_{\eta^*_\beta} }(\Omega)   }}    \big ( Q_{\eta^*_\beta}(1)\Omega(1),\ldots,Q_{\eta^*_\beta}(|\Xc|)\Omega(|\Xc|)   \big ),
	\end{align}
	\begin{align}
	\Omega(i)=\beta \log \frac{P_1(i)}{P_0(i)} +(1-\beta)  \log \frac{Q_{\eta^*_\beta}(i)}{P_0(i)},\ \  i \in \Xc,
	\end{align}	
	\begin{align}
	\qv= \big (Q_{\mu^*} (1),\ldots, Q_{\mu^*}(|\Xc|)\big),
	\end{align}
	 \begin{align}
	\Tm= \diag (Q_{\mu^*} ), \quad \Jm=\diag \bigg(\frac{1}{P_1}  \bigg) \label{eq:fisher}
	\end{align}
	 where $Q_{\mu^*}$ is defined in \eqref{eq:tiltedH} when $Q_1=P_1$, and ${Q}_{\eta_\beta^*}$ defined in \eqref{eq:tiltbeta}, is the projection of $P_1$ onto $\Ac_0(P_1,\beta)$ which equals to $Q_{\mu^*}$. Furthermore, $\eta^*_\beta$ is the optimal Lagrange multiplier in \eqref{eq:lagrangeE} for $0<\beta\leq 1$. Moreover, for $|\Xc| \geq 6$ by further lower bounding $\underline{\alpha}_\beta$ we have
	\begin{align}\label{eq:lower}
	\beta \eta^*_\beta \Bigg [\frac{Q_{\eta^*_\beta}}{P_1} \Bigg ] _{(3)}  \leq \underline{\alpha}_\beta,
	\end{align}
 	where $\Big  [\frac{Q_{\eta^*_\beta}}{P_1} \Big ]_{(3)}$ is the third largest value of $\frac{Q_{\eta^*_\beta(i)}}{P_1(i)}$ for $i \in \{1,\ldots, |\Xc|\}$.
	
\end{theorem}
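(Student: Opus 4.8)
The plan is to reduce the bound to a local, second-order analysis of the type-\RNum{2} error exponent of $\cls$ at the training type $Q_1=P_1$. Since training and test sequences are independent and the perturbation $\delta_n=o(n^{-1})$ of \eqref{eq:typeplus} is exponentially negligible, Sanov's theorem gives
\begin{align*}
E_1(\cls)&=\min_{Q_1\in\Pc(\Xc)}F_\alpha(Q_1),\qquad F_\alpha(Q_1)=\alpha D(Q_1\|P_1)+g(Q_1),\\
g(Q_1)&=\min_{Q:\,\beta D(Q\|Q_1)-D(Q\|P_0)\,\ge\,\gamma(E_0,Q_1)}D(Q\|P_1),
\end{align*}
so that $g(Q_1)$ is the type-\RNum{2} exponent conditioned on the training type being $Q_1$. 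From the definition \eqref{eq:threshcls} of $\gamma(E_0,\cdot)$ one checks the \emph{design property}: the minimizer in $g(P_1)$ is exactly the Hoeffding tilted distribution $Q_{\mu^*}$ of \eqref{eq:tiltedH}, and $g(P_1)=D(Q_{\mu^*}\|P_1)=E_1^*(E_0)$, whence $F_\alpha(P_1)=E_1^*(E_0)$. Combining this with Theorem~\ref{thm:E1}, it suffices to prove that \emph{$\alpha<-\Lambda_{\min}(\Hm)$ implies $P_1$ is not a local minimizer of $F_\alpha$ on $\Pc(\Xc)$}: for any such $\alpha$ there is then a nearby $Q_1$ with $F_\alpha(Q_1)<F_\alpha(P_1)$, hence $E_1(\cls)=\min_{Q_1}F_\alpha(Q_1)<E_1^*(E_0)$, which by Theorem~\ref{thm:E1} forces $\alpha\le\alpha^*_\beta$; letting $\alpha\uparrow-\Lambda_{\min}(\Hm)$ then yields $\underline{\alpha}_\beta\le\alpha^*_\beta$.

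To run the local analysis I would first check that $P_1$ is always a stationary point of $F_\alpha$ on the simplex. The gradient of $D(\cdot\|P_1)$ at $P_1$ is constant (equal to the all-ones vector), hence orthogonal to the simplex tangent space, and an envelope-theorem computation on $g$ gives $\partial g(Q_1)/\partial Q_1(x)\propto(\tilde Q(x)-\hat Q(x))/Q_1(x)$, where $\hat Q(Q_1)$ is the $I$-projection of $P_1$ onto the (convex) null-decision region $\{Q:\beta D(Q\|Q_1)-D(Q\|P_0)\ge\gamma(E_0,Q_1)\}$ and $\tilde Q(Q_1)$ is the $I$-projection of $Q_1$ onto the ball $\{D(\cdot\|P_0)\le E_0\}$; by the design property both equal $Q_{\mu^*}$ at $Q_1=P_1$, so $\nabla g(P_1)=0$ and $\nabla F_\alpha(P_1)$ is orthogonal to the tangent space. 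Consequently $P_1$ is a local minimizer of $F_\alpha$ only if the Hessian $\nabla^2F_\alpha(P_1)=\alpha\Jm+\nabla^2 g(P_1)$ is positive semidefinite on the tangent space $\{v:\sum_x v(x)=0\}$, and after the appropriate congruence by $\sqrt{\Jm}$ this is equivalent to $\alpha\ge-\Lambda_{\min}(\Hm)$ with $\Hm$ the normalized Hessian of $g$ displayed in \eqref{eq:Hess}. The crux of the proof is thus the explicit evaluation of $\nabla^2 g(P_1)$: differentiating the envelope formula once more and using the implicit function theorem to track how $\hat Q$, $\tilde Q$ and their Lagrange multipliers vary with $Q_1$ produces the three rank-one matrices $\Qm,\Vm,\Wm$ — coming from the variation of the normalization constant and of the multipliers of, respectively, the outer and the inner projection, which is why $\vv$ and $\wv$ are normalized by the variances of the relevant log-likelihood ratios — together with the diagonal term $-\Tm=-\diag(Q_{\mu^*})$ from the curvature of $D(\cdot\|P_1)$ at $Q_{\mu^*}$ and from the $1/Q_1(x)$ factor in $\nabla g$. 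Keeping straight which quantities are held at their optimal value (and thus dropped by the envelope theorem) versus which are differentiated through, while respecting the simplex constraint throughout, is the step I expect to be the main obstacle; the remainder is routine.

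For the explicit bound when $|\Xc|\ge6$, use the stated identity $Q_{\mu^*}=Q_{\eta^*_\beta}$, so that $\sqrt{\Jm}\,\Tm\,\sqrt{\Jm}=\diag(Q_{\eta^*_\beta}/P_1)$ and
\begin{equation*}
\Hm=\beta\eta^*_\beta\Big(\sqrt{\Jm}\big(\Qm+\eta^*_1\Vm+(1-\eta^*_1)\Wm\big)\sqrt{\Jm}-\diag\!\big(Q_{\eta^*_\beta}/P_1\big)\Big).
\end{equation*}
Substituting the tilted forms of $Q_{\mu^*}$ shows that $\qv,\vv,\wv$ all lie in the span of $\qv$ and the single vector $\big(Q_{\mu^*}(i)\log(P_1(i)/P_0(i))\big)_{i\in\Xc}$, so the first matrix inside the brackets has rank at most two. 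Weyl's inequality for a rank-$\le2$ symmetric perturbation, namely $\lambda_{k+2}(A+R)\le\lambda_k(A)$ with eigenvalues ordered decreasingly (valid here because a rank-$\le2$ symmetric matrix has vanishing third-largest eigenvalue once the ambient dimension is large enough), applied to $A=-\diag(Q_{\eta^*_\beta}/P_1)$ and the rank-$\le2$ correction, then gives $\Lambda_{\min}(\Hm)\le-\beta\eta^*_\beta\big[Q_{\eta^*_\beta}/P_1\big]_{(3)}$, i.e. $\underline{\alpha}_\beta\ge\beta\eta^*_\beta\big[Q_{\eta^*_\beta}/P_1\big]_{(3)}$; carefully tracking the index through the restriction to the $(|\Xc|-1)$-dimensional tangent space is where the hypothesis $|\Xc|\ge6$ is used.
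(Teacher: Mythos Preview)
Your plan is essentially the paper's proof. The paper also reduces to the local second-order analysis: it writes $E_1(E_0,r)=\min_{D(Q_1\|P_1)\le r}g(Q_1)$, observes from \eqref{eq:optimalalpha} that $-\partial_r E_1(E_0,r)|_{r=0}\le\alpha^*_\beta$, and then computes this derivative via a Taylor expansion of $g$ at $Q_1=P_1$ (Lemmas~\ref{lem:gamsen}--\ref{lem:optwo}); your framing via the local-minimum condition on $F_\alpha=\alpha D(\cdot\|P_1)+g$ is an equivalent repackaging. The step you correctly flag as the main obstacle --- the explicit Hessian of $g$ at $P_1$ --- is indeed the bulk of the work; the paper obtains the gradient via the envelope theorem (your argument that $\nabla g(P_1)=0$ because both $I$-projections collapse to $Q_{\mu^*}$ matches Lemma~\ref{lem:HessianE}), and for the second derivative it invokes a parametric-sensitivity result (Proposition~\ref{prop:sensi}, from \cite{Castil}) that delivers $\partial Q_{\mu^*}/\partial Q_1$ and $\partial\hat Q_{\eta^*}/\partial Q_1$ as blocks of an explicit $(|\Xc|+3)\times(|\Xc|+3)$ matrix inverse, which is how $\Qm,\Vm,\Wm$ and the diagonal $\Tm$ arise concretely. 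One point you gloss over is why the simplex constraint can be dropped in the final eigenvalue computation: the paper shows (Lemma~\ref{lem:optwo}) that $\sqrt{\Jm^{-1}}\onev$ lies in the kernel of $\Hm$, so the minimum over the tangent space equals the unconstrained minimum, and your congruence step is justified only after this is checked.

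On the Weyl step your observation is actually sharper than the paper's. The paper only uses that $\Qm+\eta^*_1\Vm+(1-\eta^*_1)\Wm$ has rank at most three as a sum of three rank-one matrices; you note that because $Q_{\mu^*}=Q_{\eta^*_\beta}$ is a tilted distribution, $\log(Q_{\mu^*}/P_0)$ and $\Omega$ are both affine in $\log(P_1/P_0)$, so $\qv,\vv,\wv$ span a two-dimensional subspace. With rank two the Weyl bound $\lambda_{|\Xc|}(R-D)\le\lambda_3(R)+\lambda_{|\Xc|-2}(-D)=-d_{(3)}$ yields the stated third-largest diagonal entry directly, and in fact needs only $|\Xc|\ge 3$ rather than the $|\Xc|\ge 6$ in the statement; the paper's rank-three route, taken literally, would land on $d_{(4)}$ instead.
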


\begin{proof}
The proof can be found in Appendix \ref{sec:proofth5}.
\end{proof}

\begin{example}\label{exp:lower}
Letting the distributions $P_0=\text{Bern}(0.3), P_1=\text{Bern}(0.4)$, we have set $E_0=0.005$. Figure \ref{fig:alpha_beta} shows the relation of the $\underline{\alpha}_\beta$ for $0<\beta<1$. As can be seen from the figure, $\underline{\alpha}_\beta$ is increasing in $\beta$ and is equal to zero for $\beta=0$ as expected, since for $\beta=0$ the proposed classifier is equal to universal Hoeffding's test and achieves the optimal type-\RNum{2} error exponent for every $P_1$. However, $\beta=0$, as discussed before, is the singularity point as the type-\RNum{1} error probability prefactor looses its independence from dimension. 
	
\end{example}

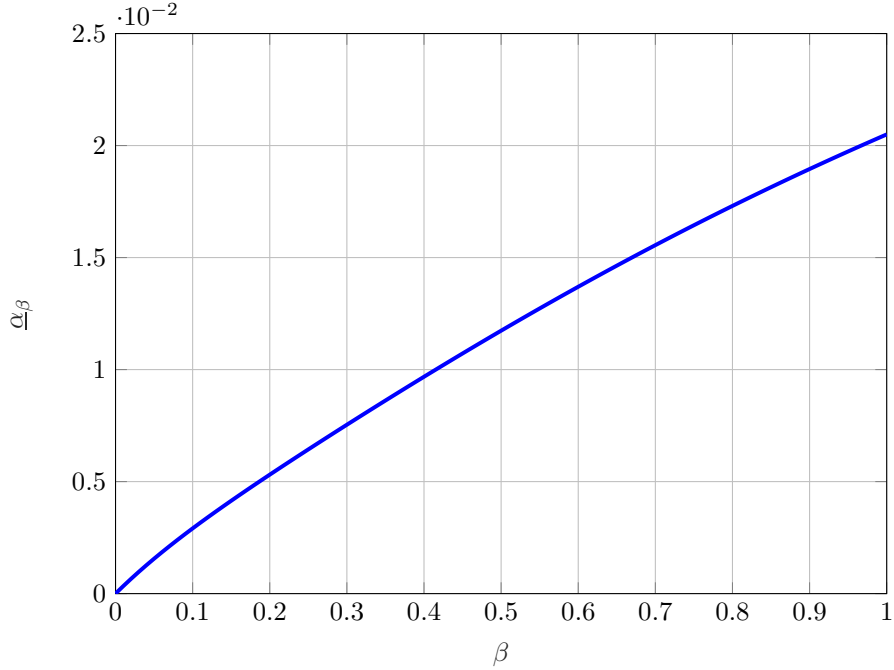
\begin{figure}[h]
		 \centering
		%\hspace*{-2.3cm}  
		% This file was created by matlab2tikz.
%
%The latest updates can be retrieved from
%  http://www.mathworks.com/matlabcentral/fileexchange/22022-matlab2tikz-matlab2tikz
%where you can also make suggestions and rate matlab2tikz.
%
\definecolor{mycolor1}{rgb}{0.00000,0.44700,0.74100}%
\begin{tikzpicture}

\begin{axis}[%
width=3.993in,
height=2.924in,
scale only axis,
xmin=0,
xmax=1,
xlabel style={font=\color{white!15!black}},
xlabel={$\beta$},
ymin=-0,
ymax=0.025,
ylabel style={font=\color{white!15!black}},
ylabel={$\underline{\alpha}_\beta$},
axis background/.style={fill=white},
xmajorgrids,
ymajorgrids
]
\addplot [color=blue, line width=1.5pt]
  table[row sep=crcr]{%
0	-0\\
0.01	0.000334066617038733\\
0.02	0.000655923924337897\\
0.03	0.000966710427175475\\
0.04	0.0012674828164452\\
0.05	0.00155921876254749\\
0.06	0.00184281976430822\\
0.07	0.00211911409764447\\
0.08	0.00238885989189271\\
0.09	0.00265274834547526\\
0.1	0.0029114070777844\\
0.11	0.00316540360153249\\
0.12	0.00341524888983482\\
0.13	0.00366140100521168\\
0.14	0.00390426875350661\\
0.15	0.00414421532422311\\
0.16	0.00438156187961004\\
0.17	0.00461659105752349\\
0.18	0.00484955035715565\\
0.19	0.00508065538166212\\
0.2	0.00531009291709005\\
0.21	0.00553802383243972\\
0.22	0.0057645857908869\\
0.23	0.00598989576694145\\
0.24	0.00621405236848894\\
0.25	0.00643713796618484\\
0.26	0.00665922063553184\\
0.27	0.00688035591919316\\
0.28	0.00710058841872987\\
0.29	0.00731995322606389\\
0.3	0.00753847720563573\\
0.31	0.00775618013852249\\
0.32	0.00797307573978118\\
0.33	0.0081891725600489\\
0.34	0.00840447478202558\\
0.35	0.00861898292193568\\
0.36	0.00883269444545644\\
0.37	0.00904560430694009\\
0.38	0.00925770542008293\\
0.39	0.00946898906751335\\
0.4	0.00967944525611235\\
0.41	0.00988906302424105\\
0.42	0.0100978307064522\\
0.43	0.0103057361606971\\
0.44	0.0105127669625204\\
0.45	0.0107189105702543\\
0.46	0.0109241544647864\\
0.47	0.0111284862670792\\
0.48	0.0113318938362554\\
0.49	0.0115343653507454\\
0.5	0.0117358893746988\\
0.51	0.0119364549116024\\
0.52	0.0121360514468182\\
0.53	0.0123346689805468\\
0.54	0.0125322980525375\\
0.55	0.012728929759709\\
0.56	0.012924555767695\\
0.57	0.0131191683172067\\
0.58	0.0133127602259901\\
0.59	0.0135053248870592\\
0.6	0.0136968562637965\\
0.61	0.0138873488824377\\
0.62	0.0140767978223914\\
0.63	0.0142651987047819\\
0.64	0.014452547679555\\
0.65	0.0146388414114383\\
0.66	0.0148240770650115\\
0.67	0.0150082522891009\\
0.68	0.0151913652006901\\
0.69	0.0153734143685031\\
0.7	0.0155543987964019\\
0.71	0.0157343179067102\\
0.72	0.0159131715235677\\
0.73	0.0160909598563931\\
0.74	0.0162676834835316\\
0.75	0.0164433433361391\\
0.76	0.0166179406823553\\
0.77	0.0167914771118039\\
0.78	0.0169639545204488\\
0.79	0.0171353750958348\\
0.8	0.0173057413027279\\
0.81	0.0174750558691724\\
0.82	0.0176433217729724\\
0.83	0.0178105422286041\\
0.84	0.0179767206745625\\
0.85	0.0181418607611423\\
0.86	0.018305966338651\\
0.87	0.0184690414460509\\
0.88	0.0186310903000229\\
0.89	0.0187921172844478\\
0.9	0.018952126940297\\
0.91	0.0191111239559197\\
0.92	0.0192691131577243\\
0.93	0.0194260995012384\\
0.94	0.0195820880625395\\
0.95	0.0197370840300454\\
0.96	0.0198910926966542\\
0.97	0.0200441194522206\\
0.98	0.0201961697763621\\
0.99	0.0203472492315784\\
1	0.0204973634566779\\
};
\end{axis}

\end{tikzpicture}% 
		\caption{Lower bound on the required training to test sample ratio $\underline{\alpha}_\beta$ for achieving the optimal error exponent trade-off with the proposed classifier for $0< \beta\leq 1$.  }
		\label{fig:alpha_beta}
	\end{figure}

%An interesting question would be studying the behavior of type-\RNum{2} error probability when $\beta$ tends to zero. We believe, using the  saddlepoint approximation methods with finer steps one can show for $\beta= O({n^{-r}})$ the type-\RNum{1} error probability is 
%\begin{equation}
%\epsilon_0  =n^{\frac{|\Xc|-2}{2}r-\frac{1}{2}} e^{-nE_0}\big (c(r)+ o(n)\big)
%\end{equation}
%and if $\alpha = O(n^{-r})$, then we can also achieve the optimal error exponent for type-\RNum{2} probability of error. In other words, we suspect that by a sublinear number of training samples, one can still improve the type-\RNum{1} error probability polynomial decay. Notice that when $r=0$, the training sequence length increases linearly with test samples, and our conjecture agrees with the result in Theorem \ref{thm:E0} and \ref{thm:E1}. 
%
%{\blue I don't think I would say any of this yet. We need to work on this and substantiate the conjecture. }

We next find an upper bound on the optimal training to observation ratio $\alpha^*_\beta$.

\begin{theorem}\label{thm:upper}
	 For every $P_0, P_1 \in \Pc(\Xc)$, and $E_0 > 0$, we have that $\alpha^*_\beta \leq \bar{\alpha} $ where 
\begin{equation}
\bar{\alpha}=\frac{\lambda^*(4+\lambda^*)(1+\kappa)   }{(P_1^{\text{min}})^2},
\end{equation}
$P_1^{\text{min}}\triangleq \min_{x\in\Xc} P_1(x)$, $\lambda^*$ is the optimal Lagrange multiplier in \eqref{eq:tilted}, and 
\begin{equation}\label{eq:upper}
\kappa =\sqrt{\frac{E_1(E_0,P_1)}{\lambda^*(4+\lambda^*)} }.
\end{equation}
\end{theorem}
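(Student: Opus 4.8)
The plan is to reduce to $\beta=1$, re-express $\alpha^*_\beta$ as a supremum of a divergence ratio, and then bound that ratio by a \emph{global} (not merely local, second-order) estimate.

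\textbf{Reduction and characterization.} For a training type $Q_1$ put $g(Q_1)\triangleq\min_{Q:\,D(Q\|P_0)\le E_0}D(Q\|Q_1)$, so that $\gamma(E_0,Q_1)=\beta g(Q_1)-E_0$ and the decide-$P_0$ region is $\overline{\Az}(Q_1)=\{Q:\beta\,(D(Q\|Q_1)-g(Q_1))\ge D(Q\|P_0)-E_0\}$. Any $Q$ with $D(Q\|P_0)\le E_0$ lies in $\overline{\Az}(Q_1)$ for every $\beta$, and any $Q\in\overline{\Az}(Q_1)$ with $D(Q\|P_0)>E_0$ has $D(Q\|Q_1)-g(Q_1)>0$ and hence stays in the region as $\beta$ increases; thus $\overline{\Az}(Q_1)$ is monotone nondecreasing in $\beta$, the type-\RNum{2} exponent $E_1(Q_1,E_0)\triangleq\min_{Q\in\overline{\Az}(Q_1)}D(Q\|P_1)$ is nonincreasing in $\beta$, and $\alpha^*_\beta\le\alpha^*_1$, so it suffices to bound $\alpha^*_1$. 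By the Sanov/type-counting argument underlying Theorems~\ref{thm:E0}--\ref{thm:E1} (the $o(n^{-1})$ perturbation $\TX\to\TX'$ affecting no exponent), $\epsilon_1(\cls)\doteq\sum_{Q_1}e^{-n[\alpha D(Q_1\|P_1)+E_1(Q_1,E_0)]}$, so $E_1(\cls)=\min_{Q_1}\bigl[\alpha D(Q_1\|P_1)+E_1(Q_1,E_0)\bigr]$. Since $\{Q:D(Q\|P_0)\le E_0\}\subseteq\overline{\Az}(Q_1)$ one has $E_1(Q_1,E_0)\le E_1^*(E_0)$, with equality at $Q_1=P_1$; hence $E_1(\cls)=E_1^*(E_0)$ if and only if $\alpha D(Q_1\|P_1)\ge E_1^*(E_0)-E_1(Q_1,E_0)$ for every $Q_1$, i.e.
\[
\alpha^*_\beta=\sup_{Q_1\ne P_1}\frac{E_1^*(E_0)-E_1(Q_1,E_0)}{D(Q_1\|P_1)} ,
\]
and we must show this supremum is at most $\bar\alpha$ when $\beta=1$.

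\textbf{Bounding the ratio.} For $\beta=1$ the region $\overline{\Az}(Q_1)$ is the half-space $\{Q:\langle Q,\log(P_0/Q_1)\rangle\ge g(Q_1)-E_0\}$, so $E_1(Q_1,E_0)$ is the relative-entropy projection of $P_1$ onto it; writing $\Lambda_w(t)\triangleq\log\mathbb{E}_{P_1}[e^{tw}]$ this is the Cram\'er value $E_1(Q_1,E_0)=\sup_{t\ge0}\bigl[t\,(g(Q_1)-E_0)-\Lambda_{\log(P_0/Q_1)}(t)\bigr]$, while $E_1^*(E_0)$ is the same expression at $Q_1=P_1$, attained at $t=\lambda^*$ (with $Q_{\lambda^*}$ as in \eqref{eq:tilted} the projection of $P_1$ onto the ball $\{D(\cdot\|P_0)\le E_0\}$, equal to $Q_{\mu^*}$ of \eqref{eq:tiltedH}). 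Evaluating the variational formula for $E_1(Q_1,E_0)$ at the generally suboptimal $t=\lambda^*$ gives
\[
E_1^*(E_0)-E_1(Q_1,E_0)\ \le\ \lambda^*\bigl(g(P_1)-g(Q_1)\bigr)+\Lambda_{\log(P_0/Q_1)}(\lambda^*)-\Lambda_{\log(P_0/P_1)}(\lambda^*)\ \triangleq\ h(Q_1) .
\]
One has $h(P_1)=0$, and from the envelope identity $\nabla_{Q_1}g(P_1)=-Q_{\lambda^*}/P_1$ together with $\nabla_{Q_1}\Lambda_{\log(P_0/Q_1)}(\lambda^*)\big|_{Q_1=P_1}=-\lambda^*\,Q_{\lambda^*}/P_1$ it follows that $\nabla_{Q_1}h(P_1)=0$, so the first-order terms cancel and $h$ vanishes quadratically at $P_1$. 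It then remains to upgrade this to $h(Q_1)\le\bar\alpha\,D(Q_1\|P_1)$ for \emph{all} $Q_1$. I would split $\Pc(\Xc)$ at a radius $D(Q_1\|P_1)\le\rho$: for $D(Q_1\|P_1)>\rho$, the crude bound $h(Q_1)\le E_1^*(E_0)-E_1(Q_1,E_0)\le E_1^*(E_0)$ gives ratio $\le E_1^*(E_0)/\rho$; for $D(Q_1\|P_1)\le\rho$, control the quadratic remainder of $h$ through explicit bounds on the Hessians of $g(\cdot)$ and of $\Lambda_{\log(P_0/\cdot)}(\lambda^*)$ — these involve $\lambda^*$-tilted second moments, the derivative of the Hoeffding projection $Q_{\mu(Q_1)}$ in $Q_1$, and $\operatorname{diag}(1/P_1)$ — and convert $\ell_2$ distances into $D(Q_1\|P_1)$ by a reverse-Pinsker step $\chi^2(Q_1,P_1)\lesssim D(Q_1\|P_1)$, which is where the factor $\lambda^*(4+\lambda^*)/(P_1^{\text{min}})^2$ comes out (the constant $4+\lambda^*$ from a uniform bound on the relevant variance/remainder terms, using $\Lambda_{\log(P_0/P_1)}(0)=\Lambda_{\log(P_0/P_1)}(1)=0$ and convexity). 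Optimizing the free parameter $\rho$ then balances the two contributions, one scaling like $E_1^*(E_0)=E_1(E_0,P_1)$ and one like $\lambda^*(4+\lambda^*)$, yielding the prefactor $1+\kappa$ with $\kappa=\sqrt{E_1(E_0,P_1)/(\lambda^*(4+\lambda^*))}$, hence exactly $\bar\alpha=\lambda^*(4+\lambda^*)(1+\kappa)/(P_1^{\text{min}})^2$.

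\textbf{Main obstacle.} The technical heart is this last step: promoting the correct \emph{local} quadratic behaviour of $E_1^*(E_0)-E_1(Q_1,E_0)$ near $Q_1=P_1$ — essentially the content of the lower bound in Theorem~\ref{thm:lower} — to a bound valid for \emph{every} $Q_1\in\Pc(\Xc)$, including $Q_1$ far from $P_1$ and near $\partial\Pc(\Xc)$ where both $D(\cdot\|P_1)$ and the tilt direction $\log(P_0/Q_1)$ blow up, while keeping all constants explicit so that precisely $\lambda^*(4+\lambda^*)$ and $(P_1^{\text{min}})^{-2}$ appear and the loss incurred by fixing $t=\lambda^*$ in the variational formula is controlled. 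The monotonicity reduction above is what makes all of this uniform in $\beta$.
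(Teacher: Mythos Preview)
Your approach is essentially the paper's: reduce to $\beta=1$ by monotonicity of the decision region in $\beta$ (this is exactly Lemma~\ref{lem:inclusion}), lower-bound $E_1(E_0,Q_1)$ by freezing the dual variable at $\lambda^*$, observe that the first-order term in $Q_1-P_1$ cancels, and then handle near/far $Q_1$ separately with the crossover radius playing the role of your $\rho$. The one step you leave vague---how the constant $\lambda^*(4+\lambda^*)$ actually comes out---is done in the paper not via properties of $\Lambda_{\log(P_0/P_1)}(\cdot)$ but by writing the exact Hessian of the frozen-$\lambda^*$ bound as a matrix $\Sigm(\tilde Q_1)=-\lambda^*\Hm(\tilde Q_1)-\lambda^*(1+\lambda^*)\operatorname{diag}(\hat Q_{\lambda^*}/\tilde Q_1^{2})+(\lambda^*)^2\uv\uv^T$, applying Weyl's inequality to this decomposition, and crudely bounding each piece by $1/(\tilde Q_1^{\min})^2$ using $Q_{\mu^*},\hat Q_{\lambda^*}\le 1$; then \emph{forward} Pinsker (not reverse) turns $D(Q_1\|P_1)\le r$ into $Q_1^{\min}\ge P_1^{\min}-c\sqrt{r}$, and the clipping $E_1(E_0,r)\ge 0$ fixes the crossover $\bar r$, producing the $(1+\kappa)$ factor.
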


\begin{proof}
The proof can be found in Appendix \ref{sec:proofth7}.
\end{proof}

Note that the upper bound \eqref{eq:upper} and lower bound \eqref{eq:lower} to $\alpha^*_\beta$ suggests that as $P_1$ approaches the probability simplex boundaries, i.e., as $P_1^\text{min}$ approaches zero, the classification problem becomes more challenging, and we need more training samples to achieve the optimal exponents. For the classification problem in the Example \ref{exp:lower}, the upper bound in \eqref{eq:upper} gives $\bar{\alpha}=14.19$, i.e., if the ratio of training samples to the number of test samples exceeds $14.19$ the proposed test with $\beta=1$ achieves \eqref{eq:type0classifier}, \eqref{eq:type1classifier}. See \cite{viswanath}  for general conditions between alphabets size and training sample size when both hypotheses are unknown.

%%%%%%%%%%%%%%%%%%%%%%%%%%%%
\section{Stein Regime Classification}\label{sec:steinclas}

In this section, we will study the classification problem in the Stein regime. For the case where both hypotheses are known, the Stein regime is defined as the highest error exponent  under one hypothesis  when the  error probability under the other hypothesis is at most some fixed $\epsilon$ (see e.g., \cite{Cover})
\begin{align}\label{eq:steindef}
E_1^{(\epsilon)} \triangleq \sup \big \{E_1\in \mathbb{R}_{+}: \exists \phi , \exists n_0 \in \ZZ_+  \  \text{s.t.} \  \forall   n>n_0 ~
\epsilon_0 (\phi)\leq \epsilon  \quad \text{and} \quad \epsilon_1(\phi) \leq e^{-nE_1} \big \}.
\end{align} 
The the optimal $E_1^{(\epsilon)}$, given by \cite{Cover}
\begin{equation}\label{eq:Stein2trade}
E_1^{(\epsilon)} (\phi^{\rm lrt}) = D(P_0\|P_1),
\end{equation}
can be achieved by setting the threshold in \eqref{eq:LRTtype} to be ${\gamma} = -D(P_0\|P_1)+\frac{c}{\sqrt{n}}$, where $c$ is a constant that depends on distributions $P_0, P_1$ and $\epsilon$. Similarly, our setting where $P_0$ is known and a training sequence of the second hypothesis is available, we can define the Stein exponent as the highest error exponent  under one hypothesis when the  error probability under the other hypothesis is at most some fixed $\epsilon$ for all probability distributions $\tilde{P}_1$, .i.e.,
\begin{equation}\label{eq:Stein1trade}
E_1^{(\epsilon)} \triangleq \sup \{ E_1\in \mathbb{R}_{+}: \exists \phi , \exists n_0 \in \ZZ_+  \  \text{s.t.} \  \forall   n>n_0, ~   \forall  \tilde{P_1} \in \Pc(\Xc), \epsilon_0(\phi| P_0, \tilde{P_1})\leq \epsilon \  \text{and} \  \epsilon_1(\phi| P_0,P_1) \leq e^{-nE_1}   \},
\end{equation}   
where $\epsilon_0(\phi| P_0, \tilde{P_1})$ is the error probability when the generating distributions are $P_0, \tilde{P}_1$, i.e., for all possible distributions $\tilde{P}_1$ the type-\RNum{1} probability of error is bounded by some $\epsilon$ and $E_1^{(\epsilon)}$  is the maximum achievable type-\RNum{2} error exponent under the actual distribution generating data $P_1$. Similarly,
\begin{equation}\label{eq:Stein0}
E_0^{(\epsilon)} \triangleq \sup \{ E_0\in \mathbb{R}_{+}: \exists \phi , \exists n_0 \in \ZZ_+  \  \text{s.t.} \  \forall   n>n_0 ,~  \epsilon_0(\phi| P_0, P_1)\leq e^{-nE_0},   \ \    \epsilon_1(\phi| P_0,\tilde{P}_1) \leq  \epsilon \ \ \forall  \tilde{P_1} \in \Pc(\Xc)  \}.
\end{equation}  
In other words, we are interested in the best possible error exponent for one of the hypothesis while the probability of error under alternative hypothesis is some $\epsilon$ universally for any distribution $P_1$.

%Using the proposed classifier \eqref{eq:clsfix}  with $E_0=\Theta(\frac{1}{n})$, we have the following theorem
\begin{theorem}\label{thm:Stein}
	Let $\epsilon \in (0,1) $, then for any probability distributions $P_0, P_1$, the Stein regime exponents are given by
	\begin{align}
	E_1^{(\epsilon)} &=D(P_0\|P_1),\label{eq:stein1}\\
	E_0^{(\epsilon)} &=D_{\frac{\alpha}{1+\alpha}}  (P_1\|P_0)\label{eq:stein2},
	\end{align}
	where $D_\rho(P_1\|P_0) = \frac{1}{\rho-1} \log \sum_{x\in \Xc} P_1^\rho P_2^{1-\rho}$ is the R\'enyi divergence of order $\rho$.
\end{theorem}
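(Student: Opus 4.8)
The plan is to establish the two Stein-regime exponents separately, each requiring a matching achievability (direct) and converse (upper) bound.

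\textbf{Equation \eqref{eq:stein1} ($E_1^{(\epsilon)} = D(P_0\|P_1)$).} For achievability, I would use a classifier that is essentially a plug-in likelihood-ratio-type test with the training type $\TX'$ in place of $P_1$ and a threshold chosen near $-D(P_0\|P_1)$, analogously to the known-distribution case where $\gamma = -D(P_0\|P_1) + c/\sqrt n$ achieves \eqref{eq:Stein2trade}. Under $P_0$, one needs the type-\RNum{1} probability bounded by $\epsilon$ \emph{uniformly} over all alternative distributions $\tilde P_1$; this is where the training sequence is crucial, since the test adapts to whatever generated the training data. The key estimate is a concentration/CLT argument: under $(P_0,\tilde P_1)$, the test statistic $D(\Tx\|P_0)-D(\Tx\|\TX')$ concentrates around $-D(P_0\|\tilde P_1)\le 0$ with Gaussian fluctuations of order $1/\sqrt n$ (accounting also for the $O(1/\sqrt{\alpha n})$ fluctuation of $\TX'$ about $\tilde P_1$), so a threshold set at $O(1/\sqrt n)$ below zero keeps $\epsilon_0\le\epsilon$ for all $\tilde P_1$ simultaneously. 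Then under the true pair $(P_0,P_1)$, Sanov's theorem gives $\epsilon_1 \doteq e^{-nD(P_0\|P_1)}$ because the dominating exponent is the projection of $P_1$ onto the region where the statistic falls below the (vanishing) threshold, which is $P_0$ itself. For the converse, I would invoke the data-processing / change-of-measure inequality: any test with $\epsilon_0\le\epsilon$ (in particular when the alternative equals $P_0$, which is a legitimate choice of $\tilde P_1$) cannot have $\epsilon_1$ decaying faster than $e^{-nD(P_0\|P_1)}$, by the standard Stein-lemma converse comparing $P_1^n$ against $P_0^n$ on the acceptance region for $H_0$.

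\textbf{Equation \eqref{eq:stein2} ($E_0^{(\epsilon)} = D_{\alpha/(1+\alpha)}(P_1\|P_0)$).} This is the more interesting half. Here the roles are swapped: we want $\epsilon_1\le\epsilon$ uniformly over all $\tilde P_1$, while maximizing the $P_0$-exponent. The asymmetry between the $n$ test samples and the $k=\alpha n$ training samples is what produces the Rényi divergence of order $\rho = \alpha/(1+\alpha)$. For achievability, I would analyze a test that rejects $H_0$ when the joint type of $(\xv,\X)$ is more consistent with "both sequences from the same distribution $Q$" than with "$\xv$ from $P_0$"; equivalently a GLRT over the common alternative. The type-\RNum{2} error under $(P_0,\tilde P_1)$: both $\xv$ and $\X$ are generated by $\tilde P_1$, and we err if the joint type is closer to $(P_0,\cdot)$ than to the diagonal — Sanov on the product space $\Xc^n\times\Xc^k$ with the weighting $(1,\alpha)$ bounds this uniformly. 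The type-\RNum{1} error under $(P_0,P_1)$ is governed by $\min_Q \, \bigl[D(Q\|P_0) + \alpha D(Q\|P_1)\bigr]$ over $Q$ in the rejection region; the unconstrained minimizer of $D(Q\|P_0)+\alpha D(Q\|P_1)$ is the tilted distribution $Q^\star \propto P_0^{1/(1+\alpha)}P_1^{\alpha/(1+\alpha)}$, and evaluating $D(Q^\star\|P_0)+\alpha D(Q^\star\|P_1)$ yields exactly $(1+\alpha)$ times $\;-\frac{1}{1+\alpha}\log\sum_x P_0^{1/(1+\alpha)}P_1^{\alpha/(1+\alpha)} = D_{\alpha/(1+\alpha)}(P_1\|P_0)$ after rearranging the Rényi-divergence definition with $\rho=\alpha/(1+\alpha)$. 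For the converse, I would take $\tilde P_1 = Q^\star$ as the worst-case alternative and show that any test meeting $\epsilon_1(\phi\mid P_0,Q^\star)\le\epsilon$ must, via a Stein-type / Neyman-Pearson argument on the product measure $P_0^n\times P_1^k$ versus $Q^{\star\,n}\times Q^{\star\,k}$, have $\epsilon_0(\phi\mid P_0,P_1)$ bounded below by $e^{-n D_{\alpha/(1+\alpha)}(P_1\|P_0)}$ up to subexponential factors.

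\textbf{Main obstacle.} The delicate point is the \emph{uniformity over all $\tilde P_1$} in both achievability claims, combined with getting the constant/$1/\sqrt n$ slack in the thresholds right so that the $\epsilon$-constraint holds for every alternative at once rather than pointwise — this requires a uniform (in $\tilde P_1$) Berry–Esseen-type control of the test statistic, using the hypotheses $P_1(x)>0$, $P_0(x)/P_1(x)\le c$, and finite $|\Xc|$ to bound variances and third moments away from degeneracy. For \eqref{eq:stein2} the extra subtlety is identifying the correct exponent of the \emph{joint} large-deviations event over the asymmetric blocklengths $(n,\alpha n)$ and verifying that the minimizing $Q$ is the Rényi-tilt $Q^\star$ and that this same $Q^\star$ is simultaneously the worst-case $\tilde P_1$ for the converse; checking that these two optimizations coincide is the crux of why the answer is precisely the Rényi divergence of order $\alpha/(1+\alpha)$.
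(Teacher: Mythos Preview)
Your overall plan is sound, and your converse for \eqref{eq:stein2} takes a genuinely different route from the paper's. The paper proves the converse combinatorially: it first restricts to type-based tests, then shows via a Hoeffding concentration lemma that any test with $\epsilon_1\le\epsilon$ uniformly in $\tilde P_1$ must output $1$ whenever $\|\Tx-\TX\|_\infty=O(n^{-1/2})$, and finally lower-bounds $\epsilon_0$ by the $P_0^n\times P_1^k$-probability of this event via the method of types, which produces $\min_Q\{D(Q\|P_0)+\alpha D(Q\|P_1)\}=D_{\alpha/(1+\alpha)}(P_1\|P_0)$. Your change-of-measure argument with $\tilde P_1=Q^\star$ reaches the same expression via the Stein strong converse for $P_0^n\times P_1^k$ versus $(Q^\star)^{n+k}$; it is shorter and avoids the reduction to type-based tests, at the cost of invoking the strong converse for a non-identically-distributed product (which is standard but should be stated). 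Your observation that the large-deviations minimizer and the worst-case alternative coincide at $Q^\star$ is exactly right, and is implicitly what the paper's type-method computation discovers from the other side.

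For \eqref{eq:stein1} your plug-in approach can be made to work but is unnecessarily delicate: the paper simply notes that Hoeffding's test $\mathbbm{1}\{D(\Tx\|P_0)>E_0\}$ with $E_0\to 0$ already yields $\epsilon_0\le\epsilon$ for \emph{every} $\tilde P_1$ (it does not touch the training data, so uniformity is free) and $E_1=D(P_0\|P_1)$, sidestepping your Berry--Esseen uniformity worry entirely. Also, your parenthetical ``when the alternative equals $P_0$'' is not the right instantiation for the converse; taking $\tilde P_1=P_1$ (or simply comparing $P_0^n$ versus $P_1^n$ and ignoring training, since even the oracle test cannot beat $D(P_0\|P_1)$) is what you want.

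There is, however, a real gap in your achievability for \eqref{eq:stein2}. Sanov's theorem only tells you the type-\RNum{2} \emph{exponent} is zero; it does not give $\epsilon_1\le\epsilon$ for every $\tilde P_1$. To get a fixed nonvanishing $\epsilon$-guarantee uniformly, you need the threshold to vanish like $\Theta(n^{-1})$ and a second-order analysis showing that under $\tilde P_1^n\times\tilde P_1^k$ the Gutman statistic $D_\alpha^{\rm GJS}(\Tx\|\TX)$ is asymptotically $\frac{1}{2n}\chi^2_{|\Xc|-1}$ \emph{uniformly} in $\tilde P_1$---which is precisely the content the paper imports from \cite{Tan} with the threshold $\frac{1}{2n}G^{-1}_{|\Xc|-1}(\epsilon)$. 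Your sentence ``Sanov on the product space bounds this uniformly'' does not supply this, and without it the direct part of \eqref{eq:stein2} is incomplete.
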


\begin{proof}
The proof can be found in Appendix \ref{sec:proofth41}.
\end{proof}

Observe that $E_1^{(\epsilon)}$ is equal to the Stein exponent for the likelihood ratio test where both distributions are known. However, since R\'enyi divergence is a non-decreasing function of its order and $\frac{\alpha}{1+\alpha} <1$ hence $E_0^{(\epsilon)}$ is strictly smaller than the Stein regime exponent achieved by the likelihood ratio test.

%%%%%%%%%%%%%%%%%%%%%%%%%%%%%%%%%%%%%%

\section{Sequential Classification with a Known Hypothesis}\label{sec:seqclas}

In this section, we study sequential classification with a known hypothesis. When both hypotheses are known, the sequential probability ratio test can achieve higher exponents compared to the likelihood ratio test. When only one of the hypotheses is known, Hoeffding's test can achieve the best error exponent achieved by the likelihood ratio test in the fixed sample size scenario. However, there is no counterpart  to Hoeffding's test in the sequential case when only one of the hypotheses is known, i.e., no classifier can achieve the same error exponent performance as the sequential probability ratio test \cite{Wald}. We propose a classifier inspired by the sequential probability ratio test and show that having training samples from the second hypothesis can improve the error exponent tradeoff compared to the fixed sample-sized classification.

In the sequential setting, the number of samples is a random variable called the stopping time $\tau$, taking values in $\ZZ_{+}$. A sequential classifier is a pair $\Phi=(\phi: \Xc^\tau\times \Xc^{\alpha\tau} \rightarrow \{0,1\},\tau)$, where for every $n\geq 0$ the event $\{\tau\leq n\} \in \mathscr{F}_n$, and  $\mathscr{F}_n$ is the sigma-algebra induced by random variables $\xv^n , \X^{\alpha n} $, i.e., $ \sigma(\xv^n , \X^{\alpha n})$. We also assume that at every stage, additional training and test samples are available to the classifier, such that $\alpha=\frac{k}{n}$ remains constant. Moreover, $\phi$ is a $\mathscr{F}_{\tau}$ measurable decision rule, i.e., the decision rule determined by casually observing the sequence $\xv^n, \X^{\alpha n}$. In other words, at each time instant, the test attempts to decide in favor of one of the hypotheses or chooses to take new samples from the source $P_1$ as well as new samples from the unknown data source. 

The two possible pairwise error probabilities that measure the performance of the test are defined as
\begin{equation}\label{eq:errprob}
\epsilon_0(\Phi)=\PP_0\big [\phi(\xv^\tau , \X^{\alpha \tau} )   \neq 0  \big]   ~ \text{,} ~ 	\epsilon_1(\Phi)=\PP_1\big[\phi(\xv^\tau , \X^{\alpha \tau})\neq 1  \big],
\end{equation}
where the probabilities are over $P_0, P_1$, respectively. Similarly to the sequential hypothesis testing case we define  the optimal error exponent as
\begin{align}\label{eq:tradeseq1}
E^*_1(E_0) \triangleq \sup \Big \{E_1\in \mathbb{R}^{+}&: \exists \Phi ,\ \exists \ n  \in \ZZ_{+}   \text{ s.t.} \     \mathbb{E}_{P_0} [\tau] \leq n, \mathbb{E}_{P_1} [\tau]  \leq n, \   \epsilon_0(\Phi) \leq 2^{- n E_0}    ~ \text{and} ~  \epsilon_1(\Phi) \leq 2^{- n E_1} \Big \} .
\end{align}

% Moreover, type-\RNum{1} and type-\RNum{2} error exponents are defined as
%  \begin{align}
%  E_0&= \liminf_{\E[\tau|H_1]\rightarrow \infty} -\frac{1}{\E[\tau|H_1]}\log P_0\big (\phi(X^{\tau})\neq 1 |H_1 \big) \label{eq:errexp1} \\
%  E_0&= \liminf_{\E[\tau|H_2]\rightarrow \infty} -\frac{1}{\E[\tau|H_2]}\log P_1\big (\phi(X^{\tau})\neq 2 |H_2 \big). \label{eq:errexp2}
%  \end{align}
% 

When both hypotheses are known, the sequential probability ratio test (SPRT) $\Phi=(\phi,\tau)$  proposed by Wald  \cite{Wald1} achieves the optimal exponent tradeoff. The sequential probability ratio test is given by
\begin{align}
\tau=\inf \big\{t\geq1:S_t& \geq  \gamma_0 \  \text{or} \  S_t \leq -\gamma_1\big\},
\label{eq:deftau}
\end{align} 
where 
\begin{align}\label{eq:LLR}
S_n=\sum_{i=1}^t \log \frac{P_0(x_i)}{P_1(x_i)},
\end{align} 
is the the accumulated log-likelihood ratio (LLR) of the observed sequence $\xv$ and the thresholds $\gamma_0, \gamma_1$ are two positive real numbers. Moreover, the test makes a decision according to the rule
\begin{align}
\phi(\Tx)= 
\begin{cases}
0&  \text{if } S_\tau \geq \gamma_0  \\
1 &  \text{if } S_\tau \leq  - \gamma_1.\\
\end{cases}
\label{eq:defsprt}
\end{align}    
% The type of a sequence $\bx= (x_1,\ldots,x_n)$ is $\Th(a)=\frac{N(a|\bx)}{n}$, where $N(a|\bx)$ is the number of occurrences of the symbol $a\in\Xc$ in the string. The  log likelihood ratio $S_n$ can also be expressed as a function of the type of the observation $\Th$ as \cite{Cover}  
%\begin{equation}
%S_n=n\Big(D(\Th\|P_1)-D(\Th\|P_0) \Big).  
%\end{equation}
It is shown in  \cite{Wald1} that the above test attains the optimal error exponent tradeoff, i.e., as thresholds $\gamma_0, \gamma_1$  approach infinity, the test achieves the best error exponent trade-off in \eqref{eq:tradeseq1}. It is known that the error probabilities of sequential probability ratio test as a function of $\gamma_0$ and  $\gamma_1$ are \cite{Wood} 
\begin{align}
\epsilon_0 = c_0 \cdot e^{-\gamma_1 }  \quad , \quad  \epsilon_1 =c_1 \cdot e^{-\gamma_0 } ,
\end{align}
as $\gamma_0, \gamma_1 \rightarrow \infty$ where $c_0, c_1$ are positive constants. Moreover, it can also be shown that
\begin{align}
\mathbb{E}_{P_0} [\tau] &= \frac{  \gamma_0}{D(P_0\|P_1)}(1+o(1))     \label{eq:SPRT1}, \\ 
\mathbb{E}_{P_1} [\tau] &= \frac{  \gamma_1}{D(P_1\|P_0)}(1+o(1))   \label{eq:SPRT2}.
\end{align}
Therefore, according to definition \eqref{eq:tradeseq1}, the optimal error exponent tradeoff is given by,
\begin{equation}
E_0 = D(P_1\|P_0)  , ~ E_1= D(P_0\|P_1),
\end{equation}
where thresholds $\gamma_0, \gamma_1$  are chosen as
\begin{equation}\label{eq:thresh}
\gamma_0=n\big (D(P_0\|P_1)+o(1)\big),~ \gamma_1= n\big(D(P_1\|P_0)+o(1)\big). 
\end{equation}
Hence, the sequential probability ratio test achieves the Stein regime error exponents achievable by the standard likelihood ratio test \cite{Neyman} simultaneously.

%Similarly to the fixed sample-sized scenario, we assume the samples are generated in an i.i.d. fashion on the probability simplex $\Pc(\Xc)$. We also take the ratio of training samples to test samples to be constant, i.e., for every $n$ test sample, the test also receives $\alpha n$ training samples. 
For every fixed $n$. we propose the following sequential classifier
\begin{align}
\tau=\inf \big\{t\geq n:S_t(\Tx,\TX)& \geq  \gzt\  \text{or} \  S_t(\Tx,\TX) \leq -\got \big\} ,
\label{eq:deftau}
\end{align} 
where 
\begin{align}\label{eq:testseq}
S_t(\Tx,\TX) =\sum_{i=1}^t \log \frac{P_0(x_i)}{\TX'(x_i)},
\end{align} 
is the accumulated log-likelihood ratio (LLR) using the plugin perturbed type of the training sequence evaluated at the observed sequence $\xv$, and 
\begin{equation}
\TX'=(1-\delta_n) \TX +\frac{\delta_n}{|\Xc|}
\end{equation}
where $\delta_n= o(n^{-1})$
and $\gzt, \got$ are chosen as
\begin{equation}\label{eq:thresh}
\gzt=nD(P_0\|\TX')+ (4 |\Xc|+4) \log(t+1),~ \got= nD(\TX\|P_0)+( 4|\Xc|+4) \log(t+1), 
\end{equation}
and the test makes a decision according to the rule
\begin{align}\label{eq:seqclas}
\phi(\Tx,\TX)= 
\begin{cases}
0&  \text{if } S_\tau(\Tx,\TX) \geq \gzt \\
1 &  \text{if } S_\tau(\Tx,\TX) \leq  -\got,\\
\end{cases}
\end{align}    
where $\Tx, \TX$ are types of the test and training samples at the stopping time $\tau$. As can be seen from the above expressions, the proposed classifier is the plugin sequential probability ratio test, replacing $P_1$ by the  perturbed training type $\TX'$.

The next theorem gives a lower bound on the achievable error exponent tradeoff of the proposed sequential classifier. 
\begin{theorem}\label{thm:seq}
 For every $P_0, P_1$, there exists a training to observation ratio $\alpha^*_{\rm seq}$  such that for any $\alpha \geq \alpha^*_{\rm seq}$, the sequential classifier $\Phi^{\rm seq}=(\phi(\Tx,\TX'),\tau)$ defined in \eqref{eq:deftau}, \eqref{eq:seqclas}  achieves
 \begin{equation}
 E_0(\Phi^{\rm seq})E_1(\Phi^{\rm seq})\geq  D(P_0\|P_1)D_{\frac{\alpha}{1+\alpha}}(P_1\|P_0).
 \end{equation}
Furthermore, the average stopping times of the classifier satisfy
\begin{equation}
\mathbb{E}_{P_0}[\tau]= \mathbb{E}_{P_1}[\tau]= n(1+o(1)).
\end{equation}
\end{theorem}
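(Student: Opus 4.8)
The plan is to analyze the proposed sequential classifier as a plug-in version of Wald's SPRT in which $P_1$ is replaced by the perturbed training type $\TX'$, and to control the estimation error incurred by this substitution through the choice of the logarithmic correction terms $(4|\Xc|+4)\log(t+1)$ in the thresholds \eqref{eq:thresh}. First I would establish that the stopping time $\tau$ is finite almost surely under both hypotheses, and that $\mathbb{E}_{P_0}[\tau]=\mathbb{E}_{P_1}[\tau]=n(1+o(1))$. For this, note that at time $t$ the random walk $S_t(\Tx,\TX)=\sum_{i=1}^t\log\frac{P_0(x_i)}{\TX'(x_i)}$ has, conditioned on the training type, drift $D(P_0\|\TX')$ under $P_0$ and $-D(P_1\|\TX')$ under $P_1$. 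Since $k=\alpha t$ grows linearly with $t$, the training type $\TX$ concentrates around $P_1$: by Sanov/method of types, $\Pr[\,\|\TX-P_1\|\ge\epsilon\,]\le e^{-c\alpha t}$, so $D(P_0\|\TX')\to D(P_0\|P_1)$ and $D(\TX\|P_0)\to D(P_1\|P_0)$ with overwhelming probability, and the boundary terms grow only logarithmically. A standard renewal/Wald-identity argument (as in the derivation of \eqref{eq:SPRT1}--\eqref{eq:SPRT2}), applied on the high-probability event where the training type is accurate, then gives $\mathbb{E}_{P_0}[\tau]=\frac{\gamma_0}{D(P_0\|P_1)}(1+o(1))=n(1+o(1))$ using $\gamma_0\approx nD(P_0\|\TX')$, and symmetrically under $P_1$; the low-probability complement contributes negligibly because $\frac{P_0(x)}{P_1(x)}\le c$ bounds the increments.

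Next I would bound the two error probabilities. For $\epsilon_1(\Phi^{\rm seq})$, the event $\{\phi=0\}$ forces $S_\tau\ge\gamma_0(\tau)=nD(P_0\|\TX')+(4|\Xc|+4)\log(\tau+1)$. Conditioning on the training sequence $\X$ and using a change of measure from $P_1$ to $P_0$ along the observed path, the probability that the $P_0/\TX'$-log-likelihood walk driven by $P_1$-data ever exceeds a level $\gamma$ is at most $e^{-\gamma}$ times a correction; averaging the bound $\mathbb{E}_{P_1}[e^{-\gamma_0(\tau)}]$ over the training randomness and invoking the concentration of $\TX$ around $P_1$, we get $\epsilon_1\le e^{-nD(P_0\|P_1)(1+o(1))}$, hence $E_1(\Phi^{\rm seq})\ge D(P_0\|P_1)$. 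For $\epsilon_0(\Phi^{\rm seq})$ the analysis is the more delicate one: the event $\{\phi=1\}$ requires $S_\tau\le-\got=-nD(\TX\|P_0)-(4|\Xc|+4)\log(\tau+1)$, and here both the test sequence (generated by $P_0$) and the training sequence (generated by $P_1$) must conspire. I would union-bound over the joint type $(\Tx,\TX)$ of test and training at the stopping time; the exponent of the joint probability is $nD(\Tx\|P_0)+\alpha n D(\TX\|P_1)$, which must be minimized subject to the stopping constraint $\langle \Tx,\log\frac{P_0}{\TX'}\rangle\le -D(\TX\|P_0)$. The polynomial number of types is absorbed by the $(4|\Xc|+4)\log(t+1)$ slack. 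Carrying out this optimization — first over $\Tx$ for fixed $\TX$, which by a tilting argument yields an exponent governed by a Rényi-type quantity, then over $\TX$ — I expect to recover exactly $\min_{\TX}\big[\,(\text{projection exponent for }\Tx)+\alpha D(\TX\|P_1)\,\big]=D_{\frac{\alpha}{1+\alpha}}(P_1\|P_0)$, matching the Stein-regime computation of Theorem \ref{thm:Stein} (whose proof in Appendix \ref{sec:proofth41} should contain precisely this variational identity). This gives $E_0(\Phi^{\rm seq})\ge D_{\frac{\alpha}{1+\alpha}}(P_1\|P_0)$, and multiplying the two bounds yields the claimed product inequality.

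The main obstacle is the $\epsilon_0$ bound: unlike the classical SPRT, the lower threshold $\got$ is itself a random function of the training type, so the change-of-measure argument cannot simply be "conditioned away." One must handle the correlation between the level $nD(\TX\|P_0)$ and the event that the $P_1$-distributed training draws land on an atypical type, which is exactly what forces the Rényi order $\frac{\alpha}{1+\alpha}$ rather than $D(P_1\|P_0)$ — the adversarial training fluctuation is "paid for" at rate $\alpha$ and trades off against the test-sequence deviation. Making the type-counting union bound rigorous while keeping the logarithmic slack sufficient (this is why the constant is $4|\Xc|+4$ and not merely $|\Xc|-1$: one needs it for both the $\Tx$-types, the $\TX$-types, and a margin for the overshoot at the boundary) is the delicate bookkeeping step. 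A secondary subtlety is the overshoot/excess-over-the-boundary term in the sequential walk: since increments are bounded by $\log c$, the overshoot is $O(1)$ and affects neither the exponents nor the $n(1+o(1))$ stopping-time estimate, but this needs to be stated carefully so that the $o(1)$ terms are genuinely uniform.
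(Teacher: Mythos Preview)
Your proposal has the roles of the two error probabilities reversed. In the paper the type-\RNum{1} bound $E_0\ge D_{\frac{\alpha}{1+\alpha}}(P_1\|P_0)$ is the \emph{easier} step: a union bound over $t\ge n$, method of types, and the dual form of the inner optimization with the crude choice $\lambda=1$ collapse the joint optimization over $(Q,Q_1)$ to $\min_{Q_1}\big(D(Q_1\|P_0)+\tfrac{\alpha t}{n}D(Q_1\|P_1)\big)$, which is exactly the R\'enyi variational formula; no condition on $\alpha$ is needed. The delicate side, and the sole source of the threshold $\alpha^*_{\rm seq}$ in the theorem statement, is the type-\RNum{2} bound $E_1\ge D(P_0\|P_1)$.

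Your change-of-measure argument for $\epsilon_1$ breaks precisely here. Conditioned on $\TX$, the walk $S_t=\sum_i\log\frac{P_0(x_i)}{\TX'(x_i)}$ is \emph{not} the log-likelihood ratio between $P_0$ and $P_1$, so the Wald bound $\PP_1[S_\tau\ge\gamma]\le e^{-\gamma}$ is unavailable: a change of measure from $P_1$ to $P_0$ introduces $\prod_i\frac{P_0(x_i)}{P_1(x_i)}$, not the plug-in ratio. When the training type $Q_1$ deviates from $P_1$, both the threshold $nD(P_0\|Q_1')$ and the decision region shift, and an adversarial $Q_1$ can drive the conditional type-\RNum{2} exponent strictly below $D(P_0\|P_1)$ unless the penalty $\alpha D(Q_1\|P_1)$ is large enough. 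The paper handles this by defining $E_{10}(r,t)=\min_{D(Q_1\|P_1)\le r}E_{10}(Q_1,t)$, Taylor-expanding separately in $r$ and in $t-n$, and showing that the cross terms are dominated once $\alpha$ exceeds an explicit finite constant built from second derivatives---this mirrors the fixed-sample analysis of Theorem~\ref{thm:E1} and is where $\alpha^*_{\rm seq}$ enters. Your proposal never produces such a condition, so as written it would claim the result for all $\alpha>0$, which is false.

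For the stopping time, your renewal heuristic has the right intuition but glosses over the fact that both $S_t$ and the thresholds $\gamma_{i,n}(t)$ are refreshed with new training data at every step, so one cannot condition once on a good training type. The paper instead proves exponential tails $\PP_i[\tau_i>t]\le c(t+1)^{d|\Xc|}e^{\xi_i n - tE_i}$ via method of types (Lemma~\ref{lem:finiteness}), shows $|\gamma_{i,n}(t+1)-\gamma_{i,n}(t)|\to 0$ a.s.\ (Lemma~\ref{lem:vanishing}), deduces $\tau/n\to 1$ in probability, and upgrades to $L^1$ via uniform integrability from the tail bound.
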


\begin{proof}
The proof can be found in Appendix \ref{sec:proofth51}.
\end{proof}

This theorem shows that similar to the hypothesis testing problem with known distributions, the proposed sequential classifier can achieve the Stein regime exponents simultaneously when only one of the distributions is known.

\begin{example}\label{ex:ex1}
In Figures \ref{fig:e0seq}, we present a numerical example to illustrate the performance of the proposed sequential classifier in $ \Phi^{\rm seq}$ with $\gzt=nD(P_0\|\TX'), \got= nD(\TX\|P_0)$. Consider two binary distributions $P_0=Bern(0.45)$, $Bern(0.55)$ and set $\alpha= 10$. Each point in the figures is obtained by estimating the average error probability as follows. For each length of the test sequence $n$, we estimate the type-\RNum{1} and type-\RNum{2} error probabilities of the sequential classifier in \eqref{eq:testseq} by generating a sample from the test source and $\alpha$ samples from $P_1$ until the test stops and makes a decision. We have plotted the $-\frac{1}{n} \log \epsilon_i$ for $i\in \{0,1 \}$. We can notice that the type-\RNum{1} error exponent converges to $D_{\frac{\alpha}{1+\alpha}}(P_1\|P_0)$ while the type-\RNum{2} error exponent tends to  $D(P_0\|P_1)$ as Theorem \ref{thm:seq} suggests.

\end{example}

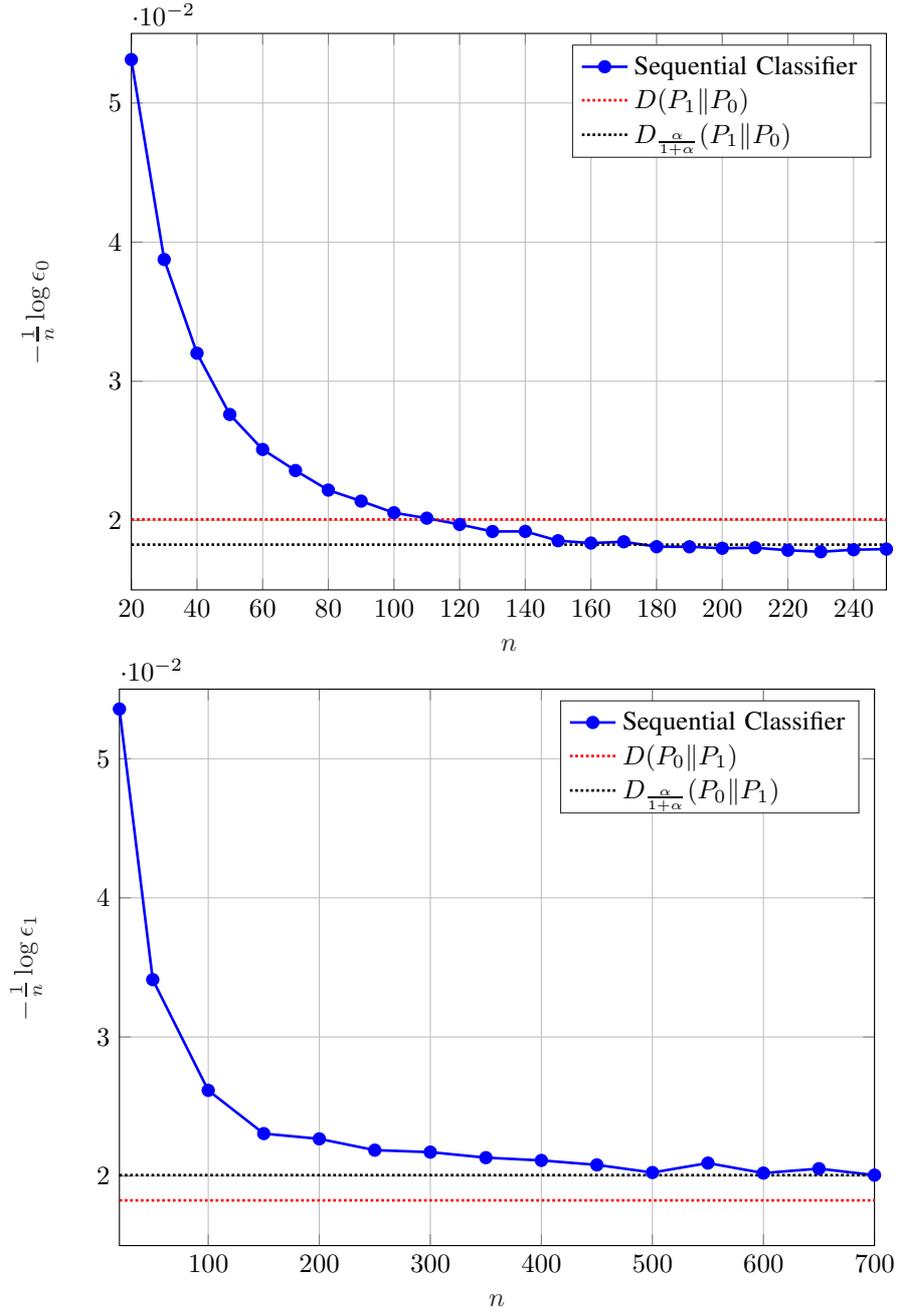
\begin{figure}[htpb]
		 \centering
		%\hspace*{-2.3cm}  
		% This file was created by matlab2tikz.
%
%The latest updates can be retrieved from
%  http://www.mathworks.com/matlabcentral/fileexchange/22022-matlab2tikz-matlab2tikz
%where you can also make suggestions and rate matlab2tikz.
%
\definecolor{mycolor1}{rgb}{0.00000,0.44700,0.74100}%
\definecolor{mycolor2}{rgb}{0.85000,0.32500,0.09800}%
\definecolor{mycolor3}{rgb}{0.92900,0.69400,0.12500}%
\begin{tikzpicture}

\begin{axis}[%
width=3.9096in,
height=2.905in,
at={(1.011in,0.651in)},
scale only axis,
xmin=20,
xmax=250,
xlabel style={font=\color{white!15!black}},
xlabel={$n$},
ymin=0.015,
ymax=0.055,
ylabel style={font=\color{white!15!black}},
ylabel={$-\frac{1}{n} \log{\epsilon_0}$},
axis background/.style={fill=white},
xmajorgrids,
ymajorgrids,
legend style={legend cell align=left, align=left, draw=white!15!black}
]
\addplot [color=blue, line width=1.0pt, mark=*,mark options={solid, blue}]
  table[row sep=crcr]{%
20	0.0531236621026118\\
30	0.0387503671506114\\
40	0.0320153474162786\\
50	0.02760889672662\\
60	0.0250946446357208\\
70	0.0235801027067453\\
80	0.0221739943887466\\
90	0.0213819665982544\\
100	0.0205442377870455\\
110	0.0201547200424036\\
120	0.0197038374726011\\
130	0.0191981953317144\\
140	0.019205270643356\\
150	0.0185338893600136\\
160	0.0183670126177326\\
170	0.0184592010713635\\
180	0.0181046118909735\\
190	0.0180994697307693\\
200	0.0179921477559593\\
210	0.0180326660117816\\
220	0.0178506211056282\\
230	0.0177410506828396\\
240	0.0178866479206081\\
250	0.0179318102130554\\
};
\addlegendentry{Sequential Classifier}

\addplot [color=red, densely dotted,line width=1.0pt]
  table[row sep=crcr]{%
20	0.0200670695462151\\
30	0.0200670695462151\\
40	0.0200670695462151\\
50	0.0200670695462151\\
60	0.0200670695462151\\
70	0.0200670695462151\\
80	0.0200670695462151\\
90	0.0200670695462151\\
100	0.0200670695462151\\
110	0.0200670695462151\\
120	0.0200670695462151\\
130	0.0200670695462151\\
140	0.0200670695462151\\
150	0.0200670695462151\\
160	0.0200670695462151\\
170	0.0200670695462151\\
180	0.0200670695462151\\
190	0.0200670695462151\\
200	0.0200670695462151\\
210	0.0200670695462151\\
220	0.0200670695462151\\
230	0.0200670695462151\\
240	0.0200670695462151\\
250	0.0200670695462151\\
};
\addlegendentry{$D(P_1\|P_0)$}

\addplot [color=black, densely dotted, line width=1.0pt]
  table[row sep=crcr]{%
20	0.0182528707194597\\
30	0.0182528707194597\\
40	0.0182528707194597\\
50	0.0182528707194597\\
60	0.0182528707194597\\
70	0.0182528707194597\\
80	0.0182528707194597\\
90	0.0182528707194597\\
100	0.0182528707194597\\
110	0.0182528707194597\\
120	0.0182528707194597\\
130	0.0182528707194597\\
140	0.0182528707194597\\
150	0.0182528707194597\\
160	0.0182528707194597\\
170	0.0182528707194597\\
180	0.0182528707194597\\
190	0.0182528707194597\\
200	0.0182528707194597\\
210	0.0182528707194597\\
220	0.0182528707194597\\
230	0.0182528707194597\\
240	0.0182528707194597\\
250	0.0182528707194597\\
};
\addlegendentry{$D_{\frac{\alpha}{1+\alpha}}(P_1\|P_0)$}

\end{axis}

\end{tikzpicture}% 
		% This file was created by matlab2tikz.
%
%The latest updates can be retrieved from
%  http://www.mathworks.com/matlabcentral/fileexchange/22022-matlab2tikz-matlab2tikz
%where you can also make suggestions and rate matlab2tikz.
%
\definecolor{mycolor1}{rgb}{0.00000,0.44700,0.74100}%
\definecolor{mycolor2}{rgb}{0.85000,0.32500,0.09800}%
\definecolor{mycolor3}{rgb}{0.92900,0.69400,0.12500}%
\definecolor{mycolor4}{rgb}{0.49400,0.18400,0.55600}%
\definecolor{mycolor5}{rgb}{0.46600,0.67400,0.18800}%
\begin{tikzpicture}

\begin{axis}[%
width=3.9096in,
height=2.905in,
at={(1.011in,0.651in)},
scale only axis,
xmin=20,
xmax=700,
xlabel style={font=\color{white!15!black}},
xlabel={$n$},
ymin=0.015,
ymax=0.055,
ylabel style={font=\color{white!15!black}},
ylabel={$-\frac{1}{n} \log{\epsilon_1}$},
axis background/.style={fill=white},
xmajorgrids,
ymajorgrids,
legend style={legend cell align=left, align=left, draw=white!15!black}
]
\addplot [color=blue, line width=1.0pt, mark=*, mark options={solid, blue}]
  table[row sep=crcr]{%
20	0.0535741810639929\\
50	0.0341189762562989\\
100	0.026159269122264\\
150	0.0230517848876703\\
200	0.0226735586116977\\
250	0.0218622131436958\\
300	0.021719043970687\\
350	0.0213225176046992\\
400	0.0211238363573561\\
450	0.0208025850260239\\
500	0.0202532622077007\\
550	0.0209325917544913\\
600	0.0202151860073234\\
650	0.0205315491211054\\
700	0.020067\\
};
\addlegendentry{Sequential Classifier}

\addplot [color=red, densely dotted, line width=1.0pt]
  table[row sep=crcr]{%
20	0.0182528707194597\\
50	0.0182528707194597\\
100	0.0182528707194597\\
150	0.0182528707194597\\
200	0.0182528707194597\\
250	0.0182528707194597\\
300	0.0182528707194597\\
350	0.0182528707194597\\
400	0.0182528707194597\\
450	0.0182528707194597\\
500	0.0182528707194597\\
550	0.0182528707194597\\
600	0.0182528707194597\\
650	0.0182528707194597\\
700	0.0182528707194597\\
};
\addlegendentry{$D(P_0\|P_1)$}

\addplot [color=black, densely dotted,line width=1.0pt]
  table[row sep=crcr]{%
20	0.0200670695462151\\
50	0.0200670695462151\\
100	0.0200670695462151\\
150	0.0200670695462151\\
200	0.0200670695462151\\
250	0.0200670695462151\\
300	0.0200670695462151\\
350	0.0200670695462151\\
400	0.0200670695462151\\
450	0.0200670695462151\\
500	0.0200670695462151\\
550	0.0200670695462151\\
600	0.0200670695462151\\
650	0.0200670695462151\\
700	0.0200670695462151\\
};
\addlegendentry{$D_{\frac{\alpha}{1+\alpha}}(P_0\|P_1)$}

\end{axis}
\end{tikzpicture}% 
		\caption{Type-\RNum{1} and type-\RNum{2}  error exponents for the proposed sequential classifier. } 
		\label{fig:e0seq}
\end{figure}

\begin{theorem}\label{thm:converse}
 For every sequential classifier $\Phi^{\rm seq}=(\phi^{\rm seq}(\Tx,\TX), \tau)$, such that  $\mathbb{E}_{P_0}[\tau] \leq n,  \mathbb{E}_{P_1}[\tau]\leq n$, we have   
 \begin{equation}
\max_{\Phi^{\rm seq}} \min_{P_1\in \Pc(\Xc)}  E_0(\Phi^{\rm seq})E_1(\Phi^{\rm seq})\leq  D(P_0\|P_1)D_{\frac{\alpha}{1+\alpha}}(P_1\|P_0).
 \end{equation}
\end{theorem}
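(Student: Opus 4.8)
The plan is to establish the converse by a change-of-measure argument combined with Wald's identity, reducing the sequential problem to a collection of fixed-length tests. Since the classifier must guarantee $\epsilon_1(\Phi^{\rm seq}\mid P_0,\tilde P_1)\leq\epsilon$ (implicitly, through the exponent definition) for \emph{every} $\tilde P_1\in\Pc(\Xc)$, and in particular the type-\RNum{2} exponent floor must be universal, the worst case is when the alternative distribution is chosen adversarially. First I would fix an arbitrary sequential classifier $\Phi^{\rm seq}=(\phi^{\rm seq},\tau)$ with $\EE_{P_0}[\tau]\leq n$ and $\EE_{P_1}[\tau]\leq n$, and consider, for the purpose of bounding $E_0$, the adversarial choice of the alternative that makes the type-\RNum{2} error under $P_1$ small while forcing the type-\RNum{1} error under $P_0$ to be large; the tilted distribution $Q_{\rho}$ of order $\rho=\frac{\alpha}{1+\alpha}$ between $P_1$ and $P_0$ is the natural candidate here, exactly as it appears in the Stein-regime statement of Theorem~\ref{thm:Stein}.

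The key steps, in order, are: (i) reduce the sequential test to a truncated one by a standard argument — since $\EE_{P_0}[\tau]\leq n$ and $\EE_{P_1}[\tau]\leq n$, by Markov's inequality the probability that $\tau$ exceeds $n/\delta$ is at most $\delta$ under each of $P_0,P_1$, so up to a vanishing additive loss in the error probabilities we may assume $\tau\leq n'$ with $n'=n(1+o(1))$; (ii) on the truncated test, invoke a change of measure between $P_0$ and the adversarial alternative evaluated at the stopping time, using the optional stopping / Wald identity for the accumulated log-likelihood ratio to relate $\log\frac{1}{\epsilon_0}$ to $\EE[\tau]\cdot D(\cdot\|\cdot)$ in the appropriate direction; (iii) combine the two bounds — one controlling $E_1$ via $\EE_{P_1}[\tau]\leq n$ and a divergence in the $P_0\leftrightarrow P_1$ direction, the other controlling $E_0$ via $\EE[\tau]$ under the adversarial measure and a Rényi divergence of order $\frac{\alpha}{1+\alpha}$ that arises from optimizing the adversary's choice of $\tilde P_1$ subject to the $\alpha$-fold training budget; (iv) take the product $E_0 E_1$ and optimize over the free thresholds / over the adversary to obtain the claimed bound $D(P_0\|P_1)D_{\frac{\alpha}{1+\alpha}}(P_1\|P_0)$, with the $\min_{P_1}$ and $\max_{\Phi^{\rm seq}}$ accounted for by the order of the quantifiers.

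I expect the main obstacle to be step (iii): correctly tracking how the training sequence of length $\alpha\tau$ enters the change-of-measure. Unlike the pure sequential hypothesis test, here the decision depends on $\TX$, so the adversary's distribution $\tilde P_1$ affects both the test statistic (through the plugin type) and the sampling of the $\alpha\tau$ training symbols; the effective "cost" of distinguishing $P_0$ from an adversarial $\tilde P_1$ must be charged against the combined $(1+\alpha)\tau$ observed symbols, and it is precisely this $(1+\alpha)$-to-$\alpha$ ratio that produces the Rényi order $\rho=\frac{\alpha}{1+\alpha}$. Making this rigorous requires a careful joint large-deviations / Chernoff-information computation: the adversary picks $\tilde P_1$ to balance $D(\tilde P_1\|P_1)$ (cost of the training sequence looking typical under $P_1$ — governing type-\RNum{2}) against $D(\tilde P_1\|P_0)$ (cost of the test sequence looking typical under $P_0$ — governing type-\RNum{1}), weighted $\alpha$ versus $1$, and the optimal tradeoff curve is exactly the Rényi divergence identity. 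I would handle the truncation (step (i)) and Wald's identity (step (ii)) as routine, and devote most of the argument to making (iii)–(iv) precise, likely mirroring the achievability computation in the proof of Theorem~\ref{thm:seq} run in reverse.
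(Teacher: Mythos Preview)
Your plan handles the $E_1\leq D(P_0\|P_1)$ direction correctly---the paper also simply cites the standard sequential converse here. The difficulty is entirely in the $E_0\leq D_{\frac{\alpha}{1+\alpha}}(P_1\|P_0)$ bound, and this is where your change-of-measure/Wald route runs into a real obstacle.

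If you carry out the data-processing-plus-Wald argument as you describe in steps~(i)--(ii), the relevant identity is
\[
D\big(P_0^{\tau}\times P_1^{\alpha\tau}\,\big\|\,\tilde P_1^{\tau}\times\tilde P_1^{\alpha\tau}\big)
=\mathbb{E}_{P_0}[\tau]\,\big(D(P_0\|\tilde P_1)+\alpha D(P_1\|\tilde P_1)\big),
\]
so optimizing over the adversary gives $E_0\leq \min_{Q}\big(D(P_0\|Q)+\alpha D(P_1\|Q)\big)$, which is a generalized Jensen--Shannon quantity, \emph{not} the R\'enyi divergence $D_{\frac{\alpha}{1+\alpha}}(P_1\|P_0)=\min_{Q}\big(D(Q\|P_0)+\alpha D(Q\|P_1)\big)$. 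Notice the arguments of the divergences are reversed. In your informal description of step~(iii) you correctly write down $D(\tilde P_1\|P_0)$ and $D(\tilde P_1\|P_1)$---the right direction for the R\'enyi identity---but that direction does \emph{not} come out of a change-of-measure from the null law to the adversarial alternative; it comes out of a method-of-types lower bound on $\epsilon_0$ under $P_0$. These are different tools and yield different expressions; you are implicitly switching between them.

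The paper's proof avoids Wald altogether for the $E_0$ bound. It first argues (Lemma~\ref{lem:theconvseq}, adapting Lemma~\ref{lem:theconv} from the fixed-length Stein proof) that any type-based test which keeps $\epsilon_1<\epsilon$ universally over $\tilde P_1$ must output $1$ whenever $\|\Tx-\TX\|_\infty$ is small, because under the alternative both types concentrate at the same point. Then, under $P_0$, the probability that test and training types are simultaneously close to a common $Q$ is $\asymp e^{-t(D(Q\|P_0)+\alpha D(Q\|P_1))}$ by the method of types, and minimizing over $Q$ gives exactly the R\'enyi bound. The random stopping time is handled not by Markov-truncation but by restricting to a typical stopping set $\Nc_1^\epsilon$ and using $\PP_0[\tau\leq n]>0$ (guaranteed by $\mathbb{E}_{P_0}[\tau]\leq n$). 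If you want to salvage your outline, replace step~(ii) for the $E_0$ direction by this type-based argument; the change-of-measure machinery is the wrong instrument here.
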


This suggests that the proposed sequential classifier is universal in the sense that it achieves the highest error exponent tradeoff over all possible classifiers and distributions $P_1$.

\begin{proof}
The proof can be found in Appendix \ref{sec:proofth52}.
\end{proof}

\newpage

%%%%%%%%%%%%%%%%%%%%%%%%%%%%%%%%%%%%%%
%%%%%%%%%%%%%%%%%%%%%%%%%%%%%%%%%%%%%%
\appendices
%%%%%%%%%%%%%%%%%%%%%%%%%%%%%%%%%%%%%%
%%%%%%%%%%%%%%%%%%%%%%%%%%%%%%%%%%%%%%
\section{Proof of Theorem \ref{thm:E0}}
\label{sec:proofth1}
First, we prove that for every realization of the training sequence $\TX$, the type-\RNum{1} error exponent is equal to $E_0$. Note that the solution to the optimization problem in \eqref{eq:threshcls} is a convex problem, and by the Karush-Kuhn-Tucker (KKT) conditions \cite{Boyd}, the minimizer is unique and  is the tilted distribution of $P_0$ and $\TX'$, i.e.,
\begin{equation}\label{eq:tiltedmu}
Q_{\mu^*}(x)= \frac{ P_{0}^{\frac{\mu^*}{1+\mu^*}}(x) {\TX'}^{\frac{1}{1+\mu^*}}(x) } {\sum_{a \in \Xc }  P_{0}^{\frac{\mu^*}{1+\mu^*}}(a) {\TX'}^{\frac{1}{1+\mu^*}}(a) },
\end{equation}
 and $\mu^*$ is the solution to
\begin{equation}\label{eq:KKTgamma} 
D(Q_{\mu^*}\| P_0)= E_0.
\end{equation}	
Therefore, the classifier threshold in \eqref{eq:threshcls} can be written as 
\begin{equation}\label{eq:threshsol} 
\gamma(E_0,\TX')=\beta D(Q_{\mu^*}\| \TX')-D(Q_{\mu^*}\| P_0).
\end{equation}
Now by Sanov's theorem, we can find the type-\RNum{1} error exponent  by solving \eqref{eq:min1} when $P_1$ replaced by $\TX'$ and $\gamma$ is replaced by \eqref{eq:threshsol}. This optimization problem is convex in $Q$ for every $\TX'$ when $\beta=1$, however for $\beta<1$ this is not the case and the KKT conditions are only necessary conditions. Spelling out the Lagrangian we obtain the KKT conditions %the optimizer is the tilted distribution of $P_1, \TX'$, i.e., 
\begin{align}
L(Q,\lambda,\nu)&= D(Q\|P_0)+ \lambda \big(\beta D(Q\|\TX')- D(Q\|P_0)-\gamma(E_0,\TX')\big) +\nu \Big(\sum_{x\in\Xc} Q(x)-1\Big)\\
\frac{\partial L(Q,\lambda,\nu)}{\partial Q(x)}&= 1+\log \frac{Q(x)}{P_0(x)} +\lambda \Bigg(\beta+\beta \log \frac{Q(x)}{\TX'(x)}-1- \log \frac{Q(x)}{P_0(x)}\Bigg)+\nu.
\end{align}
Setting the derivative to zero, we get
\begin{align}\label{eq:tiltedlambda}
Q_{\beta , \lambda^*}  (x)= \frac{ P_{0}^{\frac{1-\lambda^*}{1-\lambda^*+\lambda^* \beta}}(x) {\TX'}^{\frac{\lambda^*\beta}{1-\lambda^*+\lambda^* \beta}}(x) } {\sum_{a \in \Xc }  P_{0}^{\frac{1-\lambda^*}{1-\lambda^*+\lambda^* \beta}}(a) {\TX'}^{\frac{\lambda^*\beta}{1-\lambda^*+\lambda^* \beta}}(a) }, ~~~0\leq\lambda^* ,
\end{align}
and by complementary slackness condition \cite{Boyd}
\begin{equation}
D(Q_{\beta , \lambda^*}\| P_0)-\beta D(Q_{\beta, \lambda^*} \| \TX') = D(Q_{\mu^*}\| P_0)-\beta D(Q_{\mu^*}\| \TX').
\end{equation}
Note that, $\lambda^*$ cannot be zero as that sets $Q_{\beta , \lambda^*}= P_0(x)$ which is invalid solution when $E_0>0$. It is easy to see that this equality is satisfied when $\frac{1-\lambda^*}{1-\lambda^*+\lambda^* \beta} =\frac{\mu^*}{1+\mu^*} $. It is also easy to see this is the unique solution as $\frac{1-\lambda^*}{1-\lambda^*+\lambda^* \beta} $ is strictly decreasing function in $\lambda^*$ and hence $D(Q_{\beta , \lambda^*}\|P_0)$ and $D(Q_{\beta , \lambda^*}\|\TX')$  are strictly increasing and strictly decreasing in $\lambda^*$ respectively.   Therefore, we get 
\begin{equation}\label{eq:KKTgammaLRT} 
D(Q_{\beta, \lambda^*}\| P_0)= E_0.
\end{equation}	
Next, by rewriting the type-\RNum{1} error probability as a function of the types of the training and observation sequences $\Tx,\TX$ we have 
\begin{align}\label{eq:e0saddle}
\epsilon_0(\cls)= \sum_{\TX\in \Pc_k(\Xc)} P_1(\TX ) \sum_{\substack{ \Tx \in \Pc_n(\Xc), \\ \cls(\Tx,\TX) =1}} P_0(\Tx),
\end{align}
that is the type-\RNum{1} error exponent is equal to $E_0$ for every realization of training sequence $\TX$. As the hypothesis test is a type base test, and by letting $P_0(\mathcal{T}_{Q}^n) = n^{\frac{-|\Xc|+1}{2}} e^{-nD(Q\|P_0)}(c+o(1))$ \cite{Csiszarinfo}, we obtain
\begin{equation}\label{eq:prefactore0}
\epsilon_0(\cls)= \sum_{\TX \in \Pc_k(\Xc)} P_1(\TX) \sum_{ \Tx \in \Ao}n^{\frac{-|\Xc|+1}{2}} e^{-nD(\Tx \|P_0)}(c+o(1)).
\end{equation}
To find the polynomial decay of the error probability, we use the following theorems by \cite{Reeds, Iltis, Ney},  to approximate the summation by an integral and then use the saddle point approximation. We also use the shorthand notation $a_n \asymp b_n$ for any two positive real sequences such that $\log \frac{a_n}{b_n} = \Oc(1)$.
	\begin{theorem}\label{thm:reeds} \cite{Reeds}
		Suppose $\psi: \mathbb{R}^d \rightarrow \mathbb{R}$ is a  Lipschitz continuous function and the open set $\Dc \subseteq \mathbb{R}^d  $ has minimally smooth boundary. Then
			\begin{equation}
			\sum_{\wv \in \Dc \cap \Pc_k(\Xc) }  e^{ -k\psi(\wv) } \asymp k^{d} \int_{\Dc} e^{ -k\psi(\wv) } \ d\wv.
			\end{equation} 	
	\end{theorem}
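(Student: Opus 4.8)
The plan is to read the left-hand sum as a Riemann sum for the integral on the right, the lattice $\Pc_k(\Xc)$ of $k$-types serving as a mesh of spacing $1/k$ on the simplex; the factor $k^d$ is exactly the reciprocal of the $d$-dimensional volume of a mesh cell (up to a universal constant, with $d=|\Xc|-1$), so that the only thing to control is the discretisation error, which the Lipschitz hypothesis on $\psi$ keeps bounded cell by cell.

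First I would fix the mesh: to each $k$-type $\wv$ associate a fundamental cell $C_{\wv}$ --- say a half-open cube $\wv+[0,\tfrac1k)^d$ in suitable coordinates, or the Voronoi cell of $\wv$ --- which has volume $\Theta(k^{-d})$ and diameter $\Theta(1/k)$. Since $\psi$ is $L$-Lipschitz, $|\psi(\bm u)-\psi(\wv)|\le L\operatorname{diam}(C_{\wv})=O(1/k)$ for every $\bm u\in C_{\wv}$, so that
\begin{equation}
k^d\!\int_{C_{\wv}} e^{-k\psi(\bm u)}\,d\bm u \;=\; e^{O(1)}\,e^{-k\psi(\wv)},
\end{equation}
with an $O(1)$ term uniform in $\wv$ and in $k$. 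Summing over $\wv\in\Dc\cap\Pc_k(\Xc)$ and setting $\Dc_k:=\bigcup_{\wv\in\Dc\cap\Pc_k(\Xc)} C_{\wv}$ yields
\begin{equation}
\sum_{\wv\in\Dc\cap\Pc_k(\Xc)} e^{-k\psi(\wv)} \;\asymp\; k^d\!\int_{\Dc_k} e^{-k\psi(\bm u)}\,d\bm u .
\end{equation}

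I would then replace $\Dc_k$ by $\Dc$, which amounts to showing that $k^d\int_{\Dc_k\,\triangle\,\Dc}e^{-k\psi}$ only changes the right-hand side by a bounded factor. The symmetric difference is contained in the tube $T_k=\{\bm u:\operatorname{dist}(\bm u,\partial\Dc)=O(1/k)\}$, and this is where the assumption that $\partial\Dc$ is minimally smooth enters: it guarantees that $T_k$ is covered by $O(k^{d-1})$ mesh cells and, crucially, that $\Dc$ satisfies a uniform interior cone (corkscrew) condition, so that from any cell meeting $T_k$ one can retreat a distance $O(1/k)$ along a locally fixed admissible direction into a cell contained in $\Dc_k$, with each such interior cell being the target of only boundedly many boundary cells. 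Transporting the integral along this retraction and invoking the Lipschitz bound once more turns $\int_{T_k}e^{-k\psi}$ into $e^{O(1)}\int_{\Dc}e^{-k\psi}$; combining this with the previous display gives the theorem, the reverse comparison being symmetric.

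The step I expect to be the genuine obstacle is this last, boundary one --- the estimate that upgrades the conclusion from $O(\log k)$, which comes for free since the sum has only $\Theta(k^d)$ terms, to $O(1)$. What makes $O(1)$ possible is that, $\psi$ being Lipschitz, $e^{-k\psi}$ varies by only a bounded factor across a $1/k$-neighbourhood, and $1/k$ is precisely the scale on which $e^{-k\psi}$ decays away from a minimiser of $\psi$ lying on $\partial\Dc$; hence a boundary layer of width $O(1/k)$ can never carry more than a constant multiple of the total mass. Making "retract into the interior with bounded multiplicity" rigorous is exactly where minimal smoothness (the uniform cone / corkscrew property) is really used. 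In every application in this paper the set $\Dc$ is a decision region such as $\Ao$, bounded by a level set of a smooth function with non-vanishing gradient in the interior of the simplex --- where $P_0,P_1>0$ keeps the type away from the faces --- so the hypothesis holds automatically and the implied constant in $\asymp$ is finite.
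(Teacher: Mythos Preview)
The paper does not supply its own proof of this theorem: it is quoted verbatim from \cite{Reeds} and invoked as a black box (the paper even defers the definition of ``minimally smooth boundary'' to that reference). So there is nothing in the paper to compare your argument against.

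That said, your sketch is the standard route to results of this type and is essentially correct. The key observation you make --- that Lipschitz continuity of $\psi$ together with the mesh scale $1/k$ forces $k\psi$ to vary by only $O(1)$ across a cell, so that $e^{-k\psi}$ changes by a bounded multiplicative factor --- is exactly what upgrades the comparison from ``same exponent'' to the sharp $\asymp$ in the sense $\log(a_k/b_k)=O(1)$ used in the paper. Your diagnosis that the boundary step is where the real content lies is also right: without regularity of $\partial\Dc$ the symmetric difference $\Dc_k\triangle\Dc$ could be uncontrollable, and the minimally-smooth hypothesis is precisely a uniform cone (corkscrew) condition that makes your retraction-with-bounded-multiplicity argument go through.

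One point worth stating more carefully: in every application in this paper the minimiser of $\psi$ over $\Dc$ lies on $\partial\Dc$ (it is the Sanov dominating point), so the boundary tube of width $O(1/k)$ carries a \emph{positive fraction} of $\int_{\Dc}e^{-k\psi}$, not a vanishing one. Your argument already accommodates this --- you are comparing the tube to a nearby interior slab, not discarding it --- but a reader might expect a ``boundary is lower order'' step, which would fail here. Making that explicit would strengthen the write-up.
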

	The precise definition of minimally smooth boundary can be found in \cite{Reeds}; we only work with smooth boundaries which are also minimally smooth.
	
Letting $\TX'$ be fixed, it can be shown that $\phi(Q)=D(Q\| P_0)-\beta D(Q \| \TX') $ is a smooth function, and hence we can conclude the boundary of the decision region $\Ac_0(\TX, \beta)$ is smooth since the graph of a smooth function is a smooth manifold \cite{Lee}. Furthermore, by the continuous differentiability of $\psi(Q)=D(Q\|P_0)$ in $Q$, the conditions of the Theorem \ref{thm:reeds} are satisfied and we can approximate the summation in \eqref{eq:prefactore0} by the integral 
	\begin{equation}\label{eq:reeds}
	\sum_{ \wv \in \Ao \cap \Pc_n(\Xc)  } e^{-nD(\wv \|P_0)} \asymp n^{|\Xc|-1}   \int_{\Ao  }   e^{-nD(\wv \|P_0)} d\wv.
	\end{equation}
	 Next we will use the following  theorem which can be derived from  \cite{Iltis} to approximate the integral.
	\begin{theorem}\cite{Iltis}
	 For a set $\Gamma$ in $\mathbb{R}^d$ having smooth boundary with a unique dominating point $\nu$ where the large deviation rate function $I(x)$ is minimized with $I(\nu)=a$, let the level surface $S_a=\{x\in \mathbb{R}^d : I(x)=a \}$ be described locally by $x_d=f(x)$, where $x = (x_1,, x_{d-1})$, and let the region $\Gamma$ be bounded by the surface $x_d=g(x)$, where $g(x)$ is a three times differentiable function of $x$ in a neighbourhood of $0$. If the Hessians of $g$ and $f$ at $0$ satisfy $H_g(0)>H_s(0 )$ then
	\begin{align}
	 \int_{\wv \in \Gamma}   e^{-nD(\wv \|P_0)} d \wv =n^{-\frac{d+1}{2}} e^{-na} (d_0 + o( 1)), \label{eq:intsaddle}
	% P(S_n \in  n \Gamma) = \frac{1}{\sqrt{n}} e^{-na} (d_0 + o( 1))
	\end{align}
	where 
	\end{theorem}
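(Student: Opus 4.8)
The plan is to prove the asymptotic formula \eqref{eq:intsaddle} by a two-stage Laplace (saddle-point) analysis of the exponential integral, after first reducing it to an arbitrarily small neighborhood of the dominating point $\nu$. The rate function here is $I(\wv)=D(\wv\|P_0)$, which is smooth and strictly convex on the interior of the simplex, so all derivatives invoked below exist. First I would establish \textbf{localization}: since $\nu$ is the unique minimizer of $I$ over the closed region $\Gamma$ with $I(\nu)=a$, continuity together with compactness yields $I(\wv)\ge a+\delta$ for some $\delta>0$ on $\Gamma\setminus B_\epsilon(\nu)$; hence the contribution of that part is at most $\mathrm{vol}(\Gamma)\,e^{-n(a+\delta)}$, which is exponentially negligible relative to the claimed $n^{-(d+1)/2}e^{-na}$. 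It therefore suffices to analyze the integral over $B_\epsilon(\nu)\cap\Gamma$.

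Next I would introduce local coordinates centered at $\nu$ with the $x_d$-axis along $\nabla I(\nu)$. Because $\nu$ minimizes $I$ on the boundary, $\nabla I(\nu)$ is normal to $\partial\Gamma$, so in these coordinates $\nabla_{x'}I(\nu)=0$ and $b:=\partial_{x_d}I(\nu)>0$, while both the boundary $x_d=g(x')$ and the level surface $x_d=f(x')$ pass through the origin with vanishing gradient (they are tangent at $\nu$, since they share the normal $\nabla I(\nu)$). Writing $\Gamma$ locally as $\{x_d\ge g(x')\}$, the increasing side of $I$, the \textbf{inner integral} along the normal direction is a one-dimensional Laplace integral whose integrand peaks at the lower endpoint $x_d=g(x')$:
\begin{equation}
\int_{g(x')}^{g(x')+\epsilon} e^{-nI(x',x_d)}\,dx_d=\frac{1}{n\,b}\,e^{-nI(x',g(x'))}\bigl(1+o(1)\bigr),
\end{equation}
uniformly for small $x'$. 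This endpoint evaluation is exactly what produces the extra $n^{-1}$ factor, in contrast to the $n^{-1/2}$ of an interior maximum.

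The remaining $(d-1)$-dimensional integral is $\frac{1}{nb}\int e^{-n\tilde I(x')}\,dx'$ with $\tilde I(x'):=I(x',g(x'))$, the rate function restricted to the boundary, minimized at $x'=0$ with $\tilde I(0)=a$ and $\nabla\tilde I(0)=0$. The \textbf{key computation} is the transverse Hessian: using $\nabla g(0)=0$,
\begin{equation}
H_{\tilde I}(0)=H_{I,x'x'}(0)+b\,H_g(0),
\end{equation}
while differentiating the identity $I(x',f(x'))\equiv a$ on the level surface twice (again with $\nabla f(0)=0$) gives $H_{I,x'x'}(0)=-b\,H_f(0)$. Combining,
\begin{equation}
H_{\tilde I}(0)=b\,\bigl(H_g(0)-H_f(0)\bigr),
\end{equation}
which is positive definite precisely because $b>0$ and the hypothesis $H_g(0)>H_f(0)$. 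This is exactly the condition guaranteeing that $\nu$ is a nondegenerate minimum of $I$ on the boundary, so the standard multivariate Laplace approximation applies and yields $\int e^{-n\tilde I(x')}\,dx'=e^{-na}(2\pi/n)^{(d-1)/2}(\det H_{\tilde I}(0))^{-1/2}(1+o(1))$. Multiplying by the $\frac{1}{nb}$ prefactor collects the scaling $n^{-1}\cdot n^{-(d-1)/2}=n^{-(d+1)/2}$ together with the positive constant $d_0=(2\pi)^{(d-1)/2}\bigl(b\,\sqrt{\det H_{\tilde I}(0)}\bigr)^{-1}$, establishing \eqref{eq:intsaddle}.

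The hardest part, and the step I would spend the most care on, is making this two-scale expansion rigorous with \emph{uniform} control: I must show that the one-dimensional endpoint expansion of the inner integral holds uniformly in $x'$ over the neighborhood, so that its $o(1)$ error can be pulled outside the transverse integral without corrupting the leading constant, and that the curvature hypothesis $H_g(0)>H_f(0)$ genuinely forces $\Gamma$ to lie, near $\nu$, on the increasing side of $I$ with $\partial\Gamma$ bending away from the level set, so no competing stationary directions arise. The geometric bookkeeping—confirming the tangency of $S_a$ and $\partial\Gamma$ at $\nu$, fixing the orientation of the inequality defining $\Gamma$, and bounding the cubic Taylor remainders of $g$, $f$, and $I$—together with these uniform remainder estimates is where the technical weight lies; the Hessian identity above is the conceptual core that ties the hypothesis directly to the convergence of the transverse Gaussian integral.
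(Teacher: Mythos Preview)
The paper does not prove this theorem at all: it is quoted verbatim as a result ``which can be derived from \cite{Iltis}'' and then simply applied as a black box in the proof of Theorem~\ref{thm:E0}. There is therefore no ``paper's own proof'' to compare against.

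That said, your sketch is the standard and correct route to such a boundary-Laplace asymptotic, and it matches the structure of the argument in Iltis and related references (e.g.\ Ney, Reeds). The decomposition into an inner one-dimensional integral along the normal direction (producing the endpoint factor $n^{-1}$) followed by a transverse $(d-1)$-dimensional Gaussian integral (producing $n^{-(d-1)/2}$) is exactly right, and your Hessian identity $H_{\tilde I}(0)=b\bigl(H_g(0)-H_f(0)\bigr)$ is the precise place where the curvature hypothesis $H_g(0)>H_f(0)$ enters to guarantee nondegeneracy. Your identification of the uniform-in-$x'$ control of the inner expansion as the main technical burden is also accurate; this is handled in the literature either by a direct remainder bound after a $C^2$ change of variables straightening the boundary, or by Morse-lemma-type normal forms. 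Nothing in your outline is wrong or missing a key idea.
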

	Since the optimizing distribution solving Sanov's Theorem is unique and equal to $Q_{\lambda^*}$, the dominating point is unique. Hence,  we only need to prove the the Hessian of the decision region $\Dc$ minus the Hessian of the level surface $\Sc_{E_0}=\{Q \in \Pc(\Xc) :D(Q\|P_0)\leq E_0 \} $ is positive definite matrix. Writing $Q_{\beta, \lambda^*}(|\Xc|) = 1- \sum_{x=1}^{|\Xc|-1} Q_{\beta , \lambda^*}(x)$ and taking the derivatives we have:
	\begin{align}
	\Hm_{\Dc}^{(|\Xc|-1) \times (|\Xc|-1)}&= -(1-\beta) \Bigg ( \diag \Big (\frac{1}{Q_{\beta , \lambda^*}} \Big )+\frac{ \onev \onev^T}{Q_{\beta , \lambda^*}(|\Xc|)} \Bigg )  ,\\
	\Hm_{\Sc_{E_0}}^{(|\Xc|-1) \times (|\Xc|-1)}&=- \diag \Bigg (\frac{1}{Q_{\beta , \lambda^*}} \Bigg )-\frac{ \onev \onev^T}{Q_{\beta , \lambda^*}(|\Xc|)} ,
	\end{align}
	and hence
	\begin{equation}
	\Hm_{\Sc_{E_0}}-\Hm_{\Dc}=\beta   \diag \Bigg (\frac{1}{Q_{\beta , \lambda^*}} \Bigg ) + \beta \frac{ \onev \onev^T}{Q_{\beta , \lambda^*}(|\Xc|)} ,
	\end{equation}
	which is positive for any $\beta >0$. Therefore, by  \eqref{eq:prefactore0}, \eqref{eq:reeds},  \eqref{eq:intsaddle},  
	%and  letting $d=|\Xc|-1$, 
	we have  
	\begin{align}
	\epsilon_0(\cls)&= \sum_{\TX} P_1(\TX) n^{\frac{-|\Xc|+1}{2}}  n^{|\Xc|-1} n^{-\frac{|\Xc|}{2}} e^{-nE_0}(c+o(1))\\
	&=  \sum_{\TX} P_1(\TX) \frac{1}{\sqrt{n}} e^{-nE_0}(c+o(1)).
	\end{align}
Finally, since $c$ is finite, positive, and only depends on $P_0, P_1, E_0$, there exists a $\tilde c$ such that
\begin{align}
\epsilon_0(\cls)= &\frac{1}{\sqrt{n}} e^{-nE_0} (\tilde c+ o(1)\big),
\end{align}
which concludes the proof.

%%%%%%%%%%%%%%%%%%%%%%%%%%%%%%%%%%%%%%
%%%%%%%%%%%%%%%%%%%%%%%%%%%%%%%%%%%%%%

\section{Proof of Theorem \ref{thm:E1}}
\label{sec:proofth2}
Using standard properties of the method of types \cite{Csiszarinfo}, the type-\RNum{2} probability of error can be written as
\begin{align}
\epsilon_1 (\cls)&=  \sum_{\xv, \X: \cls(\Tx,\TX)=0  } P_1(\X) P_1(\xv)\\
&=  \sum_{\substack{Q\in \Pc_n(\Xc) , Q_1 \in \Pc_k(\Xc),\\  \cls(Q,Q_1)=0  }} P_1(\mathcal{T}_{Q}^n) P_1(\mathcal{T}_{Q_1}^k) \label{eq:method}\\ 
&\leq \sum_{\substack{Q\in \Pc_n(\Xc) , Q_1 \in \Pc_k(\Xc),\\  \cls(Q,Q_1)=0  }} e^{-nD(Q\|P_1)-kD(Q_1\|P_1)}\\
&\leq (n+1)^{|\Xc|}(k+1)^{|\Xc|} 2^{-n 	\tilde{E}_1^{(n)}(\cls)   }
\end{align}
where 
\begin{equation}\label{eq:exponent}
	\tilde{E}_1^{(n)}(\cls)= \min_{ \substack{ \cls(Q,Q_1)=0 \\ Q,Q_1 \in \Pc(\Xc) } }D(Q\|P_1)+\alpha D(Q_1\|P_1).
\end{equation}
It is worth observing that this optimization problem is non-convex.
In addition, lower bounding \eqref{eq:method} yields
\begin{align}
\epsilon_1 (\cls) \geq (n+1)^{-|\Xc|} (k+1)^{-|\Xc|} e^{-n\tilde{E}_1^{(n)}(\cls) }.
\end{align}
Observe that the proposed test $\cls$ depends on  $n$; hence if the limit of \eqref{eq:exponent} as $n$ goes to infinity exists, then the limit is the type-\RNum{2} error exponent. Using a change of variable, we can write \eqref{eq:exponent} as
\begin{equation}
\begin{aligned}
\tilde{E}_1^{(n)}(\cls)=\min_{Q,Q_1} ~ & D(Q\|P_1)+\alpha D\Bigg(\frac{1}{1-\delta_n} Q_1 - \frac{\delta_n}{1-\delta_n} U \Big \|P_1\Bigg) \\
\textrm{s.t.}  ~& \beta D(Q\|Q_1)  -D(Q\|P_0)  \geq  \gamma(E_0,Q_1)\\
&Q \in \Pc(\Xc)    \\
& Q_1 \in (1-\delta_n)\Pc(\Xc) + \delta_n U 
\end{aligned}
\end{equation}
where $U$ is the uniform distribution on $\Pc(\Xc)$. Now, by using a Taylor expansion of $D\big(\frac{1}{1-\delta_n} Q_1 - \frac{\delta_n}{1-\delta_n} U  \|P_1\big)$ around $\delta_n=0$, we get
\begin{equation}
\begin{aligned}
\tilde{E}_1^{(n)}(\cls)=\min_{Q,Q_1} \quad & D(Q\|P_1)+\alpha D(Q_1 \|P_1) + \Oc\Big(\frac{\delta_n}{1-\delta_n}\Big) \\
\textrm{s.t.} \quad &\beta D(Q\|Q_1)  -D(Q\|P_0)  \geq  \gamma(E_0,Q_1)\\
&Q \in \Pc(\Xc)    \\
& Q_1 \in (1-\delta_n)\Pc(\Xc) + \delta_n U 
\end{aligned}
\end{equation}
Therefore, we can approximate the type-\RNum{2} error exponent by
\begin{equation}\label{eq:approxcontin}
E^{(n)}_1(\cls) = \min_{\substack{\beta D(Q\|Q_1)  -D(Q\|P_0)  \geq  \gamma(E_0,Q_1) \\ Q \in \Pc(\Xc), Q_1 \in \Pc_{\delta_n}(\Xc)}} D(Q\|P_1) +\alpha  D(Q_1\|P_1)
\end{equation}
where 
\begin{equation}
 \Pc_{\delta_n}(\Xc) = \big\{Q: Q=(1-\delta_n)P + \delta_n  U,   P \in \Pc(\Xc)    \big\},
\end{equation}
and the approximation error is of order $\Oc\big(\frac{\delta_n}{1-\delta_n}\big)$. Letting $\delta_n=o(n^{-1})$ we have
\begin{equation}
\epsilon_1 (\cls) = e^{-n{E}_1^{(n)}(\cls)+o(1)} \asymp e^{-n{E}_1^{(n)}(\cls)}.
\end{equation}
Hence, the type-\RNum{2} error exponent can be calculated as the limit of \eqref{eq:approxcontin} when $n$ goes to infinity.  We also define following optimization problems. For every $Q_1 \in \Pc_{\delta_n}(\Xc)$, let 
\begin{equation}\label{eq:firstopt}
E_1(E_0,Q_1)=\min_{\substack{\beta D(Q\|Q_1)  -D(Q\|P_0)  \geq  \gamma(E_0,Q_1)   \\ Q \in \Pc(\Xc), }  } D(Q\|P_1) ,
\end{equation}
which is the error exponent when the type of the training sequence  is $Q_1$. Also, let
\begin{equation}\label{eq:secondopt}
E_1^{(n)}(E_0,r)=\min_{\substack{D(Q_1\|P_1)\leq r \\  Q_1 \in \Pc_{\delta_n}(\Xc) } } E_1(E_0,Q_1).
\end{equation}
The latter optimization problem is the worst case achievable error exponent by $\cls$ if we know the training sequence type $Q_1$ is inside a relative entropy ball around the original distribution $P_1$ of radius $r$ \cite{BoroumandTran} .  Using \eqref{eq:firstopt}, \eqref{eq:secondopt}, we can wirte \eqref{eq:approxcontin}  as
\begin{align}
E_1^{(n)} (\cls)&= \min_{\substack{\beta D(Q\|Q_1)  -D(Q\|P_0)  \geq  \gamma(E_0,Q_1), \\  D(Q_1\|P_1)=r~ , ~  r \geq 0 ,  \\ Q \in \Pc(\Xc) ~,~ Q_1 \in \Pc_{\delta_n}(\Xc)   }    } D(Q\|P_1) +\alpha r\\
&=\min_{\substack{ D(Q_1\|P_1) =  r ~,~  r \geq 0 , \\Q_1 \in \Pc_{\delta_n}(\Xc)} } E_1(E_0,Q_1)+\alpha r \\
&=\min_{\substack{D(Q_1\|P_1) \leq r ~,~  r \geq 0  \\ Q_1 \in \Pc_{\delta_n}(\Xc)} } E_1(E_0,Q_1)+\alpha r  \label{eq:concaveopt} \\
&=\min_{ r \geq 0 } E_1^{(n)}(E_0,r)+\alpha r, \label{eq:mismatch}
\end{align}
where in \eqref{eq:concaveopt} we used the KKT conditions to show that the minimizer $Q_1^*$ of  \eqref{eq:concaveopt} should satisfy the inequality condition $D(Q_1\|P_1)\leq r$ with equality, i.e.,  $D(Q^*_1\|P_1) =  r$. We will use  \eqref{eq:mismatch} to analyze the behaviour of the non-convex problem in \eqref{eq:exponent}. We also define $r_{c}$ as
\begin{equation}
r_c=\min_{\substack{Q_1\in \Pc(\Xc) \\ \min_{i\in\Xc} Q_1(i)=0}} D(Q_1\|P_1),
\end{equation}
that is the minimum distance between $P_1$ and distributions on the boundary of the probability simplex $\Pc(\Xc)$. Since $P_1(i)$ is non zero for every $i$, therefore $r_{c}>0$, and we can lower bound $E_1^{(n)}(\cls)$ by
\begin{align}
E_1^{(n)} (\cls)&\geq \min\Big \{ \min_{0\leq r \leq r_c-\epsilon} E_1^{(n)}(E_0,r)+\alpha r, \alpha(r_c-\epsilon) \Big \},
\end{align}
where $0<\epsilon<r_c$, and we used the fact that $E^{(n)}_1(E_0,r) \geq0$ for $r \geq 0$ by non-negativity of relative entropy.  For $r\in[0,r_c-\epsilon]$, we also define $E_1(E_0,r)$ as
\begin{align}
 E_1(E_0,r)  =& \lim_{n\rightarrow \infty} E_1^{(n)}(E_0,r)\\
 =&  \lim_{n\rightarrow \infty} \min_{\substack{D(Q_1\|P_1)\leq r \\  Q_1 \in \Pc_n(\Xc) } } E_1(E_0,Q_1)\\
 =& \min_{\substack{D(Q_1\|P_1)\leq r \\  Q_1 \in \Pc(\Xc) } } E_1(E_0,Q_1). \label{eq:limitE}
\end{align}
Next, we use the following lemma to find a lower bound on type-\RNum{2} error exponent.
\begin{lemma}\label{lem:cont}
Let  $r \in[0,r_c)$,  then for every $n \in \mathbb{Z}^{+}$, the exponent functions  $E^{(n)}_1(E_0,r)$, and $E_1(E_0,r)$ are continuous in $r$.	
\end{lemma}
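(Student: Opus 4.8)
\emph{Proof strategy.}
Both $E_1^{(n)}(E_0,r)$ and $E_1(E_0,r)$ are optimal values of parametric minimization problems, so the plan is to deduce their continuity from Berge's Maximum Theorem: if the objective is jointly continuous and the constraint correspondence is nonempty, compact-valued and continuous (both upper and lower hemicontinuous), then the optimal value is continuous in the parameter. I would invoke it twice: first in the ``inner'' variable $Q_1$ to obtain continuity of $Q_1\mapsto E_1(E_0,Q_1)$ in \eqref{eq:firstopt}, and then in the parameter $r$ for the outer minimizations \eqref{eq:secondopt} and \eqref{eq:limitE}. The restriction $r<r_c$ is what makes this work: fixing $\rho$ with $r<\rho<r_c$, every $Q_1$ in the compact set $\mathcal{B}_\rho=\{Q_1\in\Pc(\Xc):D(Q_1\|P_1)\le\rho\}$ has $\min_x Q_1(x)$ bounded below by a positive constant (a zero coordinate would force $D(\cdot\|P_1)\ge r_c>\rho$), so $(Q,Q_1)\mapsto D(Q\|Q_1)$ is jointly continuous on $\Pc(\Xc)\times\mathcal{B}_\rho$, and $\mu^*$, $Q_{\mu^*}$ of \eqref{eq:tiltedmu} depend continuously on $Q_1$ there. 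A first application of Berge's theorem to \eqref{eq:threshcls}, whose constraint set $\{Q:D(Q\|P_0)\le E_0\}$ is fixed and compact, gives that $Q_1\mapsto\gamma(E_0,Q_1)$ is continuous on $\mathcal{B}_\rho$.

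Next I would apply Berge's theorem to \eqref{eq:firstopt}: the objective $Q\mapsto D(Q\|P_1)$ is continuous on the compact simplex, and the correspondence $\Gamma(Q_1)=\{Q\in\Pc(\Xc):\beta D(Q\|Q_1)-D(Q\|P_0)\ge\gamma(E_0,Q_1)\}$ is nonempty, compact-valued and upper hemicontinuous, since its defining inequality is closed and (by the previous step) its left-hand side together with $\gamma(E_0,Q_1)$ is jointly continuous on $\Pc(\Xc)\times\mathcal{B}_\rho$. The delicate point, which I expect to be the main obstacle, is lower hemicontinuity of $\Gamma$, because a perturbation of $Q_1$ might shrink the feasible region abruptly. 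This is resolved by exhibiting a uniform strict Slater point, namely $P_0$ itself: since $Q_{\mu^*}$ minimizes $D(\cdot\|Q_1)$ over the ball $\{D(\cdot\|P_0)\le E_0\}$, which contains $P_0$, one has $D(Q_{\mu^*}\|Q_1)\le D(P_0\|Q_1)$, whence by \eqref{eq:threshsol}
\begin{equation}
\beta D(P_0\|Q_1)-D(P_0\|P_0)-\gamma(E_0,Q_1)=\beta\big(D(P_0\|Q_1)-D(Q_{\mu^*}\|Q_1)\big)+E_0\ \ge\ E_0\ >\ 0 .
\end{equation}
Thus $\Gamma(Q_1)$ contains a point with slack at least $E_0$ for every $Q_1\in\mathcal{B}_\rho$, and the usual interpolation argument (joining any feasible $Q$ to $P_0$) yields lower hemicontinuity. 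Berge's theorem then gives that $Q_1\mapsto E_1(E_0,Q_1)$ is continuous on $\mathcal{B}_\rho$.

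Finally, for fixed $n$ write
\begin{equation}
E_1^{(n)}(E_0,r)=\min_{\substack{Q_1\in\Pc_{\delta_n}(\Xc)\\ D(Q_1\|P_1)\le r}}E_1(E_0,Q_1),\qquad
E_1(E_0,r)=\min_{\substack{Q_1\in\Pc(\Xc)\\ D(Q_1\|P_1)\le r}}E_1(E_0,Q_1),
\end{equation}
the second equality being \eqref{eq:limitE}. Here $\Pc_{\delta_n}(\Xc)$ is a fixed compact convex set (for $\delta_n$ small it contains $P_1$), so $r\mapsto\{Q_1:D(Q_1\|P_1)\le r\}\cap\Pc_{\delta_n}(\Xc)$ --- and likewise with $\Pc(\Xc)$ --- is nonempty, compact-valued and upper hemicontinuous, and it is lower hemicontinuous on $[0,r_c)$ because $D(\cdot\|P_1)$ and $\Pc_{\delta_n}(\Xc)$ are convex, so nudging any feasible $Q_1$ toward $P_1$ keeps it feasible for all slightly smaller radii. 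Every feasible $Q_1$ with $D(Q_1\|P_1)\le r<r_c$ lies in some $\mathcal{B}_\rho$ with $\rho<r_c$, where $E_1(E_0,\cdot)$ is continuous by the previous paragraph, so a last application of Berge's theorem shows that $E_1^{(n)}(E_0,r)$ and $E_1(E_0,r)$ are continuous in $r$ on $[0,r_c)$, as claimed. The whole argument is carried out with $n$ fixed, which is all the lemma requires.
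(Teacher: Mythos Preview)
Your approach is correct and follows the same two-stage Berge's Maximum Theorem framework as the paper: first establish continuity of $Q_1\mapsto E_1(E_0,Q_1)$ via continuity of the inner correspondence, then continuity in $r$ of the outer minimization over $\{Q_1:D(Q_1\|P_1)\le r\}$. The paper also proves continuity of $\gamma(E_0,Q_1)$ via Berge's theorem as an intermediate step, just as you do.

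The one substantive difference is in the lower hemicontinuity of the inner correspondence $\Gamma(Q_1)$. The paper constructs an explicit sequence $Q^{(j)}$ by parametrizing the line through $P_0$ and the boundary hyperplane $\{Q:(Q-Q_{\mu}(Q_1^{(j)}))^T\nv^{(j)}=0\}$, an argument tailored to $\beta=1$ where the decision boundary is affine. Your Slater-point argument---showing that $P_0$ lies in $\Gamma(Q_1)$ with slack at least $E_0$ uniformly in $Q_1$, then interpolating toward $P_0$---is more standard and more portable. One point you should make explicit, though: the ``usual interpolation argument'' requires that the segment from any feasible $Q$ to $P_0$ stay inside $\Gamma(Q_1)$, i.e., that $\Gamma(Q_1)$ be convex. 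This holds here because for $0<\beta\le 1$ the map $Q\mapsto\beta D(Q\|Q_1)-D(Q\|P_0)$ is concave in $Q$ (its Hessian is $(\beta-1)\,\mathrm{diag}(1/Q)$, which is negative semidefinite), so the superlevel set defining $\Gamma(Q_1)$ is convex and $g(Q_t,Q_1)\ge(1-t)\cdot 0+t\cdot E_0>0$ for $Q_t=(1-t)Q+tP_0$. With that clause added, your argument in fact works uniformly in $\beta\in(0,1]$, whereas the paper's geometric construction leans on the affine boundary specific to $\beta=1$.
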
	
The proof of Lemma \ref{lem:cont} relies on Berge's maximum theorem \cite{Berge} and we defer it to Appendix  \ref{app:cont}. Taking the limit of $E_1^{(n)} (\cls)$ when $n$ goes to infinity, we have
\begin{align}
E_1 (\cls)&= \liminf_{n \rightarrow \infty} E_1^{(n)} (\cls) \nonumber\\
 &\geq\liminf_{n \rightarrow \infty}  \min\Big \{ \min_{0\leq r \leq r_c-\epsilon} E_1^{(n)}(E_0,r)+\alpha r, \alpha(r_c-\epsilon) \Big \}\\
 &= \min\Big \{ \min_{0\leq r \leq r_c-\epsilon}  E_1(E_0,r)+\alpha r, \alpha(r_c-\epsilon) \Big \} \label{eq:lim},
\end{align}
In order to show \eqref{eq:lim}, by Lemma \ref{lem:cont}, $E^{(n)}_1(E_0,r)$, and  $E_1(E_0,r)$ are continuous on the compact interval $[0,r_c-\epsilon]$. Also, since $E_1^{(n)}(E_0,r) \geq E_1^{(n+1)}(E_0,r)$ for every $r$ in the domain, by Dini's theorem the convergence of $E^{(n)}_1(E_0,r)$ is uniform \cite{Rudin}. Finally, since the convergence is uniform we can interchange the minimum and limit to get  \eqref{eq:lim}. 

When $r=0$, we have $Q_1=P_1$ and $\gamma(E_0,P_1)=\beta E_1^*(E_0)-E_0$, resulting  $E_1(E_0,r=0)=E_1^*(E_0)$ where $E_1^*(E_0)$ is the optimal error exponent tradeoff in \eqref{eq:tradefix}. Moreover, for $r_1\leq r_2$, we have $\mathcal{B}(P_1,r_1) \subseteq \mathcal{B}(P_1,r_2)$ where $\mathcal{B}(P,r)=\{ Q \in \Pc(\Xc): D(Q\|P) \leq r\}$. Hence $E_1(E_0,r)$ is a non-increasing function of $r$. Therefore, the sufficient condition to have $E_1(\cls)=E_1^*(E_0)$ is
  \begin{equation}\label{eq:cond}
\min\Big \{ \min_{0\leq r \leq r_c-\epsilon}  E_1(E_0,r)+\alpha r, \alpha(r_c-\epsilon) \Big \} =E_1(E_0,r=0).
  \end{equation}
Equivalently, letting $\epsilon$ to go to zero, \eqref{eq:cond} can be written as the following two sufficient conditions 
\begin{align}
E_1(E_0,r)+\alpha r &\geq E_1(E_0,r=0) ~~ \forall r: ~ 0 \leq r < r_c,\\
\alpha &> \frac{E_1(E_0,r=0)}{r_c}.
\end{align}
Hence,  letting
\begin{equation}\label{eq:optimalalpha}
\alpha^* _\beta=  \max \bigg \{\sup_{0\leq r < r_c} \frac{E_1(E_0,r=0)-E_1(E_0,r)}{r}, \frac{E_1^*(E_0)}{r_c} \bigg \}
\end{equation}
for any $\alpha > \alpha^*_\beta$, we have $E_1(\cls) =E_1^*(E_0)$. Finally, since $E_1(E_0,r)$ is non-increasing and continuous on $r\in [0,r_c)$, and $-\frac{\partial E_1(E_0,r)}{\partial r} \big|_{r=0} < \infty$ as it will be proved in Theorem \ref{thm:lower}, the supremum in \eqref{eq:optimalalpha} is finite and  
\begin{equation}
0<\alpha^*_\beta< \infty,
\end{equation}
 and this concludes the proof.

%%%%%%%%%%%%%%%%%%%%%%%%%%%%%%%%%%%%%%
%%%%%%%%%%%%%%%%%%%%%%%%%%%%%%%%%%%%%%
\section{Proof of Theorem \ref{thm:refinement}}
\label{sec:proofth3}

	Using the method of types, the type-\RNum{2} probability of error can be written as
	\begin{align}
	\epsilon_1 (\cls)&=  \sum_{\xv, \X: \cls(\Tx,\TX)=0  } P_1(\X) P_1(\xv)\\
	&=\sum_{Q_1 \in \Pc_k(\Xc)\cap \Pc_{\delta_n} (\Xc)   } P_1(\mathcal{T}_{Q_1}^k) \sum_{ \substack{Q \in \Pc_n(\Xc),\\  \cls(Q,Q_1) =0}} P_1(\mathcal{T}_{Q}^n) \\
	&\asymp \sum_{Q_1 \in \Pc_k(\Xc) \cap \Pc_{\delta_n} (\Xc)  } c k^{\frac{-|\Xc|+1}{2}} e^{-kD(Q_1\|P_1)} \sum_{ \substack{Q \in \Pc_n(\Xc),\\  \cls(Q,Q_1) =0}} P_1(\mathcal{T}_{Q}^n)\label{eq:errorprob1}
	%&= \sum_{Q_1 \in \Pc_k(\Xc) \cap \Pc_\delta(\Xc)  }  k^{\frac{-|\Xc|+1}{2}} e^{-kD(Q_1\|P_1)} \frac{1}{\sqrt{n}} e^{-nE_1(E_0,Q_1)}    \big (c+ o(n)\big) \label{eq:errorprob1}
	\end{align}	
	where in the last step we used \cite{Csiszarinfo}
	\begin{equation}
	P_1(\mathcal{T}_{Q_1}^k) \asymp k^{\frac{-|\Xc|+1}{2}} e^{-kD(Q_1\|P_1)}.
	\end{equation}
	We need to show that conditioned on the training sequence having a type $Q_1$, the conditional type-\RNum{2} error probability is
	\begin{equation}\label{eq:mismatchprefac}
	\sum_{ Q: \cls(Q,Q_1) =0} P_1(\mathcal{T}_{Q}^n)= \frac{1}{\sqrt{n}} e^{-nE_1(E_0,Q_1)}    \big (c+ o(1)\big),
	\end{equation}
	when $E_1(E_0,Q_1)$ is positive, i.e,  the error decay has a prefactor of $\frac{1}{\sqrt{n}}$. We first show that for any classifier with $0 \leq \beta \leq 1$ we can upper bound the type-\RNum{2} error probability by the classifier type-\RNum{2} error probability with $\beta=1$. 
	\begin{lemma}\label{lem:inclusion}
	 For every fixed $Q_1$ and for every $\beta_1\leq \beta_2$ we have  $\Ac_0(Q_1,\beta_1) \subseteq \Ac_0(Q_1,\beta_2)$.
	\end{lemma}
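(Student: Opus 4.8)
The plan is to make the $\beta$-dependence of the decision region explicit and then obtain the inclusion by inspection. Using \eqref{eq:threshcls}, set $m(Q_1)\triangleq\min_{Q'\in\Pc(\Xc):\,D(Q'\|P_0)\le E_0}D(Q'\|Q_1)$; this quantity does \emph{not} depend on $\beta$, and $\gamma(E_0,Q_1)=\beta\,m(Q_1)-E_0$, so that
\begin{equation}
\Ac_0(Q_1,\beta)=\Big\{Q\in\Pc(\Xc):\ \beta\big(D(Q\|Q_1)-m(Q_1)\big)>D(Q\|P_0)-E_0\Big\}.
\end{equation}

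First I would show that, for every $\beta\ge 0$, one has $\Ac_0(Q_1,\beta)\subseteq\{Q:\ D(Q\|Q_1)-m(Q_1)\ge 0\}$. Indeed, if $D(Q\|Q_1)-m(Q_1)<0$, then $Q$ cannot satisfy $D(Q\|P_0)\le E_0$, since otherwise $Q$ would be feasible for the minimization defining $m(Q_1)$ and would yield a strictly smaller value; hence $D(Q\|P_0)-E_0>0$. As $\beta\ge 0$, the left-hand side of the defining inequality is nonpositive while the right-hand side is positive, so $Q\notin\Ac_0(Q_1,\beta)$.

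The conclusion is then immediate: on $\{Q:\ D(Q\|Q_1)-m(Q_1)\ge 0\}$ the map $\beta\mapsto\beta\big(D(Q\|Q_1)-m(Q_1)\big)$ is nondecreasing, so for $0\le\beta_1\le\beta_2$ and any $Q\in\Ac_0(Q_1,\beta_1)$ the previous step gives $D(Q\|Q_1)-m(Q_1)\ge 0$, whence
\begin{equation}
\beta_2\big(D(Q\|Q_1)-m(Q_1)\big)\ \ge\ \beta_1\big(D(Q\|Q_1)-m(Q_1)\big)\ >\ D(Q\|P_0)-E_0,
\end{equation}
i.e. $Q\in\Ac_0(Q_1,\beta_2)$. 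This proves $\Ac_0(Q_1,\beta_1)\subseteq\Ac_0(Q_1,\beta_2)$.

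The only point requiring care is the middle inclusion: one must discard the region where the coefficient of $\beta$ is negative before monotonicity in $\beta$ can be invoked, and discarding it rests precisely on $m(Q_1)$ being a minimum over the relative-entropy ball $\{D(\cdot\|P_0)\le E_0\}$. The rest is routine, and the argument is unchanged if $Q_1$ stands for the perturbed training type of \eqref{eq:typeplus}.
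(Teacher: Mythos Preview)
Your proof is correct. Both your argument and the paper's hinge on the same key fact---for every $Q\in\Ac_0(Q_1,\beta)$ one has $D(Q\|Q_1)\ge m(Q_1)=D(Q_{\mu^*}\|Q_1)$---after which monotonicity in $\beta$ is immediate. The difference is in how this inequality is established. The paper argues, via the KKT conditions from the proof of Theorem~\ref{thm:E0}, that $Q_{\mu^*}$ is simultaneously the projection of $Q_1$ onto the decision region $\Ac_0(Q_1,\beta_1)$, i.e.\ $Q_{\mu^*}=\argmin_{Q\in\Ac_0(Q_1,\beta_1)}D(Q\|Q_1)$, whence every $Q\in\Ac_0(Q_1,\beta_1)$ satisfies $D(Q\|Q_1)\ge D(Q_{\mu^*}\|Q_1)$. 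Your route is more elementary: you observe directly that $D(Q\|Q_1)<m(Q_1)$ forces $D(Q\|P_0)>E_0$ (since otherwise $Q$ would be feasible in the minimization defining $m(Q_1)$ and undercut it), which together with $\beta\ge 0$ already excludes $Q$ from $\Ac_0(Q_1,\beta)$. This avoids any appeal to Lagrangian duality or to the characterization of $Q_{\mu^*}$, yielding a shorter self-contained argument; the paper's route, by contrast, reuses machinery developed earlier and makes the geometric picture (tangency at $Q_{\mu^*}$) explicit.
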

	\begin{proof}
	 Let $Q \in \Ac_0(Q_1,\beta_1)$, we need to show that for every such $Q$ , we have $Q \in \Ac_0(Q_1,\beta_2)$. We then have
	\begin{align}
	D(Q\|P_0) - \beta_1 D(Q\|Q_1) \leq D(Q_{\mu^*}\|P_0) - \beta_1 D(Q_{\mu^*}\|Q_1)
	\end{align}
	where $Q_{\mu^*}$ is the solution to optimization in $\gamma(E_0,Q_1)$ in \eqref{eq:tiltedmu} and is independent of $\beta$. Furthermore,
	\begin{align}\label{eq:betainclusion}
	D(Q\|P_0) - \beta_2 D(Q\|Q_1)\leq D(Q_{\mu^*}\|P_0)- \beta_2 D(Q_{\mu^*}\|Q_1) + (\beta_2-\beta_1) \Big ( D(Q_{\mu^*}\|Q_1) -D(Q\|Q_1)\Big ).
	\end{align}	
	Similarly to the proof of Theorem \ref{thm:E0}, using the KKT conditions it can be shown that the projection of $Q_1$ to the set $\Ac(Q_1,\beta)$ also equals to $Q_{\mu^*}$, i.e.,
	\begin{equation}
	Q_{\mu^*}=\argmin_{D(Q\|P_0) - \beta_1 D(Q\|Q_1) \leq \gamma(E_0,Q_1)} D(Q\|Q_1).
	\end{equation}
	Hence for every $Q\in \Ac_0(Q_1,\beta_1)$ we have
	\begin{equation}
	D(Q_{\mu^*}\|Q_1) \leq D(Q\|Q_1), 
	\end{equation}
	and by \eqref{eq:betainclusion}
	\begin{align}
	D(Q\|P_0) - \beta_2 D(Q\|Q_1)\leq D(Q_{\mu^*}\|P_0)- \beta_2 D(Q_{\mu^*}\|Q_1), 
	\end{align}	
	i.e., $Q \in \Ac_0(Q_1,\beta_2)$ which concludes the proof.
	\end{proof}
	
	Therefore we can upper bound the error probability by setting $\beta=1$. Since for every fixed $Q_1$ the test is  the Neyman-Pearson test using the mismatched distribution $Q_1$, the decision region boundary is characterised by a hyperplane and by \cite{Reeds} the conditional type-\RNum{2} error probability prefactor equals to $\frac{1}{\sqrt{n}}$ for every $Q_1$. Hence, substituting \eqref{eq:mismatchprefac} into \eqref{eq:errorprob1} we have
	\begin{align}
	\epsilon_1 (\cls) \leq \sum_{Q_1 \in \Pc_k(\Xc) \cap \Pc_{\delta_n} (\Xc)  }  \frac{k^{\frac{-|\Xc|+1}{2}}}{\sqrt{n}} e^{-kD(Q_1\|P_1)} e^{-nE_1(E_0,Q_1)}    \big (c+ o(1)\big). \label{eq:errorprob1saddle}
	\end{align}	
	 Observe that one can also use the similar approach in the proof of Theorem \ref{thm:E0}, to show stronger result that for all $0<\beta \leq 1 $ the prefactor equals to $\frac{1}{\sqrt{n}}$. Also notice that by choosing some small enough $\delta_n>0$, and constraining $\TX \in \Pc_{\delta_n}(\Xc)$, the approximation error caused by substituting $\TX'$ by $\TX$ can be summarized in the $c+o(1)$ term in \eqref{eq:errorprob1saddle}, and we get
	\begin{align}
	\epsilon_1 (\cls)&\asymp  n^{-\frac{|\Xc|}{2}}  \sum_{Q_1 \in \Pc_k(\Xc) }  e^{ -k\psi(Q_1) }, \label{eq:sumsaddle}\\
	\psi(Q_1)&=\frac{1}{\alpha}E_1(E_0,Q_1) +  D(Q_1\|P_1),     
	\end{align}	
	where we used the fact that $\alpha=\frac{k}{n}$ is fixed and independent of $n, k$. 
	
	Next we approximate the summation by an integral using  Theorem \ref{thm:reeds}. As shown in Theorem  \ref{thm:E1}
	\begin{equation}
	P_1=\argmin_{Q_1: D(Q_1\|P_1) <r_c} \psi(Q_1)
	\end{equation}
	for $\alpha^* _\beta < \alpha$. We show that the terms in the summation \eqref{eq:sumsaddle} where $Q_1$ is in the vicinity of the $P_1$ dominate the summation. Let $\Dc=\Pc(\Xc)$ and $\Dc'=\Bc(P_1,\epsilon)$ where $ \epsilon = \frac{ s \log k}{k} $, $s>0$. We can split the summation \eqref{eq:sumsaddle} as
	\begin{align}
	\sum_{Q_1 \in \Dc \cap \Pc_k(\Xc) }  e^{ -k\psi(Q_1) }& =  \sum_{Q_1 \in  \Dc'  \cap \Pc_k(\Xc) }  e^{ -k\psi(Q_1) } +\sum_{Q_1 \in  (\Dc')^c  \cap \Pc_k(\Xc) }  e^{ -k\psi(Q_1) }. \label{eq:tworegions}
	\end{align}
	  By Theorem \ref{thm:lower}, $\psi(Q_1)$ achieves its minimum at $Q_1=P_1$ for $\alpha>\alpha^*_\beta$, i.e., $\psi(Q_1)> \psi(P_1)$ for $Q_1\neq P_1$. Hence, by the mean value theorem and using a Taylor series expansion of $\psi(Q)$ around $Q_1=P_1$ for $ Q_1 \in  (\Dc')^c$ we get
	\begin{equation}
	\psi(Q_1) = \psi(P_1) +  \frac{1}{2}\thetav_{P_1}^{T} \tilde{\Hm}(Q_1)  \thetav_{P_1} 
	\end{equation}
	where $\thetav_{P_1}$ is the difference vector $Q_1-P_1$ defined in \eqref{eq:diffQ1}, and $\tilde{\Hm}(Q_1)$ is a positive definite matrix for every $Q_1 \in  (\Dc')^c$. Also, we have used the result form Theorem \ref{thm:lower} that the first order term of the expansion is zero. Furthermore, note that by the condition $ Q_1 \in  (\Dc')^c$ we have $\| \thetav_{P_1} \|_\infty^2 \geq O(\epsilon)$. Hence for large enough $n$ we have
	\begin{equation}
	\sum_{Q_1 \in  (\Dc')^c  \cap \Pc_k(\Xc) }  e^{ -k\psi(Q_1)} \leq (k+1)^{|\Xc|} e^{-k\Big (\frac{1}{\alpha} E_1(E_0,P_1) +\frac{1}{2}  \Lambda \frac{s \log k}{k}\Big )}
	\end{equation}
	where  $\Lambda= \min_{Q_1\in(\Dc')^c } \Lambda_{\min}\Big(\tilde{\Hm}(Q_1)\Big )$, and $\Lambda_{\min}(.)$ is the smallest eigenvalue of a matrix. Note that $\Lambda$ is strictly positive since $\psi(Q_1)> \psi(P_1)$ for $Q_1\neq P_1$. Finally, by setting $s \geq \frac{2 } { \Lambda } (|\Xc|+\frac{3}{2})  $ we have
	\begin{equation}
	\sum_{Q_1 \in  (\Dc')^c  \cap \Pc_k(\Xc) }  e^{ -k\psi(Q_1)} \leq n^{-\frac{1}{2}} e^{-n E_1(E_0,P_1) } O(n^{-1}).
	\end{equation}
	
	 It remains to bound the first term in \eqref{eq:tworegions}. By the envelope theorem \cite{Segal} as shown in Lemma \ref{lem:HessianE}, we can conclude that $\psi(\wv): \mathbb{R}^{|\Xc|-1} \rightarrow \mathbb{R}$  is continuously differentiable and hence Lipschitz on $\Dc'$.  Furthermore, $\Dc'$ is a $C^{\infty}$ manifold, and hence $\Dc'$ has a minimally smooth boundary. Therefore, by Theorem \ref{thm:reeds} we have
	\begin{align}
	\epsilon_1 (\cls)&\asymp  n^{\frac{|\Xc|-2}{2}}    \int_{\Dc'} e^{ -k\psi(\wv) } \ d\wv. 
	\end{align}	
	By using a Taylor series expansion of $\psi(\wv)$ around $\wv=P_1$, and Lemma \ref{lem:HessianE} we have
	\begin{align}
	\epsilon_1 (\cls)&\asymp  n^{\frac{|\Xc|-2}{2}}  \int_{\Dc'}   e^{ -n \big(E_1(E_0,P_1) + \frac{1}{2}<(\wv-P_1),(\Hm_2+\alpha \Jm) (\wv-P_1)> +O\big (\epsilon^{\frac{3}{2}} \big) \big )} d(\wv-P_1).
	\end{align}	
	Note that $\wv$ is bounded to the the $|\Xc|-1$ dimensional  probability simplex as $\wv \in \Dc'$. Hence for every $\wv-P_1$ there exists a $|\Xc| \times |\Xc|$ rotation matrix $\Rm$ such that for every $\wv \in \Pc(\Xc) $ we have $\ev_{|\Xc|}^T \Rm (\wv-P_1) =0$, where $\ev_{|\Xc|}$ is the unit vector with $1$ in the $|\Xc|$'th coordinate and $0$ otherwise. Letting $\wv' = \Rm(\wv-P_1)$ we have
	\begin{align}
	\epsilon_1 (\cls)&\asymp  n^{\frac{|\Xc|-2}{2}}  \int_{\Dc' \cap \R^{|\Xc|-1}}   e^{ -n \big(E_1(E_0,P_1) + \frac{1}{2 }\big<\Rm^T \wv',(\Hm_2+\alpha \Jm)\Rm^T \wv'\big> +O\big (\epsilon^{\frac{3}{2}} \big)    \big )}  d \Rm^T\wv',
	\end{align}	
	where we have used $\Rm \Rm^T=\Id$. Furthermore, from the proof of Theorem \ref{thm:lower}, $\Hm_2+\alpha \Jm$ is strictly positive definite and hence full rank for $\alpha^*_\beta < \alpha$. Using a change of variables $\yv=\sqrt{n}\wv'$ we have $d\yv=n^{-\frac{|\Xc|-1}{2}} d\wv'$, and
	\begin{align}
	\epsilon_1 (\cls)&\asymp  n^{-\frac{1}{2}} e^{ -n E_1(E_0,P_1)}   \int_{\Dc'}   e^{ - \frac{1}{2}\big<\Rm^T \yv,(\Hm_2+\alpha \Jm)\Rm^T \yv\big> +O\big (\sqrt{\frac{\log^3{n}}{n}}\big )  }d\yv,
	\end{align}	
	where the integral is a Gaussian integral  and equals to $c'+o(1)$ for a constant $c'$ and this concludes the proof.

%%%%%%%%%%%%%%%%%%%%%%%%%%%%
%%%%%%%%%%%%%%%%%%%%%%%%%%%%
\section{Proof of Thereom \ref{thm:lower}}
\label{sec:proofth5}

From \eqref{eq:optimalalpha} we have

\begin{equation}
	\underline{\alpha}_\beta\triangleq- \frac{\partial E_1(E_0,r)}{\partial r} \bigg|_{r=0} \leq \alpha^*_\beta.
\end{equation}
To find the derivative of $E_1(E_0,r)$ at $r=0$, we use a Taylor series expansion of $E_1(E_0,Q_1)$ around $Q_1=P_1$. We use the following lemmas in our proof. First lemma provides the first and second derivative of $\gamma (E_0,Q_1)$.
\begin{lemma}\label{lem:gamsen}
	For every $P_0, P_1 \in \Pc(\Xc)$, the gradient of threshold function $\gamma(E_0,Q_1)$ is
	\begin{align}\label{eq:firstdergam}
	\frac{\partial \gamma (E_0,Q_1)}{\partial Q_1 (i)}  = -\beta \frac{Q_{\mu^*}(i)}{Q_1(i)}, 
	\end{align}
	 Furthermore the Hessian matrix of $\gamma(E_0,Q_1)$ evaluated at $Q_1=P_1$ is
	\begin{align}
	\Hm_1 &= \Hm(Q_1) \Big |_{Q_1=P_1}, \\
	\Hm(Q_1)	  &= \nabla_{Q_1}^2  \gamma(E_0,Q)  \Big |_{Q_1=P_1} \\
	&= 	\beta \Jm\Bigg (\frac{\mu^*}{1+\mu^*}  	\Tm+  \frac{1}{(1+\mu^*)} (\Qm+ \Wm )  \Bigg ) \Jm,  \label{eq:Hessgam}
%	\frac{\mu^*}{1+\mu^*} \frac{Q_{\mu^*}(i)}{Q_1^2(i)} \mathbbm{1}\{i=j\} + \nonumber \\ 
%	&\frac{1}{(1+\mu)\Var_{Q_{\mu^*}}\Big(\log \frac{Q_{\mu^*}}{P_1}\Big)} \frac{Q_{\mu^*}(i)Q_{\mu^*}(j)}{Q_1(i)Q_1(j)} \Bigg[  \Var_{Q_{\mu^*}}\Bigg(\log \frac{Q_{\mu^*}}{P_1}\Bigg)  \nonumber \\
%	& +\Bigg( \log \Bigg(\frac{Q_{\mu^*}(i)}{P_1(i)}\Bigg) -E_0\Bigg ) \Bigg( \log \Bigg(\frac{Q_{\mu^*}(j)}{P_1(j)}\Bigg) -E_0\Bigg )      \Bigg]
	\end{align}
	where $\Wm=\wv \wv^T, \Qm=\qv \qv^T $ and
	\begin{align}
	  \wv^T= \frac{1}{{\sqrt{ \Var_{Q_{\mu^*}} \Big (\log \frac{Q_{\mu^*}}{P_0} \Big ) }}} \Bigg ( Q_{\mu^*}(1) {\log\Big ( \frac{Q_{\mu^*}(1)}{P_0(1)} \Big )-E_0 }, \ldots,   Q_{\mu^*}(|\Xc|){\log\Big ( \frac{Q_{\mu^*}(|\Xc|)}{P_0(|\Xc|)} \Big )-E_0 } \Bigg),
	\end{align} 
	\begin{align}
	\qv= \Big (Q_{\mu^*} (1),\ldots, Q_{\mu^*}(|\Xc|)\Big),
	\end{align}
	 \begin{align}
	\Tm= \diag \Big(Q_{\mu^*}  \Big), \quad \Jm=\diag \Bigg(\frac{1}{Q_1}  \Bigg),
	\end{align}
and $Q_{\mu^*}, \mu^*$ are defined as in \eqref{eq:tiltedmu} and \eqref{eq:KKTgamma}.
\end{lemma}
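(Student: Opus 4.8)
The plan is to start from the characterization, already established in the proof of Theorem~\ref{thm:E0} via the KKT conditions, that the convex program defining $\gamma(E_0,Q_1)$ in \eqref{eq:threshcls} has the \emph{unique} minimizer $Q_{\mu^*}$ equal to the tilted distribution \eqref{eq:tiltedmu}, with $\mu^*$ --- equivalently $\rho\triangleq \frac{1}{1+\mu^*}$ --- pinned down by the active constraint $D(Q_{\mu^*}\|P_0)=E_0$. Hence $\gamma(E_0,Q_1)=\beta D(Q_{\mu^*}\|Q_1)-E_0$, and the entire $Q_1$-dependence enters both explicitly, through the second argument of the divergence, and implicitly, through $Q_{\mu^*}=Q_{\mu^*}(Q_1)$. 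Throughout I would differentiate with respect to the free coordinates of $Q_1$, i.e.\ work with the natural smooth extension of $\gamma(E_0,\cdot)$ off the simplex; this is harmless, since in the sequel the gradient and Hessian are only ever contracted against simplex-tangent perturbations.

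For the gradient \eqref{eq:firstdergam} I would invoke the envelope (Danskin) theorem: because the minimizer is unique and $D(Q\|Q_1)$ is smooth in $Q_1$, the derivative of the value function equals the partial derivative of the objective with the minimizer frozen, namely $\partial_{Q_1(i)}D(Q\|Q_1)\big|_{Q=Q_{\mu^*}}=-Q_{\mu^*}(i)/Q_1(i)$; multiplying by $\beta$ gives \eqref{eq:firstdergam}. The smoothness-and-uniqueness hypotheses needed here are exactly those already verified in Appendix~\ref{sec:proofth1}.

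For the Hessian I would differentiate $-\beta Q_{\mu^*}(i)/Q_1(i)$ once more in $Q_1(j)$, which reduces everything to computing $\partial Q_{\mu^*}(i)/\partial Q_1(j)$. Writing $\log Q_{\mu^*}(x)=(1-\rho)\log P_0(x)+\rho\log Q_1(x)-\log Z(\rho,Q_1)$, this splits into a ``direct'' part at fixed $\rho$ and an ``indirect'' part through $\partial\rho/\partial Q_1(j)$. The direct part is elementary; the derivatives of the log-partition function contribute precisely the terms that turn each coordinate into its deviation from the corresponding $Q_{\mu^*}$-mean, and the algebraic identity $\log\frac{Q_{\mu^*}(i)}{P_0(i)}-E_0=\rho\,\partial_\rho\log Q_{\mu^*}(i)$ --- which follows from the explicit tilted form together with $\mathbb{E}_{Q_{\mu^*}}[\log\frac{Q_{\mu^*}}{P_0}]=E_0$ --- is what produces the vector $\wv$. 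To get $\partial\rho/\partial Q_1(j)$ I would differentiate the implicit relation $D(Q_{\mu^*}\|P_0)=E_0$: the $\partial/\partial\rho$ term evaluates to $\tfrac1\rho\Var_{Q_{\mu^*}}(\log\frac{Q_{\mu^*}}{P_0})$ (the first-moment pieces vanish by the constraint), while the $\partial/\partial Q_1(j)$ term equals $\rho\,\frac{Q_{\mu^*}(j)}{Q_1(j)}(\log\frac{Q_{\mu^*}(j)}{P_0(j)}-E_0)$, so $\partial\rho/\partial Q_1(j)$ is proportional to the $j$-th entry of $\wv$ over $Q_1(j)$. Substituting back, folding the diagonal contribution into $\Tm=\diag(Q_{\mu^*})$ and the two rank-one contributions into $\Qm=\qv\qv^T$ and $\Wm=\wv\wv^T$, and using $\rho=\frac{1}{1+\mu^*}$, $1-\rho=\frac{\mu^*}{1+\mu^*}$, yields \eqref{eq:Hessgam}; finally setting $Q_1=P_1$ gives $\Jm=\diag(1/P_1)$ and makes $Q_{\mu^*},\mu^*$ the quantities of \eqref{eq:tiltedH}.

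The main obstacle is bookkeeping: correctly chaining the dependence of $\mu^*$ (hence $\rho$ and $Z$) on $Q_1$, and verifying that all the ``$\sum_i Q_{\mu^*}(i)(\cdot)$'' cross-terms arising from the partition-function derivatives cancel against one another by virtue of $\sum_i Q_{\mu^*}(i)=1$ and the active constraint $\mathbb{E}_{Q_{\mu^*}}[\log\frac{Q_{\mu^*}}{P_0}]=E_0$. Without these cancellations the clean ``diagonal plus two rank-one terms'' structure of \eqref{eq:Hessgam} would not emerge, so these identities are the crux of the computation; the rest is routine differentiation.
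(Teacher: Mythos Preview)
Your proposal is correct and yields exactly \eqref{eq:Hessgam}, but it proceeds by a genuinely different route from the paper. Both arguments begin identically, obtaining \eqref{eq:firstdergam} via the envelope theorem and then reducing the Hessian computation to $\partial Q_{\mu^*}(j)/\partial Q_1(i)$. At that point the paper invokes a general sensitivity-analysis result (Proposition~\ref{prop:sensi}, due to Castillo et al.) that expresses the Jacobian of the primal and dual optimizers as $\Um^{-1}\Sm$ for explicit block matrices built from the Lagrangian, and then carries out a block inversion to isolate the first $|\Xc|$ rows. You instead exploit the closed form $\log Q_{\mu^*}(x)=(1-\rho)\log P_0(x)+\rho\log Q_1(x)-\log Z$ directly: a ``direct'' derivative at fixed $\rho$ produces the $\Tm$ and $\Qm$ pieces, while implicit differentiation of the active constraint $D(Q_{\mu^*}\|P_0)=E_0$ gives $\partial\rho/\partial Q_1(j)$ and hence the rank-one $\Wm$ piece; the identity $\log\tfrac{Q_{\mu^*}(i)}{P_0(i)}-E_0=\rho\,\partial_\rho\log Q_{\mu^*}(i)$ is exactly what aligns your expression with the definition of $\wv$.

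Your approach is more elementary and self-contained---it needs only calculus and the explicit tilted form already established in Appendix~\ref{sec:proofth1}, and it makes transparent why the Hessian has the ``diagonal plus two rank-one'' structure (one rank-one term from the partition function, one from the dependence of $\rho$ on $Q_1$). The paper's approach is heavier but more mechanical, and its investment in Proposition~\ref{prop:sensi} is reused verbatim in the proof of Lemma~\ref{lem:HessianE}, where the minimizer no longer has such a clean closed form in $Q_1$; your direct method would require more work there.
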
	

\begin{proof}
The proof can be found in Appendix \ref{sec:prooflemd1}.
\end{proof}

Next using the derivatives of $\gamma (E_0,Q_1)$, and \eqref{eq:firstopt} we can find the derivatives of the exponent function $E_1(E_0,Q_1)$.
\begin{lemma}\label{lem:HessianE}
	For every $P_0, P_1\in \Pc(\Xc)$, the gradient  of the  exponent function $E_1(E_0,Q_1)$  evaluated at $Q_1=P_1$ is
	\begin{align}\label{eq:zeroder}
	\frac{\partial E_1(E_0,Q_1)}{\partial Q_1(i)} \Bigg|_{Q_1=P_1}   = 0.
	\end{align}
	Furthermore, the Hessian matrix  evaluated at $Q_1=P_1$ is
	\begin{align}\label{eq:Hessian}
	\Hm_2 &\triangleq \nabla^2_{Q_1} E_1(E_0,Q_1) \Big |_{Q_1=P_1}\\
	%&= \lambda^* \Jm \Big ( - \diag \Big(Q_{\lambda^*}  \Big)+ (1-\lambda^*) \av \av^T+ \lambda^* \cv \cv^T + \dv\dv^T      \Big)  \Jm. \\
	&= \beta \eta^*_\beta \Jm \Big [  \Qm + \eta^*_1 \Vm +(1-\eta^*_1)\Wm    - \Tm    \Big ] \Jm
	\end{align}
where $\eta^*_\beta$ and $\eta^*_1$ are the optimal Lagrange multipliers in \eqref{eq:lagrangeE} when $0<\beta\leq 1$ and $\beta=1$ respectively, and $\Vm=\vv \vv^T$, 
\begin{align}
\vv=\frac{1}{\sqrt{\Var{Q_{\eta^*_\beta} }(\Omega)   }}    \Big ( Q_{\eta^*_\beta}(1)\big ( \Omega(1) -(E_0- \beta E_1)\big ),\ldots,Q_{\eta^*_\beta}(|\Xc|)\big (\Omega(|\Xc|)-(E_0- \beta E_1)\big)   \Big ),
\end{align}
\begin{align}
\Omega(i)=\beta \log \frac{P_1(i)}{P_0(i)} +(1-\beta)  \log \frac{Q_{\eta^*_\beta}(i)}{P_0(i)}, i \in \Xc.
\end{align}
%\begin{equation}
%	\av=\av(Q_1=P_2),
%\end{equation}
%	\begin{align}
%&	\footnotesize \cv=\\
%&	  \Bigg ( Q_{\lambda^*}(1) \frac{\log\big ( \frac{P_2(1)}{P_1(1)} \big )-\gamma }{\sqrt{ \Var_{Q_{\lambda^*}} \big (\log \frac{P_2}{P_1} \big ) }}, \ldots,  Q_{\lambda^*}(k) \frac{\log\big ( \frac{P_2(k)}{P_1(k)} \big )-\gamma }{\sqrt{ \Var_{Q_{\lambda^*}} \big (\log \frac{P_2}{P_1} \big ) }} \Bigg)^T
%	\end{align}
%	\begin{align}
%	 \dv&=  \big ( Q_{\lambda^*}(1), \ldots,  Q_{\lambda^*}(k)  \big)^T\\
%		\Jm&=\diag\bigg( \frac{1}{P_2(1)},\dotsc,\frac{1}{P_2({k})}\bigg),
%	\end{align}
%	where $\gamma=\gamma(E_0,P_2)$, and $Q_{\lambda^*}$ is defined in \eqref{eq:tiltedlam}. 

%	\begin{align}
%	&(\Hm_2)_{ij} = \frac{\partial^{2} E_1}{\partial Q_1(i) \partial Q_1(j) }\Bigg |_{Q_1=P_2}= -\lambda^*(1+\lambda^*) \frac{Q_{\lambda^*}(i)}{P_2^2(i)} \mathbbm{1}\{i=j\}\nonumber \\ 
%	&+\lambda^* (\Hm_1)_{i,j} +\frac{(\lambda^*)^2}{\Var_{Q_{\lambda^*}}\Big(\log \frac{P_2}{P_1}\Big)} \frac{Q_{\lambda^*}(i)Q_{\lambda^*}(j)}{P_2(i)P_2(j)} \Bigg[  \Var_{Q_{\lambda^*}}\Bigg(\log \frac{P_2}{P_1}\Bigg) \nonumber \\
%	&+\Bigg( \log \Bigg(\frac{P_2(i)}{P_1(i)} \Bigg) -\gamma \Bigg ) \Bigg ( \log \Bigg(\frac{P_2(j)}{P_1(j)}\Bigg) -\gamma \Bigg ) \Bigg], \label{eq:Hessian}
%	\end{align}
%	where
%\begin{equation}
%\Hm_1=\Hessian \gamma \bigg|_{Q_1=P_2}, ~~ \lambda^*=\lambda \Big|_{Q_1=P_2}
%\end{equation}
\end{lemma}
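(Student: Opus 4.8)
The plan is to treat $E_1(E_0,Q_1)$ as the optimal value of a parametric \emph{convex} program in $Q$ (for fixed $Q_1$ the feasible set $\{Q:\beta D(Q\|Q_1)-D(Q\|P_0)\ge\gamma(E_0,Q_1)\}$ is the superlevel set of a function that is concave in $Q$ for $0<\beta\le 1$, and $D(\cdot\|P_1)$ is strictly convex), and to differentiate this value twice at $Q_1=P_1$ by first- and second-order sensitivity analysis. Write the Lagrangian $L(Q,\eta,\nu;Q_1)=D(Q\|P_1)-\eta\bigl(\beta D(Q\|Q_1)-D(Q\|P_0)-\gamma(E_0,Q_1)\bigr)+\nu\bigl(\sum_x Q(x)-1\bigr)$ as in \eqref{eq:lagrangeE}, let $\hat Q(Q_1),\eta^*_\beta(Q_1),\nu^*(Q_1)$ denote the unique KKT point, and note $\eta^*_\beta>0$ because $E_0>0$ forces the relative-entropy constraint to be active. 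A preliminary observation is that at $Q_1=P_1$ the minimizer is $\hat Q(P_1)=Q_{\mu^*}$: from the definition of $\gamma$ one has $\gamma(E_0,P_1)=\beta D(Q_{\mu^*}\|P_1)-E_0=\beta D(Q_{\mu^*}\|P_1)-D(Q_{\mu^*}\|P_0)$, so $Q_{\mu^*}$ lies on the boundary of the feasible set, and it attains $D(Q_{\mu^*}\|P_1)=\min_{D(Q\|P_0)\le E_0}D(Q\|P_1)=E_1^{*}(E_0)=E_1(E_0,P_1)$; uniqueness of the convex minimizer then gives $\hat Q(P_1)=Q_{\mu^*}=Q_{\eta^*_\beta}$.

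For the gradient I would invoke the envelope theorem \cite{Segal}: since $D(\cdot\|P_1)$ does not depend on $Q_1$, $\frac{\partial E_1(E_0,Q_1)}{\partial Q_1(i)}=\frac{\partial L}{\partial Q_1(i)}\big|_{\hat Q,\eta^*_\beta}=-\eta^*_\beta\bigl(\beta\frac{\partial D(\hat Q\|Q_1)}{\partial Q_1(i)}-\frac{\partial\gamma(E_0,Q_1)}{\partial Q_1(i)}\bigr)=\frac{\beta\eta^*_\beta}{Q_1(i)}\bigl(\hat Q(i)-Q_{\mu^*}(i)\bigr)$, where I used $\frac{\partial D(Q\|Q_1)}{\partial Q_1(i)}=-Q(i)/Q_1(i)$ together with the first-derivative formula \eqref{eq:firstdergam} of Lemma \ref{lem:gamsen}. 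Evaluating at $Q_1=P_1$ and using $\hat Q(P_1)=Q_{\mu^*}$ makes the bracket vanish, which gives \eqref{eq:zeroder}.

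For the Hessian I would differentiate the displayed first-order identity once more with respect to $Q_1(j)$ and evaluate at $Q_1=P_1$. Because the factor $\hat Q(i)-Q_{\mu^*}(i)$ vanishes there, the product rule annihilates every term except the one hitting that difference, leaving $(\Hm_2)_{ij}=\frac{\beta\eta^*_\beta}{P_1(i)}\bigl(\frac{\partial\hat Q(i)}{\partial Q_1(j)}-\frac{\partial Q_{\mu^*}(i)}{\partial Q_1(j)}\bigr)\big|_{Q_1=P_1}$, which already exhibits the common prefactor $\beta\eta^*_\beta\Jm(\cdot)\Jm$. The two sensitivities come from their optimality systems. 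The term $\partial Q_{\mu^*}/\partial Q_1$ is obtained from the closed form $Q_{\mu^*}(x)\propto P_0^{\mu^*/(1+\mu^*)}(x)\,Q_1^{1/(1+\mu^*)}(x)$ together with the scalar equation $D(Q_{\mu^*}\|P_0)=E_0$ that pins $\mu^{*}$; this is essentially the computation already carried out for $\Hm_1$ in Lemma \ref{lem:gamsen}, it is where $\Wm=\wv\wv^{T}$ with normalization $\Var_{Q_{\mu^*}}(\log\tfrac{Q_{\mu^*}}{P_0})$ enters (the curvature one divides by when eliminating $\mathrm d\mu^{*}$), and it is why the tilt exponents $\mu^{*}/(1+\mu^{*})=\eta^{*}_1$ and $1/(1+\mu^{*})=1-\eta^{*}_1$ of $Q_{\mu^*}$ reappear as the weights in $\eta^{*}_1\Vm+(1-\eta^{*}_1)\Wm$, while the coefficient of $\qv\qv^T$ from $\partial D(\hat Q\|Q_1)/\partial Q_1$ stays unweighted. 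The term $\partial\hat Q/\partial Q_1$, jointly with $\partial\eta^*_\beta/\partial Q_1$ and $\partial\nu^*/\partial Q_1$, is obtained by differentiating the KKT system — stationarity $\nabla_Q L=0$, the active constraint $\beta D(\hat Q\|Q_1)-D(\hat Q\|P_0)=\gamma(E_0,Q_1)$, and $\sum_x\hat Q(x)=1$; eliminating $\mathrm d\eta^*_\beta$ through the active-constraint equation produces the remaining rank-one term $\Vm=\vv\vv^{T}$, whose defining vector is the gradient of the constraint at the optimizer (which equals $\Omega$ up to an additive constant and a scaling) normalized by $\Var_{Q_{\eta^*_\beta}}(\Omega)$. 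Recombining the two sensitivities yields the claimed $\Hm_2=\beta\eta^*_\beta\Jm[\Qm+\eta^*_1\Vm+(1-\eta^*_1)\Wm-\Tm]\Jm$.

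The main obstacle is the second-order bookkeeping in the previous paragraph: keeping careful track of which derivatives act on $\hat Q$, which on the multiplier $\eta^*_\beta$, and which on the auxiliary tilted distribution $Q_{\mu^*}$ buried inside $\gamma(E_0,Q_1)$, and then checking that the Schur-complement elimination of the multiplier differentials collapses to exactly the two stated rank-one corrections with the right variance normalizations rather than a messier expression. The secondary (but logically prior) point is the regularity that makes ``differentiate twice'' meaningful: uniqueness of the minimizer (strict convexity of $D(\cdot\|P_1)$), strict complementary slackness $\eta^*_\beta>0$, LICQ for the single active inequality together with $\sum_xQ(x)=1$, and the second-order sufficient condition; these let the implicit function theorem give that $Q_1\mapsto(\hat Q,\eta^*_\beta,\nu^*)$ is $C^{1}$ near $P_1$, hence $E_1(E_0,\cdot)$ is $C^{2}$ there and the above formal manipulations are legitimate.
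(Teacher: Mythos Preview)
Your proposal is correct and follows essentially the same route as the paper: envelope theorem for the gradient, identification $\hat Q(P_1)=Q_{\mu^*}$ to get the zero in \eqref{eq:zeroder}, then differentiating the first-order expression once more and computing the two sensitivities $\partial\hat Q/\partial Q_1$ and $\partial Q_{\mu^*}/\partial Q_1$ by linearizing their respective KKT systems. The paper packages the KKT linearization as an application of a general sensitivity result (Proposition~\ref{prop:sensi}) and carries out the block inversion explicitly, which is exactly the Schur-complement elimination you describe; the derivation of $\partial Q_{\mu^*}/\partial Q_1$ is indeed the same computation already done for Lemma~\ref{lem:gamsen}, as you note.
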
	

\begin{proof}
The proof can be found in Appendix \ref{sec:prooflemd2}.
\end{proof}

Lemma \ref{lem:HessianE} shows that the gradient of $E_1(E_0,Q_1)$ respect to $Q_1$ is zero at $Q_1=P_1$ which is due to the choice of the threshold function $\gamma(E_0,Q_1)$. Note that in the case of the plugin classifier with the fixed threshold the gradient is non zero which results in $\frac{\partial E_1(E_0,r)}{\partial r} \bigg|_{r=0}= \infty$. Next by using a Taylor series expansion of $E_1(E_0,Q_1)$  and the relative entropy in \eqref{eq:limitE} we can get the following lemma.
\begin{lemma}\label{lem:approx}
	For any $P_0, P_1 \in \Pc(\Xc)$, the approximation 
	\begin{align}\label{eq:worstapproxopt}
	E_1(E_0,r) =  E^*_1(E_0)+   \min_{\substack{ \frac{1}{2} \thetav_{P_1}^T \Jm \thetav_{P_1}  \leq r \\ \onev^T\thetav_{P_1}=0  } }  \frac{1}{2}\thetav_{P_1}^{T} \Hm_2 \thetav_{P_1} +     o(r),
	\end{align}	
	 where 
	\begin{align}\label{eq:diffQ1}
		\thetav_{P_1}&= \Big(Q_1(1)-P_1(1),\dotsc,Q_1(|\Xc|)-P_1(|\Xc|)\Big)^T,
	\end{align}
	holds.
\end{lemma}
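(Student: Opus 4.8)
The plan is to start from the identity \eqref{eq:limitE}, $E_1(E_0,r)=\min_{D(Q_1\|P_1)\le r,\,Q_1\in\Pc(\Xc)}E_1(E_0,Q_1)$, and carry out a joint second-order expansion of the objective and the constraint around the unconstrained optimum $Q_1=P_1$. Writing $Q_1=P_1+\thetav_{P_1}$ with $\thetav_{P_1}$ as in \eqref{eq:diffQ1}, the membership $Q_1\in\Pc(\Xc)$ becomes the linear constraint $\onev^T\thetav_{P_1}=0$ together with $P_1+\thetav_{P_1}\ge0$; the latter is inactive for small $r$ because $D(Q_1\|P_1)\le r<r_c$ keeps $Q_1$ in the interior of the simplex (this uses the full-support assumption on $P_1$). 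For the objective, Lemma \ref{lem:HessianE} gives the vanishing gradient \eqref{eq:zeroder} and the Hessian $\Hm_2$, so $E_1(E_0,P_1+\thetav_{P_1})=E_1^*(E_0)+\tfrac12\thetav_{P_1}^T\Hm_2\thetav_{P_1}+o(\|\thetav_{P_1}\|^2)$, where I use $E_1(E_0,P_1)=E_1^*(E_0)$ as established in the proof of Theorem \ref{thm:E1}. For the constraint, the standard chi-square expansion of relative entropy gives $D(P_1+\thetav_{P_1}\|P_1)=\tfrac12\thetav_{P_1}^T\Jm\thetav_{P_1}+o(\|\thetav_{P_1}\|^2)$ with $\Jm=\diag(1/P_1)$, the linear term dropping out on the hyperplane $\onev^T\thetav_{P_1}=0$.

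Next I would pin down the scale of the minimizer. Since $\Jm\succ0$, for small $r$ the exact feasible set $\{Q_1:D(Q_1\|P_1)\le r\}$ is sandwiched between two Euclidean balls of radius $\Theta(\sqrt r)$ centered at $P_1$, so any minimizer $\thetav_{P_1}^*(r)$ obeys $\|\thetav_{P_1}^*(r)\|=O(\sqrt r)$ and hence the objective remainder evaluated at it is $o(r)$. The same scaling holds for the quadratically constrained program $\min\{\tfrac12\thetav^T\Hm_2\thetav:\tfrac12\thetav^T\Jm\thetav\le r,\ \onev^T\thetav=0\}$, whose optimal value is homogeneous of degree one in $r$ and therefore $O(r)$.

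The final step is to match the two programs up to $o(r)$. Given $\thetav$ feasible for the quadratic constraint at level $r$ (hence $\|\thetav\|=O(\sqrt r)$), the expansion of $D$ shows $\thetav$ is feasible for the exact constraint at level $r(1+o(1))$; conversely any exact-feasible $\thetav$ satisfies the quadratic constraint at level $r(1+o(1))$. Combining this with the degree-one homogeneity of the quadratic program's value and the $o(\|\thetav\|^2)=o(r)$ bound on the objective remainder over the $O(\sqrt r)$-sized feasible region gives \eqref{eq:worstapproxopt}.

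I expect the only real difficulty to be the uniformity in the last step: the $o(\cdot)$ remainders in the two expansions depend a priori on the direction of $\thetav_{P_1}$, and one must ensure they are uniform over the $O(\sqrt r)$-ball so that the perturbation of the feasible region and of the objective can both be absorbed into a single $o(r)$ term. A clean way to organize this is to pass to the rescaled variable $\thetav_{P_1}=\sqrt r\,\uv$: the exact program becomes $E_1^*(E_0)+r\cdot\min\{\tfrac12\uv^T\Hm_2\uv+o(1):\tfrac12\uv^T\Jm\uv+o(1)\le1,\ \onev^T\uv=0\}$ with the $o(1)$ terms uniform on the compact feasible set, and the limit of this value as $r\to0$ is handled by the continuity of the involved Hessians on a neighborhood of $P_1$ (Lemmas \ref{lem:gamsen}--\ref{lem:HessianE}) together with Berge's maximum theorem, exactly as in the proof of Lemma \ref{lem:cont}.
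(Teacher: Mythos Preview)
Your proposal is correct and follows essentially the same approach as the paper's proof: a second-order Taylor expansion of the objective $E_1(E_0,Q_1)$ using Lemma~\ref{lem:HessianE} (vanishing gradient, Hessian $\Hm_2$), the chi-square expansion of the constraint $D(Q_1\|P_1)=\tfrac12\thetav_{P_1}^T\Jm\thetav_{P_1}+o(\|\thetav_{P_1}\|^2)$, and the observation that any minimizer satisfies $\|\thetav_{P_1}^*\|=O(\sqrt r)$ so that both remainders are $o(r)$. Your treatment is in fact more careful than the paper's brief argument---the rescaling $\thetav_{P_1}=\sqrt r\,\uv$ and the appeal to Berge's theorem make the uniformity of the $o(\cdot)$ terms explicit, whereas the paper simply asserts that the constraint perturbation induces an $o(\sqrt r)$ error in $\|\thetav_{P_1}^*\|_\infty$ and that $\|\thetav_{P_1}^*\|_\infty\le c\sqrt r+o(\sqrt r)$.
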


\begin{proof}
The proof can be found in Appendix \ref{sec:prooflemd3}.
\end{proof}	

%Furthermore, at the optimum the first inequality constriant satisfies with equality since by KKT conditions  we have 
%\begin{align}
%L(\thetav, \lambda, \nu)&=\frac{1}{2}\thetav_{P_2}^{T} \Hm_2 \thetav_{P_2} + \lambda \Big(\frac{1}{2} \thetav_{P_2}^T \Jm \thetav_{P_2}+  o (\| \thetav_{P_2} \|^2_{\infty})  - R \Big) \nonumber \\
%&+\nu \big( \onev^T\thetav_{P_2} \big)
%\end{align}
%\begin{equation}
%\frac{\partial L(\thetav, \lambda, \nu) }{\partial \thetav} =0 \rightarrow \big( \Hm_2+ \lambda \Jm \big) \thetav_{P_2} + \lambda o (\| \thetav_{P_2} \|_{\infty}) \nonumber \\
%+\nu \big( \onev^T\thetav_{P_2} \big)+\nu  \onev =0
%\end{equation}
%From the complementary slackness condition, if the inequality condition is strict then $\lambda$ should be equal to zero. Assuming inequality is strict and by multipling  $\thetav_{P_2}^T$ we get
%\begin{equation}
%\thetav_{P_2}^T  \Hm_2  \thetav_{P_2}  =0.
%\end{equation}
%However, this is a contradiction, since as we prove later, $\Hm_2$ has negative eigen values, Hence by choosing the an eigenvector of $\Hm_2$ with negative eigen value and project it to the probability simplex  
We can further simplify the convex optimization problem in \eqref{eq:worstapproxopt}, using the following lemma.		
\begin{lemma}\label{lem:optwo}
For $\Jm, \Hm_2$ defined in \eqref{eq:fisher}, \eqref{eq:Hessian}, we have
\begin{equation}\label{eq:simp}
 \min_{\substack{ \frac{1}{2} \thetav_{P_1}^T \Jm \thetav_{P_1}  \leq r \\ \onev^T\thetav_{P_1}=0  } }  \frac{1}{2}\thetav_{P_1}^{T} \Hm_2 \thetav_{P_1} = \min_{\substack{ \frac{1}{2} \psiv_{P_1}^T \psiv_{P_1}  \leq r  } }  \frac{1}{2}\psiv_{P_1}^{T} \Hm \psiv_{P_1}, 
\end{equation}
where 
\begin{equation}
\Hm=\ \beta \eta^*_\beta \sqrt{\Jm} \Big [  \Qm+ \eta^*_1 \Vm +(1-\eta^*_1)\Wm    -\Tm      \Big ] \sqrt{\Jm}.
\end{equation}
\end{lemma}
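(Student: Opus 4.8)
The plan is to transform the left-hand side of \eqref{eq:simp} into the right-hand side by an invertible linear change of variables that turns the quadratic constraint into a Euclidean ball, and then to show that the resulting affine constraint coming from $\onev^T\thetav_{P_1}=0$ can be discarded.

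First I would set $\psiv_{P_1}=\sqrt{\Jm}\,\thetav_{P_1}$. Because $P_1(x)>0$ for all $x\in\Xc$, the diagonal matrix $\Jm=\diag(1/P_1)$ is positive definite, so $\sqrt{\Jm}=\diag(1/\sqrt{P_1})$ is well defined and invertible and the map $\thetav_{P_1}\mapsto\psiv_{P_1}$ is a bijection. Since $\Hm_2=\sqrt{\Jm}\,\Hm\,\sqrt{\Jm}$ (immediate on comparing the expression for $\Hm_2$ in Lemma~\ref{lem:HessianE} with the definition of $\Hm$), one obtains
\[
\thetav_{P_1}^T\Hm_2\thetav_{P_1}=\psiv_{P_1}^T\Hm\psiv_{P_1},\qquad \tfrac12\thetav_{P_1}^T\Jm\thetav_{P_1}=\tfrac12\|\psiv_{P_1}\|^2,\qquad \onev^T\thetav_{P_1}=\bm{u}^T\psiv_{P_1},
\]
where $\bm{u}=(\sqrt{P_1(1)},\dots,\sqrt{P_1(|\Xc|)})$ satisfies $\|\bm{u}\|^2=\sum_{x\in\Xc}P_1(x)=1$. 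Hence the left-hand side of \eqref{eq:simp} equals $\min\{\tfrac12\psiv^T\Hm\psiv:\ \tfrac12\|\psiv\|^2\le r,\ \bm{u}^T\psiv=0\}$, and it remains to show that deleting the constraint $\bm{u}^T\psiv=0$ does not change the optimal value.

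The key step is the identity $\Hm\bm{u}=0$. Since $\sqrt{\Jm}\,\bm{u}=\onev$, this is equivalent to $\big(\Qm+\eta^*_1\Vm+(1-\eta^*_1)\Wm-\Tm\big)\onev=0$, which I would verify term by term: $\qv^T\onev=\sum_x Q_{\mu^*}(x)=1$ gives $\Qm\onev=\qv$; similarly $\Tm\onev=\diag(Q_{\mu^*})\onev=\qv$; and $\wv^T\onev=0$, $\vv^T\onev=0$, so that $\Wm\onev=\Vm\onev=0$. The two orthogonality relations follow from the closed forms of $\wv$ and $\vv$ in Lemma~\ref{lem:HessianE} together with $D(Q_{\mu^*}\|P_0)=E_0$, $D(Q_{\mu^*}\|P_1)=E_1$ and $Q_{\eta^*_\beta}=Q_{\mu^*}$; in particular the last identity yields $\sum_x Q_{\eta^*_\beta}(x)\Omega(x)=\beta(E_0-E_1)+(1-\beta)E_0=E_0-\beta E_1$, which is precisely the centering constant subtracted in the definition of $\vv$, so $\vv^T\onev=0$. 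Summing the pieces gives $\big(\Qm+\eta^*_1\Vm+(1-\eta^*_1)\Wm-\Tm\big)\onev=\qv-\qv=0$ and hence $\Hm\bm{u}=0$.

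Finally, with $\Hm\bm{u}=0$ the constraint can be removed: given any $\psiv$ with $\tfrac12\|\psiv\|^2\le r$, write $\psiv=\psiv_\perp+(\bm{u}^T\psiv)\bm{u}$ with $\psiv_\perp\perp\bm{u}$; since $\Hm$ is symmetric and $\Hm\bm{u}=0$ the cross terms vanish, so $\psiv^T\Hm\psiv=\psiv_\perp^T\Hm\psiv_\perp$, while $\|\psiv_\perp\|\le\|\psiv\|$ shows $\psiv_\perp$ is still feasible; hence the minimum of the unconstrained problem is no smaller than that of the constrained one, and the reverse inequality is trivial, which establishes \eqref{eq:simp}. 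Both minima are attained, since the feasible sets (a closed ball, respectively its intersection with a hyperplane through the origin) are compact and the objective is continuous. I expect the only genuine computation here to be the pair of orthogonality relations $\wv^T\onev=\vv^T\onev=0$; the rest is routine bookkeeping with the change of variables, so that is the step I would check most carefully.
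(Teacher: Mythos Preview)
Your proposal is correct and follows essentially the same route as the paper: the same change of variables $\psiv_{P_1}=\sqrt{\Jm}\,\thetav_{P_1}$, the same verification that $\sqrt{\Jm}^{-1}\onev$ lies in the null space of $\Hm$ via $\Qm\onev=\Tm\onev=\qv$ and $\wv^T\onev=\vv^T\onev=0$, and the same conclusion that the affine constraint is redundant. Your final projection argument (decompose $\psiv=\psiv_\perp+(\bm{u}^T\psiv)\bm{u}$ and observe that the objective is unchanged while the norm can only decrease) is in fact a bit cleaner than the paper's version, which splits into the case where $\Hm$ has a negative eigenvalue and the trivial case where both minima are zero; your argument handles both at once.
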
	

\begin{proof}
The proof can be found in Appendix \ref{sec:prooflemd4}.
\end{proof}
	
Finally, by Lemma \ref{lem:approx} and \ref{lem:optwo} we only need to solve the optimization in \eqref{eq:simp}. Assuming $\Hm$ has negative eigenvalues which is  the case for $|\Xc| \geq 4$ by Weyl's inequality \cite{Johnson}, it can be shown that at the optimal solution $\psiv_{P_1}^*$, the inequality constraint of optimization \eqref{eq:simp} is satisfied with equality, hence
 \begin{align}
 \min_{\substack{ \frac{1}{2} \psiv_{P_1}^T \psiv_{P_1}  \leq r   } }  \frac{1}{2}\psiv_{P_1}^{T} \Hm \psiv_{P_1}  &= r  \min_{\substack{ \frac{1}{2} \psiv_{P_1}^T \psiv_{P_1}  \leq r   } }  \frac{1}{2}\frac{\psiv_{P_1}^{T} \Hm \psiv_{P_1}}{\psiv_{P_1}^T \psiv_{P_1}} \\
 &=  \Lambda_{\min}(\Hm) r
\end{align}	
where $\Lambda_{\min}(\Hm) $ is the smallest eigenvalue of $\Hm$. Taking the derivative of \eqref{eq:worstapproxopt} and setting $r=0$, we obtain
\begin{equation}
\underline{\alpha}_\beta= - \Lambda_{\min}(\Hm). 
\end{equation}
Further upper bounding the $\Lambda_{\min}(\Hm)$ using Weyl's inequality \cite{Johnson}, we get
\begin{align}
-\Lambda_{\min}(\Hm) \geq \beta \eta^*_\beta \Bigg [\frac{Q_{\eta^*_\beta}}{P_1} \Bigg ] _{(3)}
\end{align}
  where we have used the fact that  $\Qm+ \eta^*_1 \Vm +(1-\eta^*_1)\Wm$ can only have three nonzero eigenvalues as $\Qm,\Vm,\Wm$ are rank one matrices.

%%%%%%%%%%%%%%%%%%%%%%%%%%
%%%%%%%%%%%%%%%%%%%%%%%%%%
\section{Proof of Theorem \ref{thm:upper}}
\label{sec:proofth7}

By Lemma \ref{lem:inclusion} we can see that for every $\beta_1\leq \beta_2$ we have  $\Ac_0(Q_1,\beta_1) \subseteq \Ac_0(Q_1,\beta_2)$, and hence $E_1(E_0,Q_1)$ is non-increasing in $\beta$. Therefore, to find an upper bound for $\alpha^*_\beta$ we can find an upper bound to $\alpha^*=\alpha^*_{\beta=1}$ and that would be an upper bound to any test with $\beta <1$. Hence we let $\beta=1$ in the rest of the proof.

To upper bound  $\alpha^*$ in \eqref{eq:optimalalpha} we first find a lower bound to $E_1(E_0,r)$. Rewriting $E_1(E_0,Q_1)$ in the dual form we have \cite{BoroumandTran} 
\begin{align}
E_1(E_0,Q_1) =& \max_{\nu\geq 0}  -\nu \gamma(E_0,Q_1) - \log \sum_{a \in \mathcal{X} } P_{0}^{\nu}(a) P_1(a)  Q_{1}^{-\nu}(a). 
\end{align} 
Setting $\nu=\lambda^*$ where $\lambda^*$ is the optimal Lagrange multiplier in \eqref{eq:tilted} when $P_1$ is known in the test. We get 
\begin{align}\label{eq:lowerEQ}
E_1(E_0,Q_1) \geq -\lambda^* \gamma(E_0,Q_1) - \log \sum_{a \in \mathcal{X} } P_{0}^{\lambda^*}(a) P_1(a)  Q_{1}^{-\lambda^*}(a). 
\end{align} 
%it can be shown that $\gamma(E_0,Q_1)$ is twice continuously differentiable for $Q_1 \in  \Pc_{\delta_n}(\Xc) $.
 By the Taylor remainder theorem, for every $Q_1 \in \mathcal{B}(P_1,r)$ where $r<r_c$, we have 
\begin{align}
E_1(E_0,Q_1) \geq  E_1(E_0,P_1) +\frac{1}{2}  \thetav_{P_1}^T \Sigm(\tilde{Q}_1) \thetav_{P_1}, 
\end{align} 	 
where 
\begin{align}
\Sigm({Q}_1)&=-\lambda^* \Hm(Q_1) - \lambda^*(1+\lambda^*) \diag \Big(\frac{\hat{Q}_{\lambda^*}}{Q_1^2}  \Big)+(\lambda^*)^2 \uv^T \uv,\\
\Hm(Q_1) & =  \nabla_{Q_1}^2  \gamma(E_0,Q),  \\
\uv&=  \Bigg ( \frac{\hat{Q}_{\lambda^*}(1)}{Q_1(1)}, \ldots,  \frac{\hat{Q}_{\lambda^*}(k)}{Q_1(k)}  \Bigg)^T,\\
\hat{Q}_{\lambda^*}(x)&= \frac{ P_{0}^{\lambda^*}(x) P_1(x) Q_1^{-\lambda^*}(x) } {\sum_{a \in \Xc }  P_{0}^{\lambda^*}(a) P_1(a) Q_1^{-\lambda^*}(a) },
\end{align}
evaluated at  some $\tilde{Q}_1 \in  \mathcal{B}(P_1,r)$, and $\thetav_{P_1}$ is difference vector $Q_1-P_1$ defined in \eqref{eq:diffQ1}. $\Hm(Q_1)$ is the Hessian matrix of the threshold function in \eqref{eq:Hessgam}.  Note that we used the fact that the gradient of RHS in \eqref{eq:lowerEQ} evaluated at $Q_1=P_1$ is equal to zero, hence we only need to control the second order term. Letting
\begin{equation}
\Lambda (\Sigm,r)=\min_{\tilde{Q}_1 \in \Bc (P_1,r) } \Lambda_{\text{min}} \big(\Sigm(\tilde{Q}_1)\big ),
\end{equation}
where $ \Lambda_{\text{min}} \big(\Sigm(\tilde{Q}_1)\big ) $ is the smallest eigenvalue of $ \Sigm(\tilde{Q}_1)$.  We can further lower bound $E_1(E_0,Q_1)$ by
\begin{align}\label{eq:lowerE}
E_1(E_0,Q_1) \geq  E_1(E_0,P_1) +\frac{\Lambda(\Sigm,r)}{2}  \thetav_{P_1}^T \thetav_{P_1},
\end{align} 	
Substituting \eqref{eq:lowerE} in \eqref{eq:limitE} we get
\begin{equation}\label{eq:Elowerapprox}
E_1(E_0,r) \geq  \min_{\substack{D(Q_1\|P_1)\leq r \\  Q_1 \in \Pc(\Xc) } } E_1(E_0,P_1) +\frac{\Lambda(\Sigm,r)}{2}  \thetav_{P_1}^T \thetav_{P_1}.
\end{equation}
Similarly to $E_1(E_0,Q_1)$, by using a Taylor series expansion of $D(Q_1\|P_1)$ around $Q_1=P_1$ and using the remainder theorem, for every $Q_1 \in \mathcal{B}(P_1,r)$ we have 
\begin{align}
D(Q_1\|P_1)  &\geq \frac{\Lambda(\Jm,r)}{2}  \thetav_{P_1}^T  \thetav_{P_1},\label{eq:lowerkl} \\
\Jm(\tilde{Q}_1)&=\diag\bigg( \frac{1}{\tilde{Q}_1(1)},\dotsc,\frac{1}{\tilde{Q}_1(|\Xc|)}\bigg). 
\end{align}	
Therefore, by \eqref{eq:Elowerapprox}, and \eqref{eq:lowerkl}  we get 
\begin{equation}
E_1(E_0,r) \geq  E_1(E_0,P_1) +\frac{\Lambda(\Sigm,r)}{\Lambda(\Jm,r)}r. 
\end{equation}
By Weyl's inequality \cite{Johnson} we have
\begin{align}
\Lambda(\Sigm,r)\geq& \min_{\tilde{Q}_1 \in \Bc (P_1,r) } -\lambda^*  \Bigg(	\frac{\mu^*}{1+\mu^*}  	  \Bigg[\frac{Q_{\mu^*}}{\tilde{Q}_1^2}  \Bigg]_{(1)}+ \frac{1}{(1+\mu^*)} \big(\|\wv_{\tilde{Q}_1}\|_2^2 +\|\qv_{\tilde{Q}_1}\|_2^2 \big) \Bigg)  -\lambda^*(1+\lambda^*)  \Bigg [\frac{Q_{\lambda^*}}{\tilde{Q}_1^2}  \Bigg]_{(1)}\\
\geq & -\lambda^*(4+\lambda^*  )\Bigg( \frac{1}{{Q}_{1}^{\text{min}}(r)} \Bigg ) ^2
\end{align}	
where 
\begin{equation}
Q_1^{\text{min}}(r) =\min_{\substack{ \tilde{Q}_1 \in \Bc(P_1,r) \\ a\in \Xc  }} \tilde{Q}_1(a), 
\end{equation}
and we used $Q_{\lambda^*}(a), Q_{\mu^*}(a) \leq 1$ for all $a \in \Xc $ and $\mu^*\geq0$. Finally,  by $\Lambda(\Jm,r) \geq  1$ we get
\begin{equation}
E_1(E_0,r) \geq  E_1(E_0,P_1) -\lambda^*(4+\lambda^*  )\bigg( \frac{1}{{Q}_{1}^{\text{min}}(r)} \bigg ) ^2 r.
\end{equation}
By the Pinsker's inequality \cite{Cover}, we have  $\|\thetav_{P_1}\|_1 \leq \sqrt{2r}$, hence $Q_1^{\text{min}} \geq  P_1^{\text{min}} -\frac{\sqrt{2r}}{2}$,  
\begin{equation}\label{eq:upperalpha}
E_1(E_0,r) \geq  E_1(E_0,P_1) -\lambda^*(4+\lambda^*  )\Bigg( \frac{\sqrt{r}}{ P_1^{\text{min}} -\sqrt{r} } \Bigg ) ^2 .
\end{equation}
Since $E_1(E_0,r) \geq 0$ for all $r\geq0$ we can improve the lower bound to 
\begin{equation}\label{eq:upperalpha}
E_1(E_0,r) \geq \Bigg \{ E_1(E_0,P_1) -\lambda^*(4+\lambda^*  )\Bigg( \frac{\sqrt{r}}{ P_1^{\text{min}} -\sqrt{r} } \Bigg ) ^2  \Bigg \} \mathbbm{1}  \{ r \leq \bar{r} \} .
\end{equation}
where 
\begin{equation}
\bar{r}= \Bigg(\frac{\kappa P_1^{\text{min}}}{1+\kappa}   \Bigg )^2. 
\end{equation}
Next by \eqref{eq:optimalalpha}, \eqref{eq:upperalpha}, we have
\begin{align}
\alpha^*_\beta &\leq \max \Bigg \{ \max_{0 \leq r \leq r_c} \lambda^*(4+\lambda^*  )\Bigg( \frac{\mathbbm{1}   \{ r \leq \bar{r} \} }{ P_1^{\text{min}} -\sqrt{r} } \Bigg ) ^2   , \frac{E_1(E_0,P_1)}{r_c} \Bigg  \}  \\
&=    \frac{ \lambda^*(4+\lambda^*  )           }{ (P_1^{\text{min}} -\sqrt{\bar{r}} )^2 }     \\
&= \frac{\lambda^*(4+\lambda^*)(1+\kappa)   }{(P_1^{\text{min}})^2} 
\end{align}
where we have dropped the second term in the maximization as it is derived by letting $E_1(E_0,r)$ to be a straight line from $E_1(E_0,r=0)$ at $r=0$ to $0$ at $r=r_c$ and the first term in the maximization is derived by the  lower bound to $E_1(E_0,r)$, and $ \bar{r}< r_c $, hence the first term dominates. This concludes the proof.

%%%%%%%%%%%%%%%%%%%%%%%%%
%%%%%%%%%%%%%%%%%%%%%%%%%
\section{Proof of Lemma \ref{lem:cont}} \label{app:cont}
\label{sec:prooflemb1}

We need to  prove the the continuity of $E_1(E_0,r)$ on $r\in [0,r_c)$. To prove this, we will use the Berge's Maximum Theorem stated below \cite{Berge}. 

 \begin{proposition}[Berge's Maximum Theorem]
 	Let $X\subseteq \mathbb{R}^n$ and $ \Theta \subseteq \R^m$. Let $F: X\times \Theta  \rightarrow \mathbb{R}$ be a continuous function, and let $\Gc : \Theta \rightarrow X$ be a compact valued and continuous correspondence. Then the maximum value function 
 	\begin{equation}
 	V(\Theta)= \max_{X\in \Gc(\Theta)} F(X,\Theta) 
 	\end{equation}
 	is well-defined and continuous, and the optimal policy correspondence
 	\begin{equation}
     X^*(\Theta) = \{X \in \Gc(\Theta) | F(X,\Theta) = V(\Theta)\}
 	\end{equation}
 	is nonempty, compact valued, and upper hemicontinuous.
 	\end{proposition}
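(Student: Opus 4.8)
The plan is to prove the three assertions separately—finiteness and attainment of the maximum, continuity of $V$, and upper hemicontinuity of $X^*$—throughout writing $\theta$ for a generic point of $\Theta$ and $x$ for a generic point of $X$, and exploiting that a \emph{continuous} correspondence is by definition both upper hemicontinuous (u.h.c.) and lower hemicontinuous (l.h.c.), each of which admits a sequential characterization in the Euclidean setting. First I would settle well-definedness: fixing $\theta$, the set $\Gc(\theta)$ is nonempty and compact and $F(\cdot,\theta)$ is continuous on it, so by the Weierstrass extreme value theorem the supremum is attained; hence $V(\theta)$ is finite and $X^*(\theta)$ is nonempty. Since $X^*(\theta)$ is the intersection of the compact set $\Gc(\theta)$ with the closed set $\{x : F(x,\theta)=V(\theta)\}$, it is a closed subset of a compact set and therefore compact.

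Next I would establish continuity of $V$ by proving upper and lower semicontinuity separately. For the upper bound, take $\theta_k\to\theta$ and maximizers $x_k\in X^*(\theta_k)$. Because u.h.c. together with compact values makes $\Gc$ locally bounded near $\theta$, the sequence $(x_k)$ eventually lies in a fixed compact set, so a subsequence satisfies $x_{k_j}\to\bar x$; the closed-graph property of a u.h.c. compact-valued correspondence yields $\bar x\in\Gc(\theta)$, and continuity of $F$ gives $\limsup_k V(\theta_k)=\lim_j F(x_{k_j},\theta_{k_j})=F(\bar x,\theta)\le V(\theta)$. For the lower bound, pick any $x\in X^*(\theta)$; l.h.c. provides $x_k\in\Gc(\theta_k)$ with $x_k\to x$, whence $\liminf_k V(\theta_k)\ge\lim_k F(x_k,\theta_k)=F(x,\theta)=V(\theta)$. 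Combining the two inequalities shows $V$ is continuous.

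Finally I would prove upper hemicontinuity of $X^*$. Since $X^*$ is compact-valued and inherits local boundedness from $\Gc$, it suffices to verify that its graph is closed. Given $\theta_k\to\theta$ and $x_k\in X^*(\theta_k)$ with $x_k\to x$, the closed graph of $\Gc$ forces $x\in\Gc(\theta)$, while the already-established continuity of $V$ and of $F$ give $F(x,\theta)=\lim_k F(x_k,\theta_k)=\lim_k V(\theta_k)=V(\theta)$, so $x\in X^*(\theta)$, as required.

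The main obstacle is not any single estimate but the careful handling of the correspondence-theoretic machinery: justifying the local boundedness used to extract convergent subsequences of maximizers, correctly invoking the sequential forms of u.h.c. and l.h.c., and applying the closed-graph characterization of u.h.c. compact-valued correspondences. Once these facts are marshaled, the remainder is a routine application of the extreme value theorem and the joint continuity of $F$.
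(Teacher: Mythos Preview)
Your proof is correct and is essentially the standard textbook argument for Berge's Maximum Theorem. Note, however, that the paper does \emph{not} supply its own proof of this proposition: it is quoted as a known result (with a citation to Berge) and then applied as a tool in the proof of Lemma~\ref{lem:cont}. So there is no ``paper's proof'' to compare against; you have simply filled in what the paper takes for granted.

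One small presentational point in your upper-semicontinuity step: the line $\limsup_k V(\theta_k)=\lim_j F(x_{k_j},\theta_{k_j})$ is not automatic for an arbitrary convergent subsequence of $(x_k)$. The clean way is to first pass to a subsequence along which $V(\theta_k)$ attains its $\limsup$, and then extract a further subsequence on which $x_k$ converges; the rest of your argument then gives $\limsup_k V(\theta_k)\le V(\theta)$ exactly as you intend. This is a routine fix and does not affect the validity of the overall proof.
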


We first prove that correspondence $\Gc_1(Q_1) = \{Q \in \mathcal{P} (\Xc):  D(Q\|P_0)-D(Q\|Q_1)    \geq  \gamma(E_0,Q_1)\}$ is a continuous correspondence for $Q_1 \in \mathcal{P}_{\delta} (\Xc)$. Then by Berge's theorem $E_1(E_0,Q_1)$ is continuous on $Q_1$  over $\mathcal{P}_{\delta} (\Xc)$. Next, by showing the continuity of the correspondence $\Gc_2(r) = \{Q_1 \in \mathcal{P}_{\delta} (\Xc):  D(Q_1\|P_1)   \leq  r\} $, we conclude the continuity of $E_1(E_0,r)$ in $r$ by Berge's maximum theorem. So we only need to prove that $\Gc_1, \Gc_2$ are compact and continuous correspondences. Note that $Q$ and $Q_1$ are subsets of $\mathbb{R}^{|\Xc|}$ and also $\Gc_1(Q_1), \Gc_2(Q)$ are both closed and bounded, hence By Heine-Borel theorem \cite{Rudin}  both $\Gc_1, \Gc_2$ are compact. To prove the continuity of a correspondence we need to prove the upper and lower hemicontinuity of the correspondence.

\begin{definition}
	The compact valued correspondence $\Gc$ is upper hemicontinuous at $\theta \in \Theta$ if $\Gc(\theta)$ is nonempty and if, for every sequence $\{\theta^{(j)} \}$ with $\theta^{(j)} \rightarrow \theta$ and every sequence $\{X^{(j)} \}$ with $X^{(j)} \in \Gc(\theta^{(j)})$ for all $j$, there exists a convergent subsequence $\{X^{(j_k)}\}$ such that $X^{(j_k)} \rightarrow X \in \Gc(\theta)$.
\end{definition}		
To prove upper hemicontinuity of $\Gc_1(Q_1)$, fix $Q_1 \in \mathcal{P}_{\delta} (\Xc)$ and  assume for every $1\leq j$, $Q_1^{(j)} \in \mathcal{P}_{\delta} (\Xc)$ be a sequence converging to $Q_1$. Moreover, assume for every $j$ there exists a $Q^{(j)}$  such that $Q^{(j)} \in \Gc_1(Q_1^{(j)})$ for all $j$. By the definition of convergence in  metric spaces, since $Q_1^{(j)} \rightarrow Q_1$, then there exists a closed bounded set $\mathcal{Q}_2\subseteq \mathcal{P}_{\delta}$, such that for all large enough $j$, we have $Q_1^{(j)}  \in \mathcal{Q}_2 \subset \mathbb{R}^{|Xc|-1} $.  Furthermore,  since for every $Q_1^{(j)}$, $\Gc_1(Q_1^{(j)})$ is an intersection of a  half space created by the classifier and the probability simplex, $Q^{(j)} \in \Gc_1(Q_1)$ will lie in a closed and bounded subset of $\mathbb{R}^{|\Xc|-1} $. Therefore, for large enough $j$, the  tuple $(Q^{(j)},Q_1^{(j)})$ also lies in a closed and bounded subset of $\mathbb{R}^{2|\Xc|-2} $. Finally, by  the Bolzano-Weierstrass theorem \cite{Rudin}, each bounded sequence in the Euclidean space  $\mathbb{R}^{2|\Xc|-2} $  has a convergent subsequence  $(Q^{(j_k)},Q_1^{(j_k)})$ with the limit point $(Q,Q_1)$. Finally, since each element of this convergent subsequence satisfies the constraint in $\Gc_1(Q_1)$, then by showing the continuity of $\gamma(E_0,Q_1)$ in $Q_1$, we can conclude that the limit point will also satisfy $Q \in \Gc_1(Q_1)$ and gives the upper hemicontinuity.  To prove the continuity of $\gamma(E_0,Q_1)$, note that $D(P\|Q)$ is continuous in the pair for any $ P \in \mathcal{P}(\Xc)$ and $Q \in \mathcal{P}_{\delta}(\Xc)$ where $\delta >0$. Hence, by the Berge's theorem,  $\gamma(E_0,Q_1)$ is continuous in $Q_1\in \Pc_{\delta}(\Xc)$, and the optimizer $Q_{\mu}(Q_1)$ is  upper hemicontinuous. Also, since by \eqref{eq:tiltedmu}, $Q_{\mu}(Q_1)$ is single valued , we get that $Q_{\mu}(Q_1)$ is in fact continuous in $Q_1$. % [P.29 of Berg Notes]. 
\begin{definition}
	The correspondence $\Gc$ is lower hemicontinuous at $\theta \in \Theta$ if $\Gc(\theta)$   is nonempty and if, for every $X \in \Gc(\theta) $  and every sequence $\{\theta^{(j)} \}$ such that $\theta^{(j)}  \rightarrow \theta$, there is a $1\leq J$  and a sequence $\{ X^{(j)}\}$ such that $X^{(j)}  \in \Gc(\theta^{(j)})$ for all $J \leq j$ and $X^{(j)} \rightarrow X$.
	\end{definition}		

To prove the lower hemicontinuity of $\Gc_1(Q_1)$, let $Q_1 \in \mathcal{P}(\Xc)$ and $Q \in \Gc_1(Q_1)$ be fixed. Also, let  $Q_1^{(j)}  \in \mathcal{P}(\Xc) $ be a sequence converging to $Q_1$. Next we  construct the sequence $Q^{(j)}$ such that for every $j$,  $Q^{(j)} \in \Gc_1(Q_1^{(j)})$ and $Q^{(j)} \rightarrow Q$. Note that for every $Q_1^{(j)}$, the correspondence is a half space characterized by the hyperplane
\begin{equation}
\Hc^{(j)}=\{Q\in\Pc(\Xc): \big(Q-Q_{\mu}(Q_1^{(j)})\big)^T \nv^{(j)}=0\}, 
\end{equation} 
where $\nv^{(j)}$ is 
\begin{equation}
\nv^{(j)}= \Bigg(\log \frac{Q_{\mu}(Q_1^{(j)})(1)}{P_0(1)},\ldots, \log \frac{Q_{\mu}(Q_1^{(j)})(|\Xc|)}{P_0(|\Xc|)} \Bigg)^T.
\end{equation} 
Also for every $Q^{(j)}$, define the line passing $P_0$ and intersecting $\Hc^{(j)}$ as 
\begin{align}
L(Q^{(j)})&= \beta P_0 +(1-\beta)  Q_{\dagger}^{(j)},  \beta \in [0,1],\\
 Q_{\dagger}^{(j)}&= \{P: P=\beta P_0+(1-\beta)Q, \beta\in \mathbb{R}  \} \cap \Hc^{j}.
\end{align}
Moreover, for every $Q$ let $Q=\beta^* P_0 +(1-\beta^*)  Q_{\dagger}$. Finally, we can define the sequence $Q^{(j)}$ as
\begin{align}
Q^{(j)}=\beta^*P_0+ (1-\beta^*)  Q_{\dagger}^{(j)}.
\end{align}
By our construction, it is clear that  for every $j$,  $Q^{(j)} \in \Gc_1(Q_1^{(j)})$. Moreover, by continuity of $Q_{\mu}(Q_1^{(j)})$, we conclude that $\Hc^{(j)} \rightarrow \Hc$ and consequently $Q_{\dagger}^{(j)} \rightarrow Q_{\dagger} $, and $ Q^{(j)} \rightarrow Q$.

Next, we show the continuity of $\Gc(r)$. To show the upper hemicontinuity of the correspondence, we can use the same argument as the upper hemicontinuity of $\Gc_1(Q_1)$. So it only remains to prove the lower hemicontinuity. Let $ 0\leq r$ and $Q_1 \in \Pc_{\delta}(\Xc)$ be fixed. Also, let $0\leq r^{(j)}$ is a sequence converging to $r$. Construct $Q_1^{(j)}\in\Pc_{\delta}(\Xc)$ as  
\begin{equation}
Q^{(j)}_1=\bigg( \frac{r^{(j)}}{r}  \bigg) Q_1 + \bigg( 1- \frac{r^{(j)}}{r}  \bigg) P_1.
\end{equation}  
It is clear that $Q^{(j)}_1 \rightarrow Q_1$ as $r^{(j)} \rightarrow r$. Also by convexity of KL-divergence we get
\begin{equation}
D \big(Q^{(j)}_1  \| P_1 \big) \leq \frac{r^{(j)}}{r}  D(Q_1\|P_1) \leq r^{(j)}
\end{equation}
where in the last inequality we used the fact that $Q_1\in \Gc(r)$. Therefore, $Q^{(j)}_1 \in \Gc\big(r^{(j)}\big)$, and that gives the lower hemicontinuity of $\Gc(r)$, and concludes the proof.  
%Continuity of $r \in[0,r_c-\epsilon]$, and  $E^{(n)}_1(E_0,r)$ on $r \in[0,r_c-\epsilon]$ and finiteness of derivative of $E_1(E_0,r)$ when $r=0$ completes the proof.

%%%%%%%%%%%%%%%%%%%%%%%%
%%%%%%%%%%%%%%%%%%%%%%%%%

\section{Proof of Lemma \ref{lem:gamsen}}
\label{sec:prooflemd1}

By the envelope theorem \cite{Segal}, we can find the partial derivative simply by taking the derivative of Lagrangian and evaluating  at its optimal solution. Writing the Lagrangian we have
\begin{align}\label{eq:lagrange}
L(Q,Q_1, \mu, \nu)= D(Q\|Q_1) + \mu \big( D(Q\|P_0)-E_0\big ) + \nu \Big ( \sum_{x\in \Xc} Q(x)-1\Big ),
\end{align}
hence
\begin{align}
\frac{\partial \gamma}{\partial Q_1 (j)} &=\beta \frac{\partial L}{\partial Q_1(j)} \Bigg |_{Q=Q_{\mu^*}}\\
&=-\beta \frac{Q_{\mu^{*}}(j)}{Q_1(j)}. 
\end{align}
Taking the second order partial derivative, we get
\begin{align}\label{eq:secder}
\frac{\partial^{2} \gamma}{\partial Q_1 (i)Q_1(j)} =\beta \Big (-\frac{1}{Q_1(j)}\frac{\partial Q_{\mu^{*}}(j)}{\partial Q_1(i)} + \frac{Q_{\mu^*}(i)}{Q_1^2(i)} \mathbbm{1}\{i=j\}\Big ).
\end{align}
Therefore, we need to find the sensitivity of $Q_{\mu^{*}}$ ,the optimal solution of optimization \eqref{eq:clsfix},  to local changes in $Q_1$. \cite{Castil} presents a general approach to find the partial derivative of primal and dual variables solution with respect to any parameter of the optimization problem. For ease of reference, we state this result. The result in \cite{Castil}  is more general; however, we only state the version we need in our proof.
\begin{proposition}\label{prop:sensi}  
Consider the following primal non-linear programming problem:
\begin{equation}
\min_{\substack{\xv: ~g(\xv,\av)\leq 0 \\ ~~ h(\xv,\av)=0   }} z=f(\xv,\av),
\end{equation}
where $f,g,h: \mathbb{R}^n \times \mathbb{R}^p \rightarrow \mathbb{R}$, and $f,h,g \in C^2$. Let $\mu, \nu$ be the Lagrange multipliers corresponding to inequality and equality constraint. Furthermore, assume at the optimal solution the constraints are active and $\mu^*\neq0$, then

\begin{equation}
\Bigg (\frac{\partial \xv}{\partial \av} , \frac{\partial \nu }{\partial \av},  \frac{\partial \mu}{\partial \av}  , \frac{\partial z}{\partial a} \Bigg )^T= \Um^{-1} \Sm
\end{equation}
where
\begin{equation}
\Um=
\begin{pmatrix}
\Fm_{xx}& \Hm_x& \Gm_x & \bold{0}  \\
\Hm_x ^{T} & \bold{0}   & \bold{0}  & \bold{0}   \\
\Gm_x^{T} & \bold{0}  & \bold{0}  & \bold{0}  \\
\Fm_x^{T} & \bold{0} & \bold{0}   & -1   \\
\end{pmatrix}
, ~~\Sm=-
\begin{pmatrix}
\Fm_{xa} & \\
\Hm_a ^{T} &  \\
\Gm_a ^{T}&\\
\Fm_a^{T} & 
\end{pmatrix}
\end{equation}
 and $\bold{0}$ is the matrix with entries equal to zero with the corresponding dimensions and the submatrices are defined as
\begin{align}
\Fm_{xx}&=\nabla_{xx} f(x^*,a) + \mu^* \nabla_{xx} g(x^*,a) +\nu^* \nabla_{xx} h(x^*,a), \nonumber \\
\Fm_{xa}&=\nabla_{xa} f(x^*,a) + \mu^* \nabla_{xa} g(x^*,a) +\nu^* \nabla_{xa} h(x^*,a), \nonumber \\
\end{align}
where $(x^*,\mu^*,\nu^*)$ is the optimal primal and dual variable solutions. Moreover, $\Fm_x, \Fm_a, \Gm_x, \Gm_a, \Hm_x, \Hm_a$ are gradient of $f, g, h$ respect to $x, a$ evaluated at $x^*$ respectively.	
\end{proposition}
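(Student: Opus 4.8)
The plan is to prove Proposition \ref{prop:sensi} by treating the Karush--Kuhn--Tucker (KKT) optimality system together with the defining relation $z=f(\xv,\av)$ as an implicit system of equations in the unknowns $(\xv,\nu,\mu,z)$ parametrized by $\av$, and then applying the implicit function theorem to differentiate this system with respect to $\av$. Because the inequality constraint is active and $\mu^*\neq0$, strict complementarity holds at the optimum; hence in a neighbourhood of $\av$ the active set does not change and the active inequality can be carried as the equality $g(\xv,\av)=0$. The optimal primal--dual point therefore solves, for each $\av$, the $n+3$ equations
\begin{equation}
\nabla_\xv f + \mu \nabla_\xv g + \nu \nabla_\xv h = \mathbf{0},\quad h(\xv,\av)=0,\quad g(\xv,\av)=0,\quad z-f(\xv,\av)=0,
\end{equation}
in the $n+3$ unknowns $(\xv,\nu,\mu,z)$.

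First I would differentiate each of the four groups of equations with respect to $\av$ by the chain rule, evaluated at the optimal solution (the relations holding columnwise over the coordinates of $\av$). Differentiating the stationarity condition produces $\Fm_{xx}\frac{\partial \xv}{\partial \av}+\Hm_x\frac{\partial \nu}{\partial \av}+\Gm_x\frac{\partial \mu}{\partial \av}=-\Fm_{xa}$, where the second-order terms of $f$, $\mu g$ and $\nu h$ collect into $\Fm_{xx}$ and $\Fm_{xa}$ exactly as defined in the statement; differentiating $h=0$ and $g=0$ gives $\Hm_x^{T}\frac{\partial \xv}{\partial \av}=-\Hm_a^{T}$ and $\Gm_x^{T}\frac{\partial \xv}{\partial \av}=-\Gm_a^{T}$; and differentiating $z=f$ gives $\Fm_x^{T}\frac{\partial \xv}{\partial \av}-\frac{\partial z}{\partial \av}=-\Fm_a^{T}$. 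Stacking these four relations is precisely the block system $\Um\,(\frac{\partial \xv}{\partial \av},\frac{\partial \nu}{\partial \av},\frac{\partial \mu}{\partial \av},\frac{\partial z}{\partial \av})^{T}=\Sm$, with $\Um$ and $\Sm$ as displayed; the zero blocks arise because $h$ and $g$ do not depend on $\nu,\mu,z$ and because $z$ enters only the last equation with coefficient $-1$.

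Provided $\Um$ is nonsingular, left-multiplying by $\Um^{-1}$ yields the claimed identity $(\frac{\partial \xv}{\partial \av},\frac{\partial \nu}{\partial \av},\frac{\partial \mu}{\partial \av},\frac{\partial z}{\partial \av})^{T}=\Um^{-1}\Sm$. The remaining step is to justify both the invertibility of $\Um$ and the differentiability of the solution map $\av\mapsto(\xv^*,\nu^*,\mu^*,z^*)$: here I would invoke the linear independence constraint qualification of the active gradients $\{\Gm_x,\Hm_x\}$ together with the second-order sufficient conditions, which render $\Fm_{xx}$ restricted to the tangent space of the active constraints nonsingular and hence make the bordered matrix $\Um$ nonsingular. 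The implicit function theorem then guarantees that $(\xv^*,\nu^*,\mu^*,z^*)$ is $C^1$ in $\av$ and that its derivative solves the linear system above. I expect the main obstacle to be this regularity part --- establishing that $\Um$ is invertible, equivalently that the optimizer and multipliers vary smoothly and that the active set is stable under perturbation --- rather than the differentiation itself, which is mechanical once the augmented system is written down.
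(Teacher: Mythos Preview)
Your proposal is correct and, in fact, does more than the paper does: the paper does not prove Proposition~\ref{prop:sensi} at all but simply quotes it from \cite{Castil} as a known sensitivity result for parametric nonlinear programs, stating only the special case it needs. Your derivation---writing the KKT stationarity, the active constraints $h=0$ and $g=0$, and the value equation $z=f$ as an implicit system in $(\xv,\nu,\mu,z)$ and then differentiating totally in $\av$---is exactly the standard route by which such results are established, and the block structure of $\Um$ and $\Sm$ falls out precisely as you describe. The only part that needs the regularity hypotheses (LICQ for $\{\Hm_x,\Gm_x\}$ and second-order sufficiency so that $\Fm_{xx}$ is nonsingular on the tangent space) is, as you note, the invertibility of $\Um$; once that is secured the implicit function theorem delivers both the $C^1$ dependence and the claimed formula. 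So your approach is a genuine proof where the paper offers only a citation, and it is the natural one.
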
	
Using this proposition we can find $\frac{\partial Q_{\mu^{*}}}{\partial Q_1} $ in equation  \eqref{eq:secder}. Letting $x=Q$, and $a=Q_1$ we have  
	\begin{align}
	&\Fm_{QQ}=(1+\mu^*)\diag \bigg ( \frac{1}{Q_{\mu^*}(1)},\ldots,\frac{1}{Q_{\mu^*}(|\Xc|)}\bigg),\\
	&\Fm_{QQ_1}=-\diag \Big ( \frac{1}{Q_1(1)},\ldots,\frac{1}{Q_1(|\Xc|)}\Big),\\
	&\Fm_{Q}= \Bigg ( 1+\log \frac{Q_{\mu^*}(1)}{Q_1(1)},\ldots,  1+\log \frac{Q_{\mu^*}(|\Xc|)}{Q_1(|\Xc|)}    \Bigg)^T,\\
	&\Fm_{Q_1}=\Bigg (-\frac{Q_{\mu^*}(1)}{Q_1(1)},\ldots,- \frac{Q_{\mu^*}(|\Xc|)}{Q_1(|\Xc|)}    \Bigg)^T,\\
	&\Gm_{Q}=\Bigg ( 1+\log \frac{Q_{\mu^*}(1)}{P_0(1)},\ldots,  1+\log \frac{Q_{\mu^*}(|\Xc|)}{P_0(|\Xc|)}    \Bigg) ^T,\\
	&\Gm_{Q_1}=\big ( 0,\ldots,  0    \big)^T,\\
	&\Hm_{Q}=\big ( 1,\ldots, 1 \big)^T,\\
	&\Hm_{Q_1}=\big ( 0,\ldots,  0    \big)^T.
	\end{align}
Writing matrix $\Um$ as
\begin{equation}\label{}
\Um=
\begin{pmatrix}
\Am^{\tiny (|\Xc| \times |\Xc|)} & \Bm^{\tiny (|\Xc| \times 3)}  \\
\Cm^{\tiny (3\times |\Xc|)} & \Dm^{\tiny (3\times3)}
\end{pmatrix},
\end{equation} 
we have
\begin{align}
\Am=(1+\mu^*)\diag \bigg ( \frac{1}{Q_{\mu^*}(1)},\ldots,\frac{1}{Q_{\mu^*}(|\Xc|)}\bigg), 
\end{align}
\begin{align}
 \Bm=
\begin{pmatrix}
1 & 1+ \log \frac{Q_{\mu^*}(1)}{P_0(1)}& 0 \\
\vdots & \vdots & \vdots\\
1& 1+ \log \frac{Q_{\mu^*}(|\Xc|)}{P_0(|\Xc|)} & 0\\ 
\end{pmatrix},
~~ \Cm=
\begin{pmatrix}
1 & \ldots & 1\\
 1+ \log \frac{Q_{\mu^*}(1)}{P_0(1)}& \ldots &  1+ \log \frac{Q_{\mu^*}(|\Xc|)}{P_0(|\Xc|)} \\
 1+ \log \frac{Q_{\mu^*}(1)}{Q_1(1)}& \ldots &  1+ \log \frac{Q_{\mu^*}(|\Xc|)}{Q_1(|\Xc|)} 
\end{pmatrix},
~~\Dm=
\begin{pmatrix}
0 & 0 & 0\\
0 & 0 & 0\\
0 & 0 & -1
\end{pmatrix}.
\end{align}
Also writing $\Sm$ as
\begin{equation}
\Sm=
\begin{pmatrix}
\Fm ^{\tiny(|\Xc|\times |\Xc|)}\\
\Km^{\tiny (3 \times |\Xc|)}
\end{pmatrix},
\end{equation}
we have
\begin{align}
&\Fm= \diag \bigg ( \frac{1}{Q_1(1)},\ldots,\frac{1}{Q_1(|\Xc|)}\bigg),\\
&\Km=
\begin{pmatrix}
0& \ldots & 0\\
0& \ldots & 0\\
\frac{Q_{\mu^*}(1)}{Q_1(1)} & \ldots & \frac{Q_{\mu^*}(|\Xc|)}{Q_1(|\Xc|)}
\end{pmatrix}.
\end{align}
By the blockwise inversion formula, we have
\begin{align}\label{eq:blockinv}
&\Um^{-1}=
\begin{pmatrix}
\Am^{-1} +  \Am^{-1} \Bm \Mm \Cm \Am^{-1} & -\Am^{-1} \Bm \Mm    \\
- \Mm \Cm \Am^{-1}  & \Mm
\end{pmatrix},
\end{align} 
where $\Mm= \big( \Dm -\Cm\Am^{-1} \Bm \big)^{-1}$.
Since we are only interested in $\frac{\partial Q_{\mu^{*}}}{\partial Q_1} $ it suffices to find the first $|\Xc|$ rows of $\Um^{-1}$. Applying the block inversion formula to find $\Mm$ we get 
\begin{align}
\Bm \Mm  = \frac{1+\mu^*}{\Var_{Q_{\mu^*}} \big (\log \frac{Q_{\mu^*}}{P_1} \big )}
\footnotesize\begin{pmatrix}
(1+E_0) \big ( \log \frac{Q_{\mu^*}(1)}{P_0(1)} -E_0 \big )- \Var_{Q_{\mu^*}} \big (\log \frac{Q_{\mu^*}}{P_0} \big ), & - \big ( \log \frac{Q_{\mu^*}(1)}{P_0(1)} -E_0  \big ),&  0\\
\vdots & \vdots & \vdots\\
(1+E_0) \big ( \log \frac{Q_{\mu^*}(|\Xc|)}{P_0(|\Xc|)} -E_0 \big )- \Var_{Q_{\mu^*}} \big (\log \frac{Q_{\mu^*}}{P_0} \big ), & -\big ( \log \frac{Q_{\mu^*}(|\Xc|)}{P_0(|\Xc|)} -E_0  \big ),&  0
\end{pmatrix}
\end{align}
where we used that at the optimal solution $D(Q_{\mu^*}\|P_0)=E_0$. Furthermore, we get
\begin{align}
\Am^{-1} \Bm \Mm \Cm \Am^{-1}  = -\frac{1}{(1+\mu^*)} (\Qm+\Wm).
\end{align}
Finally, by the structure of $\Km$, and $\Bm \Mm$ we can easily see that
\begin{align}\label{eq:Jacob}
-\diag \Big(\frac{1}{Q_1}\Big) *\Big ( \frac{\partial {Q}_{\mu^{*}}}{\partial Q_1} \Big)  \Bigg |_{Q_1=P_1} &=- \Fm\big (\Am^{-1} +  \Am^{-1} \Bm \Mm \Cm \Am^{-1} \big ) \Fm\\
&=\frac{1}{1+\mu^*} \Jm \Bigg [ \Tm - (\Qm+\Wm)   \Bigg]\Jm.
\end{align} 
%\begin{equation}d
%\footnotesize \av=  \Bigg ( \frac{Q_{\mu^*}(1)}{Q_1(1)} \Bigg (\frac{\log\big ( \frac{Q_{\mu^*}(1)}{P_1(1)} \big )-E_0 }{\sqrt{ \Var_{Q_{\mu^*}} \big (\log \frac{Q_{\mu^*}}{P_1} \big ) }}\Bigg ), \ldots,  \frac{Q_{\mu^*}(k)}{Q_1(k)} \Bigg (\frac{\log\big ( \frac{Q_{\mu^*}(k)}{P_1(k)} \big )-E_0 }{\sqrt{ \Var_{Q_{\mu^*}} \big (\log \frac{Q_{\mu^*}}{P_1} \big ) }}\Bigg ) \Bigg)^T
%\end{equation}
%\begin{equation}
%\small \bv=  \Bigg ( \frac{Q_{\mu^*}(1)}{Q_1(1)}, \ldots,  \frac{Q_{\mu^*}(k)}{Q_1(k)}  \Bigg)^T
%\end{equation}
Substituting \eqref{eq:Jacob} into \eqref{eq:secder} we get the Hessian matrix in \eqref{eq:Hessgam}.

%%%%%%%%%%%%%%%%%%%%%%%%
%%%%%%%%%%%%%%%%%%%%%%%%%

\section{Proof of Lemma \ref{lem:HessianE}}
\label{sec:prooflemd2}

	Similar to Lemma \ref{lem:gamsen} by writing the Lagrangian and using the envelope theorem \cite{Segal} we have
	\begin{align}\label{eq:lagrangeE}
	L(Q,Q_1, \eta, \nu)= D(Q\|P_1)  + \eta \big( D(Q\|P_0)-\beta D(Q\|Q_1) +  \gamma(E_0,Q_1)  \big )+ \nu \Big ( \sum_{x\in \Xc} Q(x)-1\Big ),
	\end{align}
	hence
	\begin{equation}
	\frac{\partial E_1(E_0,Q_1)}{\partial Q_1 (j)} =\frac{\partial L}{\partial Q_1(j)} \Bigg |_{Q=\hat{Q}_{\eta^*}}= {\eta^*}\Bigg( \beta \frac{\hat{Q}_{{\eta^*}}(j)}{Q_1(j)}  +\frac{\partial \gamma}{\partial Q_1 (j)}   \Bigg),\label{eq:firstord}
	\end{equation}
	%where $\hat{Q}_{{\eta^*}}$ is the optimal solution to \eqref{eq:firstopt} which is equal to 
	%\begin{align}\label{eq:tiltedMM1}
	%Q_{\lambda^*}(x)= \frac{ P_{0}^{\frac{1-\lambda^*}{1-\lambda^*+\lambda^* \beta}}(x) \TX'^{\frac{\lambda^*\beta}{1-\lambda^*+\lambda^* \beta}}(x) } {\sum_{a \in \Xc }  P_{0}^{\frac{1-\lambda^*}{1-\lambda^*+\lambda^* \beta}}(a) \TX'^{\frac{\lambda^*\beta}{1-\lambda^*+\lambda^* 	\beta}}(a) }, ~~~0\leq\lambda^*\leq 1,
	%\end{align}

	\begin{equation}\label{eq:tiltedMM1}
	\hat{Q}_{\eta^*}(x)= \frac{ P_1^{\frac{1}{1+\eta^*-\eta^* \beta}}(x)  Q_{1}^{\frac{-\eta^* \beta }{1+\eta^*-\eta^* \beta}}(x) P_{0}^{\frac{\eta^*}{1+\eta^*-\eta^* \beta}}(x) } {\sum_{a \in \Xc } P_1^{\frac{1}{1+\eta^*-\eta^* \beta}}(a)  Q_{1}^{\frac{-\eta^* \beta }{1+\eta^*-\eta^* \beta}}(a) P_{0}^{\frac{\eta^*}{1+\eta^*-\eta^* \beta}}(a) }.
	\end{equation}
	Letting $Q_1=P_1$, we get
	\begin{equation}\label{eq:tiltbeta}
	{Q}_{\eta^*_\beta}(x)= \frac{  P_{1}^{\frac{1-\eta^*_\beta \beta }{1+\eta^*_\beta-\eta^*_\beta \beta}}(x) P_{0}^{\frac{\eta^*_\beta}{1+\eta^*_\beta-\eta^*_\beta \beta}}(x) } {\sum_{a \in \Xc } P_{1}^{\frac{1-\eta^*_\beta \beta }{1+\eta^*_\beta-\eta^*_\beta \beta}}(a) 		P_{0}^{\frac{\eta^*_\beta}{1+\eta^*_\beta-\eta^*_\beta \beta}}(a) }.
	\end{equation}	
	Furthermore, since $D(Q_{\eta^*_\beta}\|P_0)=E_0$, it is easy to see that,  ${Q}_{\eta^*_\beta}=	Q_{\lambda^*}$ in \eqref{eq:tilted}, and $\frac{\eta^*_\beta}{1+\eta^*_\beta-\eta^*_\beta \beta}=\lambda^*$. Also, by  \eqref{eq:KKTgammaLRT}  ${Q}_{\eta^*_\beta}=	Q_{\mu^*}$, and finally from  \eqref{eq:firstdergam}  we	conclude that
	\begin{align}
	\Big( \beta \frac{\hat{Q}_{\eta^*}(j)}{Q_1(j)}  +\frac{\partial \gamma}{\partial Q_1 (j)} \Big) \bigg|_{Q_1=P_1}&=\beta \frac{{Q}_{\eta^*_\beta}(j)}{P_1(j)}  -\beta \frac{Q_{\mu^*}(j)}{P_1(j)} \\
	&=0, 
	\end{align}
	 which concludes \eqref{eq:zeroder}. Next, by taking the second derivative of \eqref{eq:firstord} we get
	\begin{align}\label{eq:secordE}
	\frac{1}{\beta} \frac{\partial^2 E_1(E_0,Q_1)}{\partial Q_1(i) Q_1(j)} &=  \frac{\partial {\eta^*} }{\partial Q_1(i) }   \bigg(  \frac{\hat{Q}_{\eta^*}(j)}{Q_1(j)}  + \frac{1}{\beta}  \frac{\partial \gamma}{\partial Q_1 (j)}   \bigg)\\
	&+  {\eta^*} \bigg( \frac{1 }{Q_1(j)} \frac{\partial \hat{Q}_{\eta^*}(j)}{\partial Q_1(i)} -  \frac{\hat{Q}_{\eta^*}(i)}{Q_1^2(i)} \mathbbm{1}\{i=j\}   + \frac{1}{\beta} \frac{\partial^2 \gamma}{\partial Q_1(i) Q_1 (j)}    \bigg).
	\end{align}
	Letting $Q_1=P_1$ the first term is zero, hence we only require to find $\frac{\partial \hat{Q}_{\eta^*}(j)}{\partial Q_1(i)} \Big|_{Q_1=P_1}$. For simplicity in calculations, we write all the expressions in term of the optimal Lagrange multipliers when $\beta=1$. Since for every 		$0<\beta \leq 1$ the optimizing distribution $Q$ when $Q_1=P_1$  is the tilted distribution of $P_0$, and $P_1$ and since for every such $Q$, the optimizing distribution should satisfy the condition \eqref{eq:KKTgammaLRT},  hence the tilted exponent is equal for every $\beta$ 	and by equating the exponents in \eqref{eq:tiltbeta}  	we can define
	\begin{equation}
	\rho \triangleq  \frac{\eta^*_\beta}{\eta^*_1}=1+\eta^*_\beta-\beta \eta^*_\beta, 	
	\end{equation}
	where $\eta^*_\beta$ is the Lagrange multiplier  for aribitrary $\beta$ and $\eta^*_1$ is the Lagrange multiplier when $\beta=1$.  Using proposition \ref{prop:sensi} and letting $x=Q$, and $a=Q_1$, and setting $Q_1=P_1$, we have
	\begin{align}
	&\Fm_{QQ}^\beta= \rho \Fm_{QQ} , \quad \Fm_{QQ_1}^\beta= \rho \Fm_{QQ_1} , \\
	& \Fm_{Q}^\beta = \Fm_{Q}, \quad \Fm_{Q_1}^\beta = \Bm_Q,     \\
	&\Gm_{Q}^\beta = \beta \Gm_{Q} + (1-\beta) \Fm_{Q} , \quad \Gm_{Q_1}^\beta =\Gm_{Q_1}, \\
	&\Hm_{Q}^\beta=\Hm_{Q} \quad  \Hm_{Q_1}^\beta=\Hm_{Q},
	\end{align}	
	where
	\begin{align}
	\Fm_{QQ}&=  \diag \Bigg ( \frac{1}{Q_{\eta^*_1}(1)},\ldots,\frac{1}{Q_{\eta^*_1}(|\Xc|)}\Bigg),\\
	\Fm_{QQ_1}&= \lambda^*_1  \diag \Big ( \frac{1}{P_1(1)},\ldots,\frac{1}{P_1(|\Xc|)}\Big),\\
	\Fm_{Q}&= \Bigg ( 1+\log \frac{Q_{\eta^*_1}(1)}{P_1(1)},\ldots,  1+\log \frac{Q_{\eta^*_1}(|\Xc|)}{P_1(|\Xc|)}    \Bigg)^T,\\
	\Fm_{Q_1}&= \big ( 0,\ldots,  0    \big)^T,     \\
	\Bm_{Q}&=\Bigg ( 1+\log \frac{Q_{\eta^*_1}(1)}{P_0(1)},\ldots,  1+\log \frac{Q_{\eta^*_1}(|\Xc|)}{P_0(|\Xc|)}    \Bigg)^T,\\
	\Gm_{Q}&=\Bigg ( \log \frac{P_1(1)}{P_0(1)},\ldots,  \log \frac{P_1(|\Xc|)}{P_0(|\Xc|)}    \Bigg) ^T,\\
	\Gm_{Q_1}&= \Bigg ( \bigg( \frac{\hat{Q}_{\eta^*_1}(1)}{Q_1(1)}  +\frac{\partial \gamma}{\partial Q_1 (1)} \bigg) \Bigg|_{Q_1=P_1}    ,\ldots, \bigg( \frac{\hat{Q}_{\eta^*_1}(|\Xc|)}{Q_1(|\Xc|)}  +\frac{\partial \gamma}{\partial Q_1 (|\Xc|)} \bigg) \Bigg|_{Q_1=P_1}    \Bigg)^T \\
	& =\big ( 0,\ldots,  0    \big)^T,\\
	\Hm_{Q}&=\big ( 1,\ldots, 1 \big)^T,\\
	\Hm_{Q_1}&=\big ( 0,\ldots,  0    \big)^T.
	\end{align}
	Similar to the  Lemma \ref{lem:gamsen} we can write $\Um$ as 
	\begin{equation}\label{}
	\Um_\beta =
	\begin{pmatrix}
	\Am^{\tiny (|\Xc| \times |\Xc|)}_\beta & \Bm^{\tiny (|\Xc| \times 3)} _\beta \\
	\Cm^{\tiny (3\times |\Xc|)}_\beta & \Dm^{\tiny (3\times3)}_\beta
	\end{pmatrix},
	\end{equation} 
	where
	\begin{align}
	\Am_\beta&=\rho \Am_1\\ 
	\Bm_\beta&=\beta \Bm_1 + (1-\beta) \Em_0\\
	\Cm_\beta &= \beta \Cm_1 +(1-\beta) \Lm_0\\
	\Dm_\beta&=\Dm_1
	\end{align}
	and
	  \begin{align}
	  \Am_1=\diag \Bigg ( \frac{1}{Q_{\eta^*_1}(1)},\ldots,\frac{1}{Q_{\eta^*_1}(|\Xc|)}\Bigg), \quad
	  \Dm_1=
	  \begin{pmatrix}
	  0 & 0 & 0\\
	  0 & 0 & 0\\
	  0 & 0 & -1
	  \end{pmatrix},
	  \end{align}
	  \begin{align}
	  \Bm_1=
	  \begin{pmatrix}
	  1 & \log \frac{P_1(1)}{P_0(1)}& 0 \\
	  \vdots & \vdots & \vdots\\
	  1& \log \frac{P_1(|\Xc|)}{P_0(|\Xc|)} & 0\\ 
	  \end{pmatrix}, \quad
	  \Em_0=
	  \begin{pmatrix}
	  1 &1+ \log \frac{Q_{\eta_1^*} (1)}{P_0(1)}& 0 \\
	  \vdots & \vdots & \vdots\\
	  1& 1+\log \frac{Q_{\eta_1^*} }{P_0(|\Xc|)} & 0\\ 
	  \end{pmatrix},
	   \end{align}
	   \begin{align}
	  \Cm_1=
	  \begin{pmatrix}
	  1 & \ldots & 1\\
	  \log \frac{P_1(1)}{P_0(1)}& \ldots &   \log \frac{P_1(|\Xc|)}{P_0(|\Xc|)} \\
	  1+ \log \frac{Q_{\eta_1^*}(1)}{P_1(1)}& \ldots &  1+ \log \frac{Q_{\eta_1^*}(|\Xc|)}{P_1(|\Xc|)} 
	  \end{pmatrix}, \quad
	 \Lm_0=
	  \begin{pmatrix}
	  1 & \ldots & 1\\
	 1+ \log \frac{ Q_{\eta^*_1}(1)   }{P_0(1)}& \ldots &   1+\log \frac{Q_{\eta^*_1}(1)}{P_0(|\Xc|)} \\
	  1+ \log \frac{Q_{\eta_1^*}(1)}{P_1(1)}& \ldots &  1+ \log \frac{Q_{\eta_1^*}(|\Xc|)}{P_1(|\Xc|)} 
	  \end{pmatrix},
	  \end{align}
	  Also writing $\Sm$ as
	  \begin{equation}
	  \Sm=
	  \begin{pmatrix}
	  -\Fm_{QQ_1}^\beta \\
	  \Km
	  \end{pmatrix},
	  \end{equation}
	  we have
	  \begin{align}
	  \Km^{\tiny (3 \times |\Xc|)}&=
	  \begin{pmatrix}
	  0& \ldots & 0\\
	  0& \ldots & 0\\
      	0& \ldots & 0
	  \end{pmatrix}.
	  \end{align}
By block inversion formula in \eqref{eq:blockinv}, and similar arguments as in Lemma \ref{lem:gamsen} we have
%\begin{align}
%&\Bm_\beta  \big( \Dm_\beta -\Cm_\beta \Am_\beta^{-1} \Bm_\beta \big)^{-1}  = \frac{1}{\Var_{Q_{\lambda^*}} \big (\log \frac{P_1}{P_0} \big )}\footnotesize 
%\begin{pmatrix}
%\gamma \Big ( \log \frac{P_1(1)}{P_0(1)} -\gamma \Big )- \Var_{Q_{\lambda^*}} \Big (\log \frac{P_1}{P_0} \Big ), & - \Big ( \log \frac{P_1(1)}{P_0(1)} -\gamma  \Big ),&  0\\
%\vdots & \vdots & \vdots\\
%\gamma \Big ( \log \frac{P_1(|\Xc|)}{P_0(|\Xc|)} -\gamma \Big )- \Var_{Q_{\lambda^*}} \Big (\log \frac{P_1}{P_0} \Big ), & -\Big ( \log \frac{P_1(|\Xc|)}{P_0(|\Xc|)} -\gamma  \Big ),&  0
%\end{pmatrix}
%\end{align}
\begin{align}
\Am_\beta ^{-1} \Bm_\beta  \big( \Dm_\beta -\Cm_\beta \Am_\beta^{-1} \Bm_\beta \big)^{-1} \Cm_\beta \Am_\beta^{-1} = -\frac{1}{\rho} \big (\Tm^2 +\Vm \big ),
\end{align}
where $\Vm=\vv \vv^T$ and 
\begin{align}
\vv=\frac{1}{\sqrt{\Var{Q_{\eta^*_\beta} }(\Omega)   }}    \Big ( Q_{\eta^*_\beta}(1)\Omega(1),\ldots,Q_{\eta^*_\beta}(|\Xc|)\Omega(|\Xc|)   \Big ),
\end{align}
\begin{align}
\Omega(i)=\beta \log \frac{P_1(i)}{P_0(i)} +(1-\beta)  \log \frac{Q_{\eta^*_\beta}(i)}{P_0(i)}, i \in \Xc,
\end{align}
Finally, by the structure of $\Sm$, we can easily derive
\begin{align}\label{eq:Jacob2}
\diag \Big(\frac{1}{P_1}\Big) *\Big ( \frac{\partial \hat{Q}_{\eta^{*}}}{\partial Q_1} \Big)  \Bigg |_{Q_1=P_1} &=\Jm \Big [ -\Tm + (\Tm^2+\Vm)   \Big]\Jm. 
\end{align} 
%we used \eqref{eq:threshsol}, and $\gamma=\gamma(E_0,P_1)$. Since $\Km$ is all zero matrix then we have  
%\begin{align}\label{eq:Jacob2}
%\diag\Big(\frac{\lambda^*}{P_1}\Big) *\Big ( \frac{\partial Q_{\mu^{*}}}{\partial Q_1} \Big)&= -\Fm\big (\Am^{-1} +  \Am^{-1} \Bm \Mm \Cm \Am^{-1} \big ) \Fm\\
%&=-(\lambda^*)^2 \bigg [  \diag \bigg(\frac{Q_{\lambda^*}}{P_1^2}  \bigg) - \cv \cv^T-\dv\dv^T   \bigg]
%\end{align} 
%where
%\begin{equation}
%\footnotesize \cv=  \Bigg ( \frac{Q_{\lambda^*}(1)}{P_2(1)} \Bigg (\frac{\log\big ( \frac{P_2(1)}{P_1(1)} \big )-\gamma }{\sqrt{ \Var_{Q_{\mu^*}} \big (\log \frac{P_2}{P_1} \big ) }}\Bigg ), \ldots,  \frac{Q_{\lambda^*}(k)}{P_2(k)} \Bigg (\frac{\log\big ( \frac{P_2(k)}{P_1(k)} \big )-\gamma }{\sqrt{ \Var_{Q_{\lambda^*}} \big (\log \frac{P_2}{P_1} \big ) }}\Bigg ) \Bigg)^T
%\end{equation}
%\begin{equation}
%\small \dv=  \Bigg ( \frac{Q_{\lambda^*}(1)}{P_2(1)}, \ldots,  \frac{Q_{\lambda^*}(k)}{P_2(k)}  \Bigg)^T
%\end{equation}
Substituting \eqref{eq:Jacob2} and \eqref{eq:Hessgam}  into \eqref{eq:secordE}, we get 
\begin{align}
\Hm_2= \rho \eta^*_1  \Jm \Big [   -\Tm + \Qm+  \eta^*_1 \Vm +(1-\eta^*_1)\Wm        \Big ] \Jm,
\end{align}
%
%
%\begin{align}
%& \frac{\partial^{2} E_1}{\partial Q_1(i) \partial Q_1(j) }\Bigg |_{Q_1=P_1}= -\lambda^*(1+\lambda^*) \frac{Q_{\lambda^*}(i)}{P_1^2(i)} \mathbbm{1}\{i=j\}\nonumber +\lambda^* (\Hm_1)_{i,j}\\ &+\frac{(\lambda^*)^2}{\Var_{Q_{\lambda^*}}\Big(\log \frac{P_1}{P_0}\Big)} \frac{Q_{\lambda^*}(i)Q_{\lambda^*}(j)}{P_1(i)P_1(j)} \Bigg[  \Var_{Q_{\lambda^*}}\Bigg(\log \frac{P_1}{P_0}\Bigg) +\Bigg( \log \Bigg(\frac{P_1(i)}{P_0(i)} \Bigg) -\gamma \Bigg ) \Bigg ( \log \Bigg(\frac{P_1(j)}{P_0(j)}\Bigg) -\gamma \Bigg ) \Bigg], \label{eq:HessianE}
%\end{align}
%where
%\begin{equation}
%\Hm_1=\Hm_1(Q_1=P_1).
%\end{equation}
%Note that when $Q_1=P_1$, by \eqref{eq:KKTgammaLRT}, $\lambda^*=\frac{\mu^*}{1+\mu^*}$. Substituting $\Hm_1$ into \eqref{eq:HessianE} we obtain the Hessian matrix in \eqref{eq:Hessian}.
where we have used the identity $\eta_1^*=\frac{\mu^*}{1+\mu^*}$ derived by setting the tilted distribution exponents to be equal for when $Q_1=P_1$.

%%%%%%%%%%%%%%%%%%%%%%%%%%%
%%%%%%%%%%%%%%%%%%%%%%%%%%%

    \section{Proof of Lemma \ref{lem:approx}}
	\label{sec:prooflemd3}
	
	By applying a Taylor expansion to $E_1(E_0,Q_1)$ around $Q_1=P_1$ we obtain
	\begin{align}\label{eq:linearapprox1}
	E_1(E_0,Q_1)=E_1(E_0,P_1) +   \thetav_{P_1}^{T}   \nabla E_1(E_0,Q_1)\big|_{Q_1=P_1}   + \frac{1}{2}\thetav_{P_1}^{T} \Hm_2 \thetav_{P_1}   + o(\| \thetav_{P_2} \|_{\infty}^{2}).
	\end{align}
   The first term in the expansion is  $E^*_1(E_0)$, also by Lemma \ref{lem:HessianE}, the gradient evaluated at $Q_1=P_1$ is zero.  Further approximating the constraint in \eqref{eq:secondopt}, we get 
	\begin{equation}\label{eq:KLapprox}
	D(Q_1 \| P_1 ) = \frac{1}{2}  \thetav_{P_1}^T \Jm \thetav_{P_1} + o (\| \thetav_{P_1} \|^2_{\infty}).
	\end{equation}
    By substituting the expansions \eqref{eq:linearapprox1} and \eqref{eq:KLapprox} in  \eqref{eq:limitE} we obtain
	\begin{align}\label{eq:approxworst}
	 E_1 (E_0,r)= E^*_1(E_0)+ \min_{\substack{ \frac{1}{2} \thetav_{P_1}^T \Jm \thetav_{P_1}+  o (\| \thetav_{P_1} \|^2_{\infty})  \leq r \\ \onev^T\thetav_{P_1}=0  } }  \frac{1}{2}\thetav_{P_1}^{T} \Hm_2 \thetav_{P_1}  + o(\|\thetav_{P_1} \|_{\infty}^2).
	\end{align}
	To find the error of above approximation  as a function of $r$, first note that we can take $o(\|\thetav_{P_1} \|_{\infty}^2)$ out of minimization and substitute it with $o(\|\thetav^*_{P_1}(r) \|_{\infty}^2)$ where $\thetav^*_{P_1} $ is the optimal solution of the minimization. Moreover, approximating the inequality constraint can result an error of $o(\sqrt{r})$ in $\|\thetav^*_{P_1}\|_{\infty}$. Therefore, by inequality constraint and the fact that it impose a constraint on the length of the vector $\thetav$ we have that $\|\thetav^*_{P_1}\|_{\infty} \leq c\sqrt{r}+o(\sqrt{r})$ where $c$ is  independent from $r$ and only depends on $\Jm, \Hm_2$.  This argument and  \eqref{eq:approxworst}  concludes \eqref{eq:worstapproxopt}.
	
%%%%%%%%%%%%%%%%%%%%%%%%%%%
%%%%%%%%%%%%%%%%%%%%%%%%%%%

	 \section{Proof of Lemma \ref{lem:optwo}}
	 \label{sec:prooflemd4}

	Since $\Jm$ is a diagonal matrix with non zero diagonal entries, we have
		\begin{align}
	\sqrt{\Jm} &=\diag\bigg( \frac{1}{\sqrt{P_1(1)}},\dotsc,\frac{1}{\sqrt{P_1({|\Xc|})}}\bigg).
	\end{align}
	Letting $\psiv_{P_1}= \sqrt{\Jm} \thetav_{P_1}$, we obtain
	\begin{align}
	 \frac{1}{2}\thetav_{P_1}^{T} \Hm_2 \thetav_{P_1} &= \frac{1}{2}\psiv_{P_1}^{T} \Hm \psiv_{P_1}, \\
	\frac{1}{2} \thetav_{P_1}^T \Jm \thetav_{P_1}&= \frac{1}{2} \psiv_{P_1}^T  \psiv_{P_1},\\
	\onev^T\thetav_{P_1}&= \onev^T  \sqrt{\Jm}^{-1}  \psiv_{P_1}.
	\end{align}
	Next we need to show that we can drop the equality constraint. Note that $\sigv_1 \defeq \frac{\sqrt{\Jm}^{-1} \onev}{\| \sqrt{\Jm}^{-1} \onev\|} $ is in the null space of $\Hm$, i.e.,
	\begin{align}
	\Hm\sqrt{\Jm}^{-1} \onev = \beta \eta^*_\beta \sqrt{\Jm} \Big [  \Qm+ \eta^*_1 \Vm +(1-\eta^*_1)\Wm    -\Tm      \Big ] \onev=\zerov
	\end{align}
since $\vv^T \onev=0, \wv^T \onev=0,  \Tm  \onev=\qv$. Now, assume the optimizer of the second optimization in \eqref{eq:simp} is of the form 
\begin{equation}\label{eq:sol}
\psiv_{P_1}=\rho_1 \sigv_1 + \rho_2 \sigv_2, 
\end{equation}
where 	$\sigv_1 \perp \sigv_2, \| \sigv_2\|=1$. Then
\begin{align}
\psiv_{P_1}^{T} \Hm \psiv_{P_1}&=  (\rho_1\sigv_1 +\rho_2 \sigv_2)^{T} \Hm (\rho_1\sigv_1 +\rho_2 \sigv_2)\\
&= \rho_2^2  \sigv_2^T  \Hm \sigv_2
\end{align}
where we also used $ \Hm \sigv_1 =0, \Hm=\Hm^T$. Assuming $\Hm$ has negative eigenvalues (otherwise, both optimizations are equal to zero), the inequality constraint should be satisfied with equality. Then 
\begin{align}
\psiv_{P_1}^{T} \Hm \psiv_{P_1}  = \Big (2r-\rho_1^2 \Big)   \sigv_2^T  \Hm \sigv_2.
\end{align}
Therefore, to achieve the minimum, $\rho_1$ must be zero, which concludes the proof.

%%%%%%%%%%%%%%%%%%%%%%%%%%%%%%%
%%%%%%%%%%%%%%%%%%%%%%%%%%%%%%%

 \section{Proof of Theorem \ref{thm:Stein}}
\label{sec:proofth41}

By Theorem \ref{thm:E1}, it can be shown that by taking  $E_0=\Theta({n}^{-1})$ there exists an $\alpha$ such that the proposed classifier \eqref{eq:clsfix} achieves $E_1=D(P_0\|P_1)$ which is equal to the Stein regime exponent of hypothesis testing with known distributions, and hence the \eqref{eq:Stein1trade} is equal to $D(P_0\|P_1)$. In fact, there is no need for a training sequence to achieve the \eqref{eq:Stein1trade}. Since Hoeffding's test achieves the optimal error exponent tradeoff only by having the distribution $P_0$, it is easy to see that the Hoeffding test achieves the Stein regime exponent for the unknown distribution for any $P_1$. However, to achieve the largest error exponent under $P_0$ such that it guarantees that for any $P_1$ the type-\RNum{2} probability of error is bounded by some $\epsilon \in (0,1)$, Hoeffding's and our proposed classifier are not universal, since for any choice of threshold $E_0>0$, the type-\RNum{2} probability of error for any distribution $P_1$ such that $D(P_1\|P_0)<E_0$ converges to one. We show that Gutman's universal test \cite{gutman1989asymptotically} achieves the largest type-\RNum{1} error exponent while the type-\RNum{2} probability of error is bounded away from one. Using the Gutman's test we obtain 
 \begin{equation}
 D_{\alpha}^{\rm GJS} (\Tx\|\TX) \leq \frac{1}{2n} G^{-1}_{ |\Xc |-1}(\epsilon), 
 \end{equation}
 where
 \begin{equation}
  D_{\alpha}^{\rm GJS} (Q\|P) =D\bigg(Q\bigg  \| \frac{Q+\alpha P}{1+\alpha}\bigg) + \alpha D\bigg(P \bigg \|\frac{Q+\alpha P}{1+\alpha}\bigg),
 \end{equation}  
 is the generalized Jensen-Shannon divergence and $ G^{-1}_a(.)$ is the inverse of the complementary CDF of a chi-squared random variable with $a$ degrees of freedom; \cite{Tan} shows that Gutman test with the chosen threshold achieves type-\RNum{2} error probability of $\epsilon$ for any $P_1$ and achieves the type-\RNum{1} error exponent $D_{\frac{\alpha}{1+\alpha}}(P_1\|P_0)$. We prove the converse showing that the achievable error exponent using Gutman's test is, in fact the Stein's regime exponent defined in \eqref{eq:Stein1trade}. Note that as $\alpha \rightarrow \infty $, i.e., when the  number of training samples is  much larger than the number of test samples, then the exponent $D_{\frac{\alpha}{1+\alpha}}  (P_1\|P_0)$  converges to $D(P_1\|P_0)$  which is the Stein regime error exponent under when both distributions are known.
We prove the asymptotic optimality of Gutman's test in this setting. Specifically, we show that for any test such that the type-\RNum{1}  error exponent $E_0^{(\epsilon)}>  D_{\frac{\alpha}{1+\alpha}}(P_1\|P_0) $, there exists a $P_1$ such that  type-\RNum{2} probability of error of the test goes to one, i.e.,  for any test $\phi_n$ such that 
	\begin{equation}	
	\limsup_{n\rightarrow \infty} \epsilon_1 (\phi_n) \leq \epsilon
	\end{equation}	
	for all $P_1\in \Pc(\Xc)$ then
	\begin{equation}\label{converse}	
	\lim_{n\rightarrow \infty} E_0^{(\epsilon)}(\phi_n) \leq D_{\frac{\alpha}{1+\alpha}}  (P_1\|P_0).
	\end{equation}	

First, by the lemma 6 in \cite{Tan}, any optimal test can be converted to a test based on samples type, and the error probabilities of such a test only change by a constant factor. Hence, the converted type-based test is asymptotically optimal, and we can limit the classifiers to type-based ones when studying the error exponent. Next, using the similar idea to \cite{Tan} we show that in order to have a type-\RNum{2} error probability bounded away form one for every $P_1$, the classifier necessarily needs to decide in favor of the second hypothesis when $\| \Tx-\TX\| \leq \delta$, for $\delta>0$, since under $P_1$ the types of the test and training sequence will converge to $P_1$. Also. we only look at deterministic test as it can be shown that randomizing the test cannot increase the error exponents.  We have the following lemma.

\begin{lemma}\label{lem:concentration}
	Let $\epsilon \in (0,1)$ and $\X^k, \xv^n$ be two independent  i.i.d sequences, generated by distribution $P$. Then
	\begin{equation}
	P \Big ( \max_{a \in \mathcal{X}} \Big |\TX(a)- \Tx(a)\Big| \geq \epsilon \Big ) \leq 2 |\mathcal{X}|e^{-n \frac{\epsilon^2}{2}} +2 |\mathcal{X}|e^{-\alpha n \frac{\epsilon^2}{2}} .
	\end{equation}
\end{lemma}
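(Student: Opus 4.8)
The plan is to reduce the statement to two applications of the union bound together with a triangle inequality and Hoeffding's inequality applied coordinate-wise to the two empirical types.

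First I would fix a symbol $a\in\mathcal{X}$ and observe that $\Tx(a)=\frac1n\sum_{i=1}^n \mathbbm{1}\{x_i=a\}$ is the average of $n$ i.i.d.\ Bernoulli random variables with mean $P(a)$, while $\TX(a)=\frac1k\sum_{i=1}^k \mathbbm{1}\{X_i=a\}$ is the average of $k=\alpha n$ i.i.d.\ Bernoulli$(P(a))$ random variables, and the two families are independent by hypothesis. Hoeffding's inequality for sums of independent random variables bounded in $[0,1]$ then gives
\begin{equation}
P\Big(\big|\Tx(a)-P(a)\big|\geq \tfrac{\epsilon}{2}\Big)\leq 2e^{-n\epsilon^2/2},\qquad
P\Big(\big|\TX(a)-P(a)\big|\geq \tfrac{\epsilon}{2}\Big)\leq 2e^{-\alpha n\epsilon^2/2}.
\end{equation}

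Next, by the triangle inequality the event $\{|\TX(a)-\Tx(a)|\geq\epsilon\}$ is contained in $\{|\TX(a)-P(a)|\geq\epsilon/2\}\cup\{|\Tx(a)-P(a)|\geq\epsilon/2\}$, so a union bound yields $P(|\TX(a)-\Tx(a)|\geq\epsilon)\leq 2e^{-n\epsilon^2/2}+2e^{-\alpha n\epsilon^2/2}$ for each $a$. A second union bound over the $|\mathcal{X}|$ symbols, using $\{\max_{a}|\TX(a)-\Tx(a)|\geq\epsilon\}=\bigcup_{a\in\mathcal{X}}\{|\TX(a)-\Tx(a)|\geq\epsilon\}$, then gives the claimed inequality $2|\mathcal{X}|e^{-n\epsilon^2/2}+2|\mathcal{X}|e^{-\alpha n\epsilon^2/2}$.

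There is no genuine obstacle in this argument; the only point requiring a little care is splitting the deviation symmetrically as $\epsilon=\epsilon/2+\epsilon/2$ so that Hoeffding's exponent $2m(\epsilon/2)^2=m\epsilon^2/2$ comes out with the stated constant, once with $m=n$ and once with $m=k=\alpha n$. Alternatively one could apply a Chernoff bound directly to the binomial counts $N(a|\xv)$ and $N(a|\X)$, or invoke McDiarmid's bounded-differences inequality, but the Hoeffding route is the most direct.
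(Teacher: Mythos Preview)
Your proof is correct and essentially identical to the paper's: both insert $P(a)$ via the triangle inequality, split the deviation as $\epsilon/2+\epsilon/2$, apply Hoeffding's inequality to each empirical type, and union-bound over $a\in\mathcal{X}$. The only cosmetic difference is that the paper takes the maximum over $a$ first and then bounds, whereas you fix $a$ first and union-bound at the end; the resulting exponents and constants match exactly.
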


\begin{proof}
	
	By the triangle inequality, union bound, and the Hoeffding's inequality \cite{gabor}, we have
	
	\begin{align}
	P\Big (  \max_{a \in \mathcal{X}} \big |\TX(a)- \Tx(a) \big | \geq \epsilon \Big ) &=P\big( \max_{a \in \mathcal{X}}\big |\TX(a)-P(a) + P(a)- \Tx(a) \big| \geq \epsilon \Big ) \\
	&\leq P\Big (  \max_{a \in \mathcal{X}}  \big|\TX(a)-P(a)\big| + \big| \Tx(a)-P(a)\big|  \geq \epsilon \Big ) \\
	&\leq P \Big(  \max_{a \in \mathcal{X}}  \big|\TX(a)-P(a)\big|  + \max_{a \in \mathcal{X}} \big | \Tx(a)-P(a)\big|  \geq \epsilon \Big ) \\
	&\leq P \Big(  \max_{a \in \mathcal{X}}  \big|\TX(a)-P(a)\big| \geq \frac{\epsilon}{2} \cup \max_{a \in \mathcal{X}} \big | \Tx(a)-P(a)\big|  \geq \frac{\epsilon}{2} \Big)\\
	&\leq P \Big(  \max_{a \in \mathcal{X}} \big |\TX(a)-P(a)\big| \geq \frac{\epsilon}{2} \Big ) + P\Big (\max_{a \in \mathcal{X}} \big | \Tx(a)-P(a)\big|  \geq \frac{\epsilon}{2} \Big)\\
	& \leq  \sum_ {a \in \mathcal{X}} P \Big( \big |\TX(a)-P(a)\big | \geq \frac{\epsilon}{2} \Big )+ \sum_ {a \in \mathcal{X}} P\Big (\big | \Tx(a)-P(a)\big|  \geq \frac{\epsilon}{2} \Big)\\
	&  \leq 2 |\mathcal{X}|e^{-2n (\frac{\epsilon}{2})^2} +2 |\mathcal{X}|e^{-2k (\frac{\epsilon}{2})^2}  \label{hoef},
	\end{align}	
	which concludes the lemma.
\end{proof}

\begin{lemma}\label{lem:theconv}
	
	Let $(Q,Q_1) \in \Pc_{n}(\Xc)\times \Pc_k(\Xc) $ satisfy
	\begin{equation}\label{eq:lemmacondstein}
	\max_{a \in \mathcal{X}} \Big \{ \big |Q(a)-Q_1(a) \big | \Big \} \leq  \sqrt{\frac{2}{(\alpha\wedge 1)n} \log \frac{4|\Xc|}{1-\epsilon}}.
	\end{equation} 
	Then for any type based test $\phi_n(\Tx,\TX)$  such that for all distributions $\tilde{P}_1 \in \Pc(\Xc)$,
	\begin{equation}
	\epsilon_1\big(\phi_n(\Tx,\TX)\big) < \epsilon, \quad\epsilon \in(0,1), 
	\end{equation}
	we have $\phi_n(Q,Q_1)=1$. %\geq 1-\zeta_n$ for some $\zeta_n \in (0,1)$such that $\limsup_{n \rightarrow \infty}  \zeta_n \neq 1$.
\end{lemma}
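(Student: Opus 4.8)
The plan is to prove the contrapositive: I assume there is a type-based test $\phi_n$ with $\epsilon_1(\phi_n\mid P_0,\tilde{P}_1)<\epsilon$ for every $\tilde{P}_1\in\Pc(\Xc)$ and yet $\phi_n(Q,Q_1)=0$ for some pair $(Q,Q_1)$ satisfying \eqref{eq:lemmacondstein}, and I derive a contradiction by exhibiting a single $\tilde{P}_1$ under which the type-\RNum{2} error is at least $\epsilon$. The natural witness is a distribution lying between $Q$ and $Q_1$, for instance the midpoint $\tilde{P}_1=\tfrac12(Q+Q_1)$, so that both $Q$ and $Q_1$ are within $\delta_n\triangleq\sqrt{\tfrac{2}{(\alpha\wedge 1)n}\log\tfrac{4|\Xc|}{1-\epsilon}}$ of $\tilde{P}_1$ in $\ell_\infty$, and so that when the test and training sequences are both drawn from $\tilde{P}_1$ their empirical types $\Tx$ and $\TX$ concentrate around $\tilde{P}_1$.

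The key quantitative ingredient is Lemma \ref{lem:concentration} applied with $P=\tilde{P}_1$ and $\epsilon'=\delta_n$. The specific form of $\delta_n$ in \eqref{eq:lemmacondstein} is calibrated so that $\tfrac{n\delta_n^2}{2}\ge\log\tfrac{4|\Xc|}{1-\epsilon}$ and $\tfrac{\alpha n\delta_n^2}{2}\ge\log\tfrac{4|\Xc|}{1-\epsilon}$, whence each exponential in Lemma \ref{lem:concentration} is at most $\tfrac{1-\epsilon}{4|\Xc|}$, giving
\begin{equation}
\PP_{\tilde{P}_1}\!\Big(\max_{a\in\Xc}\big|\TX(a)-\Tx(a)\big|\le\delta_n\Big)\ \ge\ 1-2|\Xc|\cdot\tfrac{1-\epsilon}{4|\Xc|}-2|\Xc|\cdot\tfrac{1-\epsilon}{4|\Xc|}\ =\ \epsilon .
\end{equation}
Thus, with probability at least $\epsilon$ the realized pair $(\Tx,\TX)$ is a $\delta_n$-close pair; moreover, since both types are simultaneously within $O(\delta_n)$ of $\tilde{P}_1$, hence of $Q$ and of $Q_1$, this event lies in the neighbourhood of $(Q,Q_1)$ on which the decision of the type-based test is pinned down.

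The final step is to conclude $\epsilon_1(\phi_n\mid P_0,\tilde{P}_1)=\PP_{\tilde{P}_1}\big[(\Tx,\TX)\in\Ac_0\big]\ge\epsilon$, contradicting the hypothesis and forcing $\phi_n(Q,Q_1)=1$. I expect this to be the delicate point: one must argue that $\phi_n(Q,Q_1)=0$ actually forces $\phi_n$ to output $0$ on the whole $\delta_n$-slab of close pairs on which the concentration bound places mass $\ge\epsilon$ — equivalently, that the $H_0$-region $\Ac_0$ cannot be ``thin'' near the diagonal. This is where the regularity of the type-based test obtained from the reduction (Lemma~6 of \cite{Tan}) must be exploited; once that is in place the argument closes, the remaining steps being routine method-of-types bookkeeping, and letting the arbitrary close pair $(Q,Q_1)$ range over all admissible pairs yields the stated claim.
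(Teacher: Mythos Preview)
Your plan mirrors the paper's proof: argue by contradiction and apply Lemma~\ref{lem:concentration} with the radius $\delta_n$ calibrated so that each exponential term contributes at most $(1-\epsilon)/2$, giving probability at least $\epsilon$ for the slab $\{\|\Tx-\TX\|_\infty\le\delta_n\}$. The paper does not explicitly pick a witness $\tilde P_1$ (it simply writes $P_1$), but otherwise the structure and the arithmetic are identical.

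The ``delicate point'' you flag is a real gap, and the paper's own proof does not close it either. From $\phi_n(Q,Q_1)=0$ at a \emph{single} close pair, the paper writes $\epsilon_1\ge P_1(\Tx=Q,\TX=Q_1)$ and then invokes Lemma~\ref{lem:concentration} to declare this at least $\epsilon$; but that lemma lower bounds the probability of the whole slab, not of one type pair, so the paper is tacitly identifying the point event $\{\Tx=Q,\TX=Q_1\}$ with the slab event --- precisely the leap you were worried about. Your proposed remedy via Lemma~6 of \cite{Tan} does not supply what is missing: that reduction produces a type-based test whose error probabilities are within polynomial factors of the original, but it grants no continuity or local constancy of the decision region, and nothing in it prevents $\Ac_0$ from containing the single pair $(Q,Q_1)$ while excluding most of its $\delta_n$-neighbours. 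In short, your proposal reproduces the paper's argument, with its weak spot made explicit rather than filled.
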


\begin{proof}
	We prove this lemma by contradiction. Assume there exists a type dependent test $\phi_n$ such that $\epsilon_1\big(\phi_n(\Tx,\TX)\big) < \epsilon $, and $\phi_n(Q,Q_1)=0$ for types $Q,Q_1$ satisfuing \eqref{eq:lemmacondstein}.  We have also used the notation $a \wedge b = \min\{a,b\}$. Then the type-\RNum{2} probability of error for such test is
	\begin{align}
	\epsilon_1(\phi_n)&=P_1 ( \phi_n(\Tx,\TX)=0)\\
	&\geq  P_1( \phi_n(\Tx,\TX)=0 , \Tx=Q, \TX=Q_1).
	\end{align}  
	 Conditioning on training and test sequences, we have
	\begin{align}
	\epsilon_1\big(\phi_n)&\geq  P_1 ( \phi_n(\Tx,\TX)=0 | \Tx=Q, \TX=Q_1\big) {P}_1 ( \Tx=Q, \TX=Q_1\big)\\ 
	 &={P}_1 ( \Tx=Q, \TX=Q_1\big)\\
	&\geq \Big(1-\frac{(1-\epsilon)}{2}-\frac{(1-\epsilon)}{2} \Big )\\
	&\geq \epsilon
	\end{align}  
	where in the last step we used the Lemma \ref{lem:concentration}.  Therefore, for any probability distribution $P_1$ the  type-\RNum{2} error probability exceeds $\epsilon$ which contradicts the assumption. Hence any type based 	classifier  that  achieves $\epsilon_1(\phi_n)< \epsilon$ for all distributions of $\tilde{P}_1 \in \Pc(\Xc)$ , should satisfy $\phi(\Tx,\TX)=1$ for sufficiently close types $\Tx,\TX$.  
\end{proof}

	Finally by the method of types, we can lower bound the type-\RNum{1} probability of error as 
	\begin{align}
	\epsilon_0&=P_0 (\phi_n(Q,Q_1)=1)\\
	& \geq  (n+1)^{-|\mathcal{X}|} (k+1)^{-|\mathcal{X}|}  \sum_{\phi_n(Q,Q_1)=1 } e^{-n\big ( D(Q\|P_0) + \alpha D(Q_1\|P_1) \big ) }\\
	& \geq (n+1)^{-|\mathcal{X}|} (k+1)^{-|\mathcal{X}|} e^{-n\big ( \min_{\phi_n(Q,Q_1)=1 }  D(Q\|P_0) + \alpha D(Q_1\|P_1) \big ) }.
	\end{align}
	Therefore for any type based test the type-\RNum{1} error exponent is upper bounded by
	\begin{equation}\label{eq:convexp}
	E_0(\phi) \leq \liminf_{n\rightarrow \infty} \min_{\phi_n(Q,Q_1)=1 } D(Q\|P_0) + \alpha D(Q_1\|P_1). 
	\end{equation}	
	Now by Lemma \ref{lem:theconv} for any test with type-\RNum{2} error probability bounded away from one, we have	
	\begin{align}
	E_0(\phi)& \leq \liminf_{n\rightarrow \infty} \min_{\max_{a \in \mathcal{X}} |Q(a)-Q_1(a)|  \leq \sqrt{\frac{2}{(\alpha\wedge 1)n} \log \frac{4|\Xc|}{1-\epsilon}} } D(Q\|P_0) + \alpha D(Q_1\|P_1)\\
	& = \lim_{n\rightarrow \infty} \min_{Q \in \Pc (\Xc) } D(Q\|P_0) + \alpha D(Q_1\|P_1) + o(1) \\
	&= \min_{Q \in \Pc (\Xc)}   D({Q}\|P_0) + \alpha D({Q}\|P_1)\\
	&= D_{\frac{\alpha}{1+\alpha}}  (P_1\|P_0)
	\end{align}	
	where the last step follows from e.g. \cite{Harremos}. This concludes the proof. Observe that the proof can be easily generalized for the case where both probability distributions are unknown and only training samples from both are given, i.e., the largest type-\RNum{1} error exponent achievable when the type-\RNum{2} error probability is bounded away from one is 	also $D_{\frac{\alpha}{1+\alpha}}  (P_1\|P_0)$, and Gutman's test can achieve it.

%%%%%%%%%%%%%%%%%%%%%%%%%%%%%%%%%%%%%%%%%%%%%%%%
%%%%%%%%%%%%%%%%%%%%%%%%%%%%%%%%%%%%%%%%%%%%%%%%
%%%%%%%%%%%%%%%%%%%%%%%%%%%%%%%%%%%%%%%%%%%%%%%%
\section{Proof of Theorem \ref{thm:seq}}
\label{sec:proofth51}

We first find the error probabilities as a function of thresholds, and then we find the average stopping time under each hypothesis. The type-\RNum{1} error probability can be upper bounded by
	\begin{align}
\epsilon_0 \leq  \sum_{t=n}^{\infty} \PP \Big [ t \big(	D(\Tx \| {P}_0)-D(\Tx\| \TX') \big) \geq  \got \Big ].
\end{align} 
By the method of types \cite{Cover}, we have
	\begin{align}\label{eq:uppersensamp}
\epsilon_0 &\leq \sum_{t=n}^{\infty} \sum_{(Q,Q_1) \in	{\mathcal{Q}_{01}(t)} \cap \Pc_t(\Xc) \times \Pc_{\alpha t}(\Xc) } e^{ -t \big(D(Q\| P_0)+\alpha D(Q_1\|P_1)\big) }\\
&\leq \sum_{t=n}^{\infty} (\alpha t +1)^{|\mathcal{X}|}  (t+1)^{|\mathcal{X}|}  e^{- \Etzo},
\end{align} 
where 
\begin{equation}\label{eq:Eadverse1}
\Etzo=\min_{(Q,Q_1)\in \Qc_{01}(t)} t \big(D(Q\| P_0)+\alpha D(Q_1\|P_1)\big ),
\end{equation}	
\begin{equation}\label{eq:set}
{\mathcal{Q}_{01}}(t)=\Bigg \{(Q,Q_1):   D(Q\| {P}_0)-D(Q\| Q_1')   \geq \frac{\got}{t}, Q_1'=(1-\delta_n)Q_1+\delta_n U  \Bigg \}.
\end{equation} 
Similarly to the proof of the Theorem \ref{thm:E1}, we can expand all the exponents defined in this proof around $Q_1$ and by choosing $\delta_n=o(n^{-1})$ the error term of the expansion vanishes as $n$ goes to infinity and we can substitutde $Q_1'$ with $Q_1$ for $Q_1\in \Pc_{\delta_n}(\Xc)$. This is also true for all the exponent function we define in the rest of the proof.   For every fixed $Q_1$ we can use the dual form of the optimization \eqref{eq:Eadverse1} over $Q$ to get \cite{BoroumandTran} 
\begin{align}\label{eq:dualE0}
\Etzo=\min_{Q_1\in \Pc_{\delta_n}(\Xc)} \Big ( \max_{\lambda \geq 0}    \got \lambda -t \log   \sum_{x\in \Xc} P_0^{1-\lambda}(x) Q_1^{\lambda}(x)  \Big )+\alpha t D(Q_1\|P_1)\Big ).
\end{align}	
%It is easy to see that the solution to the maximization is convex in $\tilde{\gamma_0}$ as it is the supremum over the set of linear functions in $\tilde{\gamma_0}$. Hence we can use the envelope theorem and lower bound the maximiztion by its linear expansion at $\gamma_0$ to get
%\begin{equation}
%asdf
%\end{equation}
%$\gamma_0= \gamma_0 +\frac{ |\Xc|+2}{\lambda^*_2} \log (t+1)$ 
Substituting $\got=nD(Q_1\|P_0)+( 4|\Xc|+4)\log (t+1)$ and setting $\lambda=1$  to get a lower bound, we obtain
\begin{align}\label{eq:Eadverse}
\Etzo \geq( 4 |\Xc|+4)\log (t+1)+ n \min_{Q_1\in \Pc_{\delta_n}(\Xc)}  D(Q_1\|P_0) +\frac{\alpha t}{n} D(Q_1\|P_1),
\end{align}
where for large enough $n$
\begin{align}
\Etzo \geq ( 4|\Xc|+4)\log (t+1)+ nD_{\beta(t)}(P_1\|P_0), \label{eq:lowerE0}
\end{align}	
and
\begin{equation}
\beta(t)=\frac{\frac{\alpha t}{n}}{1+\frac{\alpha t}{n}}.
\end{equation} 
Furthermore as $\beta(t)$ is  strictly increasing function in $t$ and $D_{\beta(t)}(P_1\|P_0)$ is a non-decreasing function in $\beta$ \cite{Harremos}, then for all $ t\geq n$ we have
\begin{equation}\label{eq:lowerRenyi}
 D_{\frac{\alpha}{1+\alpha}}(P_1\|P_0)\leq D_{\beta(t)}(P_1\|P_0).
\end{equation}
Therefore, by \eqref{eq:lowerE0}, \eqref{eq:lowerRenyi} 
\begin{align}
\epsilon_0 &\leq \sum_{t=n}^{\infty} (\alpha t +1)^{|\mathcal{X}|}   (t+1)^{-3|\mathcal{X}|-4} e^{-n D_{\frac{\alpha}{1+\alpha}}(P_1\|P_0)}\\
&\leq  c e^{- nD_{\frac{\alpha}{1+\alpha}}(P_1\|P_0) },
\end{align}
where $c$ is a positive constant. 

Next, we find a lower bound to the type-\RNum{2} error exponent. Upper bounding the type-\RNum{2} error probability, we have
	\begin{align}
\epsilon_1 \leq  \sum_{t=n}^{\infty} \PP \Big [ & t(	D(\Tx\| \TX')-D(\Tx \| {P}_0)) \geq  \gzt \Big ].
\end{align} 
By the method of types, we have
\begin{align}
\epsilon_1 &\leq \sum_{t=n}^{\infty} \sum_{(Q,Q_1) \in	{\mathcal{Q}_{10}} (t) \cap \Pc_t(\Xc) \times \Pc_{\alpha t }(\Xc) } e^{ -t \big(D(Q\| P_1)+\alpha D(Q_1\|P_1)\big) }\\
&\leq \sum_{t=n}^{\infty}  (\alpha t+1)^{|\mathcal{X}|}  (t+1)^{|\mathcal{X}|} e^{- E_{10}(t)}, \label{eq:uppersensamp}
\end{align} 
where 
\begin{equation}\label{eq:Eadverse}
E_{10}(t)=\min_{(Q,Q_1) \in \Qc_{10}(t)} t \big(D(Q\| P_1)+\alpha D(Q_1\|P_1)\big ),
\end{equation}	
\begin{equation}\label{eq:set}
{\mathcal{Q}_{10}}(\gamma)=\Bigg \{(Q,Q_1):   D(Q\| Q_1')-D(Q\| {P}_0)   \geq \frac{\gzt}{t} , Q_1'=(1-\delta_n)Q_1+\delta_n U  \Bigg \}.
\end{equation} 
We show that there exists a finite $\alpha^*_{\rm seq}$ such that for every $\alpha\geq \alpha^*_{\rm seq}$, the achievable type-\RNum{2} error exponent is lower bounded by $nD(P_0\|P_1)$. Similarly to the fixed sample sized case, for every $Q_1 \in \Pc_{\delta_n}(\Xc)$, let 
\begin{equation}\label{eq:E10Q}
E_{10}(Q_1,t) =\min_{\substack{D(Q\|Q_1)-D(Q\|P_0)    \geq  \frac{\gzt}{t}  \\ Q \in \Pc(\Xc) }  } tD(Q\|P_1) ,
\end{equation}
which is the error exponent when the type of the training sequence  is $Q_1$. We also define
\begin{equation} 
\Etro=\min_{\substack{D(Q_1\|P_1)\leq r \\  Q_1 \in \Pc_{\delta_n}(\Xc) } } \Etqo.
\end{equation}
Next, we will Taylor expand $\Etro $. It is sufficient to show that there exists a finite $\alpha$ such that 
  \begin{equation}\label{eq:condseq}
\inf_{\substack{n\leq t \\ t \in \NN}} \inf_{0\leq r \leq \frac{r_c}{2}}  \Etro+\alpha tr \geq nD(P_1\|P_0)+ ( 2 |\Xc|+2)\log (t+1) .
  \end{equation}
 Equivalently,  \eqref{eq:condseq} can be written as the following condition 
\begin{align}\label{eq:alseq}
\Etro+\alpha tr  \geq nD(P_1\|P_0)+ ( 2 |\Xc|+2)\log (t+1) ~~ \forall r: ~ 0\leq r\leq \frac{r_c}{2}, n\leq t, t \in \NN.
\end{align}
%Hence for any $\alpha$ satisfying 
%\begin{equation}\label{eq:alphaconseq}
%\alpha \geq \frac{D(P_0\|P_1)- \Etro}{tr},
%\end{equation}
%we have $ E_{10}(r=0,t=n)\geq nD(P_0\|P_1)+ (2\Xc|+2)\log t+1$.
% Therefore, it remains to show the RHS  of \eqref{eq:alphaconseq} is finite for every $0\leq r, n\leq t$. 
Using a Taylor series expansion of $\Etro$ around $r=0$ we have \cite{BoroumandTran}
\begin{align}
\Etro \geq E_{10}(r=0,t)+ z_1(t) \sqrt{r} \mathbbm{1} \{t\geq n+1\}+   h r \mathbbm{1} \{t=n\}
\end{align}
where
\begin{align}
z_1(t)&=\inf_{0\leq r\leq \frac{r_c}{2}} \frac{ \partial \Etro  }{\partial \sqrt{r}} \\
&=\inf_{D(Q_1\|P_1) \leq \frac{r_c}{2} } -\sqrt{ \text{Var}_{P_1} \Bigg (\lambda^*(t) \frac{Q_{\lambda^*(t)}-\frac{n}{t}P_0}{Q_1} \Bigg )},\label{eq:firstterm}\\
h&=\frac{1}{2}\inf_{0\leq r \leq \frac{r_c}{2}} \frac{ \partial^2 E_{10}(r,t=n)  }{\partial \sqrt{r} ^2}. 
\end{align} 
where $Q_{\lambda^*(t)}, \lambda^*(t)$ are the minimizing distribution and the Lagrange multiplier in \eqref{eq:E10Q}. We have used the fact that for $t=n$ the optimization problem $E_{10}(r,t=n)$ is the same as the fixed sample sized classifier and hence we can use the result of the Theorem \ref{thm:upper} to lower bound the exponent by $h r \mathbbm{1} \{t=n\}$ for some finite $h$. Also for every $n<t$, the Taylor expansion of $E_{10}(r,t)$ has a nonzero  first order term, hence the expansion includes $\sqrt{r}$. Writing $E_{10}(r=0,t)$ in the dual form, we have the lower bound
\begin{equation}
E_{10}(r=0,t) \geq \max_{\frac{1}{2} \leq \lambda  }  \lambda \gamma -\log   \sum_{x\in \Xc} P_0^{\lambda}(x) P_1^{1-\lambda}(x),
\end{equation}
	where $\gamma=\frac{\gzt}{t}\Big |_{Q_1=P_1}$. Then $E_{10}(r=0,t)$ is convex as it is the supremum of linear functions in $\gamma$ \cite{Boyd}. Hence, we can lower bound $E_{10}(r=0,t) $ by expanding it around $\gamma=\frac{nD(P_0\|P_1)}{t}$ using the envelope theorem we get \cite{Segal}
	\begin{equation}
	E_{10}(r=0,t) \geq \tilde{E}_{10}(r=0,t) + ( 2 |\Xc|+2)\log (t+1)
	\end{equation}
	where
	\begin{equation}\label{eq:seqE1lower}
	\tilde{E}_{10}(r=0,t) =t\max_{\frac{1}{2} \leq \lambda  }  \lambda \frac{nD(P_0\|P_1)}{t} -\log   \sum_{x\in \Xc} P_0^{\lambda}(x) P_1^{1-\lambda}(x).
	\end{equation}
	Further expanding $\tilde{E}_{10}(r=0,t)$  around $t=n$ we have
\begin{align}\label{eq:E1apprx}
\Etro  &\geq nD(P_1\|P_0)+ ( 2 |\Xc|+2)\log (t+1) \nonumber \\
&+m(t-n) \mathbbm{1}\{ t > 2n\} +m(t-n)^2 \mathbbm{1}\{ t \leq 2n\}+  z_1(t) \sqrt{r} \mathbbm{1} \{t\geq n+1\}+   h r \mathbbm{1} \{t=n\}
\end{align}
where we have used the fact that $\frac{ \partial \tilde{E}_{10}(r=0,t) }{\partial t} \Big|_{t=n}=0$, and
\begin{equation}
m=\min \bigg \{  \inf_{2n< t} \frac{ \partial\tilde{E}_{10}(r=0,t) }{\partial t }, \frac{1}{2} \inf_{n\leq t \leq 2n} \frac{ \partial^2\tilde{E}_{10}(r=0,t) }{\partial t ^2} \bigg \}. 
\end{equation}
We have expanded $\tilde{E}_{10}(r=0,t)$ as above since it behaves linearly as $t$ goes to infinity while quadratically for $t$ close to $n$. Using Proposition \ref{prop:sensi}, we have
\begin{align}
\frac{ \partial \tilde{E}_{10}(r=0,t) }{\partial t }&= -\log   \sum_{x\in \Xc} P_0^{\tilde\lambda^*(t)}(x) P_1^{1-\tilde\lambda^*(t)}(x),\\
\frac{ \partial^2 \tilde{E}_{10}(r=0,t) }{\partial t ^2}&= \frac{ \Big (\sum Q^*(t) \log \frac{P_0}{P_1} \Big)^2 } {t\text{Var}_{P_1}\Big ( \log \frac{P_0}{P_1} \Big) },
\end{align}
which are finite and strictly positive for any $2n< t$, $n\leq t \leq 2n$, respectivly, since $Q^*(t)$ is a probability distribution and $\frac{1}{2} \leq \tilde \lambda^*(t)< 1$ is the optimizer in  \eqref{eq:seqE1lower}. From condition  \eqref{eq:alseq} by substituting the approximation \eqref{eq:E1apprx},  we need  that
%\begin{align}
%\frac{E_{10}(r=0,t=n)- \Etro }{tr} \leq -\frac{m(t-n)^2+  z_1(t) \sqrt{r}+   z_2(t) r }{tr}
%\end{align}
\begin{align}\label{eq:alcons}
m\Big((t-n) \mathbbm{1}\{ t > 2n\} +(t-n)^2 \mathbbm{1}\{ t \leq 2n\}  \Big) +  z_1(t) \sqrt{r} \mathbbm{1} \{t\geq n+1\}+   h r \mathbbm{1} \{t=n\} +   \alpha t r \geq 0
\end{align}
for every $0\leq r, \ n\leq t, t\in \NN$. For $\alpha > \frac{|h|}{n}$, letting $r^*=\frac{z_1(t) \mathbbm{1} \{t\geq n+1\}}{2( \alpha t+h\mathbbm{1} \{t=n\}  ) }$ to minimize the LHS of \eqref{eq:alcons} over $r$ we get the condition
 \begin{align}
m\Big((t-n) \mathbbm{1}\{ t > 2n\} +(t-n)^2\mathbbm{1}\{ t \leq 2n\}\Big) -\frac{\rho(t)}{{4(\alpha t+h\mathbbm{1} \{t=n\} )}} \geq 0,
\end{align}
where
\begin{equation}
\rho(t)= {z^2_1(t) \mathbbm{1} \{t\geq n+1\}}
\end{equation}
and we have used that $0\leq z_1(t) < \infty$ by \eqref{eq:firstterm} since for every $Q_1$ satisfying $D(Q_1\|P_1)\leq \frac{r_c}{2}$, $\lambda^*(t)$ is finite and $Q_{\lambda^*(t)}$ is a probability distribution, hence the variance in \eqref{eq:firstterm} is finite. Moreover, $\rho(t)$ equals to zero at $t=n$ and it is finite for every $t\in \NN$ with finite limit as $t\rightarrow \infty$. Hence there exists a finite $c_1$ such that $\rho(t)\leq c_1\Big((t-n) \mathbbm{1}\{ t > 2n\} +(t-n)^2\mathbbm{1}\{ t \leq 2n\}\Big)$ for $t\in \NN$. Then, by further relaxing  condition \eqref{eq:alcons}, we need 
 \begin{align}
 \Big((t-n) \mathbbm{1}\{ t > 2n\} +(t-n)^2\mathbbm{1}\{ t \leq 2n\}\Big)\Big (m- \frac{c_1}{4\alpha t} \Big ) \geq   0,
\end{align}
where we have dropped the $h\mathbbm{1} \{t=n\}$ term as it is only nonzero for $t=n$ which sets $(t-n)$ and $(t-n)^2$ to zero. Therefore, if $\alpha >\frac{c_1}{4m}$, the sufficient condition is satisfied and 
	\begin{equation}\label{eq:lowerE10} 
	E_{10}(t) \geq nD(P_0\|P_1) + ( 2 |\Xc|+2)\log (t+1)
	\end{equation}
	for all $n\leq t, t \in \NN, 0 \leq r \leq \frac{r_c}{2} $. Therefore, for $\alpha>\max \Big \{ \frac{c_1}{4m}, |h|, \frac{2D(P_1\|P_0)}{r_c}+4|\Xc|+4 \Big \}$, by substituting \eqref{eq:lowerE10} in \eqref{eq:uppersensamp}, we get
\begin{align}\label{eq:experror1}
\epsilon_1 \leq  c e^{- nD(P_0\|P_1) },
\end{align}
where $c$ is a positive constant.

Next, we find the average stopping times of the proposed sequential classifier. We first show the convergence of $\tau$ in probability under each hypothesis, and by proving its uniform integrability, we can conclude its convergence in the $L^1$ norm.  The following lemma states that for every $n$ the classifier stops with the probability of one.

\begin{lemma}\label{lem:finiteness}.
Let $\tau_0, \tau_1$ be the the smallest time that the sequential classifier crosses threshold $\gzt$ or $\got$ respectively, i.e.,  
	\begin{align}\label{eq:tau1}
	\tau_{0}=\inf \{t\geq n: {S}_t \geq \gzt\}, \quad \quad    \tau_{1}=\inf \{t\geq n: {S}_t \leq -\got\}.
	\end{align}   
	Then for $i \in \{0,1\}$, $t\geq n$,
	\begin{equation}\label{eq:finiteupper}
	\PP_i[\tau_i > t] \leq      c_i(t+1)^{d_i |\mathcal{X}|}  e^{\xi_i n} e^{-t E_i},
	\end{equation}
	where $E_i, \xi_i,c_i,d_i >0$ and finite. 
\end{lemma}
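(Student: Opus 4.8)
The plan is to bound $\PP_i[\tau_i>t]$ by a single-stage non-crossing event and then invoke the method of types, mirroring the treatment of $\epsilon_0,\epsilon_1$ in the proof of Theorem~\ref{thm:seq}. Since $\tau_0$ (resp.\ $\tau_1$) is the first stage $t$ at which $S_t(\Tx,\TX)\ge\gzt$ (resp.\ $S_t(\Tx,\TX)\le-\got$), the event $\{\tau_0>t\}$ forces $S_t(\Tx,\TX)<\gzt$ and $\{\tau_1>t\}$ forces $S_t(\Tx,\TX)>-\got$. Writing $S_t(\Tx,\TX)=t\big(D(\Tx\|\TX')-D(\Tx\|P_0)\big)$ and substituting the definitions of $\gzt,\got$, each event becomes a constraint on the empirical type pair $(\Tx,\TX)$. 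Under $P_i$ the $t$ test symbols are i.i.d.\ $\sim P_i$ and, independently, the $\alpha t$ training symbols are i.i.d.\ $\sim P_1$, so the method of types yields $\PP_i[\Tx=Q,\TX=Q_1]\le e^{-t(D(Q\|P_i)+\alpha D(Q_1\|P_1))}$, and since there are at most $(t+1)^{|\Xc|}(\alpha t+1)^{|\Xc|}$ type pairs,
\begin{equation*}
\PP_i[\tau_i>t]\ \le\ (t+1)^{|\Xc|}(\alpha t+1)^{|\Xc|}\,e^{-t\,\mathcal{E}_i(t,n)},
\end{equation*}
where $\mathcal{E}_i(t,n)$ is the minimum of $D(Q\|P_i)+\alpha D(Q_1\|P_1)$ over the pairs obeying the respective constraint ($D(Q\|Q_1')-D(Q\|P_0)<\tfrac{n}{t}D(P_0\|Q_1')+\tfrac{4|\Xc|+4}{t}\log(t+1)$ for $i=0$, and the reversed inequality with $D(Q_1\|P_0)$ in place of $D(P_0\|Q_1')$ for $i=1$).

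It then suffices to prove $t\,\mathcal{E}_i(t,n)\ge tE_i-\xi_i n-(4|\Xc|+4)\log(t+1)$ for some finite $E_i>0$ and $\xi_i>0$, since the polynomial factors can be absorbed into $c_i(t+1)^{d_i|\Xc|}$ and the remaining slack into $e^{\xi_i n}$, giving \eqref{eq:finiteupper}. For this, fixing $Q_1$, I would dualise the inner minimisation over $Q$ exactly as in \eqref{eq:Eadverse1}--\eqref{eq:lowerE0}: the relevant Legendre transform is that of the per-symbol log-likelihood under $P_i$ tilted between $P_0$ and $Q_1'$, whose cumulant generating function vanishes at both endpoints of its natural parameter interval, so its conjugate evaluated at the appropriate constraint level is controlled. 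Positivity of $E_i$ is forced by infeasibility of the unconstrained minimiser $(Q,Q_1)=(P_i,P_1)$: inserting it in the limiting ($t\to\infty$) constraint would require $D(P_0\|P_1)\le 0$ for $i=0$, or $D(P_1\|P_0)\le 0$ for $i=1$, impossible since $P_0\neq P_1$; by continuity the constrained minimum stays bounded away from zero.

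The main obstacle is carrying out this last step uniformly over $Q_1$ and pinning down the $n$-dependence. The delicate training types are those close to $P_1$: there the random boundary ($\gzt$ or $\got$) is small, the conditional non-crossing probability is not exponentially small, and the full Legendre-transform estimate---rather than the unit-multiplier bound that sufficed for the error exponents in Theorem~\ref{thm:seq}---is genuinely needed; the $Q_1$ far from $P_1$ are exponentially damped by $\alpha t\,D(Q_1\|P_1)$ and can be dispatched directly. Splitting on $D(Q_1\|P_1)\lessgtr\rho$ for a small fixed $\rho$, using within the first regime that $D(P_0\|Q_1')$ and the associated Chernoff-type quantity are close to their values at $P_1$, and treating the ranges $n\le t\le Kn$ and $t>Kn$ separately, is where the effort concentrates; the rest is the bookkeeping of the proof of Theorem~\ref{thm:seq}.
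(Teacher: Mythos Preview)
Your proposal is correct and matches the paper's proof closely: the paper too bounds by the single-stage non-crossing event, applies the method of types, dualises over $Q$ for fixed $Q_1$ to reach a Chernoff-information lower bound (your ``CGF vanishes at both endpoints'' observation), and handles the $n$-dependence via a $t\gtrless\zeta n$ split exactly as you suggest with $t\gtrless Kn$. The only cosmetic difference is that instead of your explicit $D(Q_1\|P_1)\lessgtr\rho$ split, the paper argues directly that $(C(P_0\|Q_1)-\tfrac{1}{\zeta}D(P_0\|Q_1))^+$ and $\alpha D(Q_1\|P_1)$ cannot vanish simultaneously (and for $i=1$ it splits the $\alpha$-budget into two halves, one absorbing the $n$-term and one securing positivity).
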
	
\begin{proof}
	By the method of types the probability of passing the threshold $\gzt$ under the first hypothesis at a time after $t\geq n$ can be upper bounded by
	\begin{align}
	\PP_0[\tau_0 > t] &\leq \PP_0 \big[ {S}_{t}  \leq \gzt  \big ]\\
	&= \PP_0 \bigg[ D(\Tx\|\TX')-D(\Tx\|P_0) \leq \frac{\gzt}{t}    \bigg  ] \\
	 &\leq  \sum_{(Q,Q_1) \in	{\mathcal{Q}_{00} \cap \Pc_t(\Xc)}  \times \Pc_{\alpha t}(\Xc) } e^{ -t \big(D(Q\| P_0)+\alpha D(Q_1\|P_1)\big) }\\
	&\leq (\alpha t +1)^{|\Xc|}(t+1)^{|\mathcal{X}|} e^{- \Etzz}, \label{eq:expt}
	\end{align}
	where 
	\begin{equation}
	\Etzz=\min_{(Q,Q_1)\in \Qc_{00}(t)} t \big(D(Q\| P_0)+\alpha D(Q_1\|P_1)\big ),
	\end{equation}	
	\begin{equation}\label{eq:set}
	{\mathcal{Q}_{00}}(t)=\bigg \{(Q,Q_1):  D(Q\| {P}_0) - D(Q\| Q_1)  \geq - \frac{\gzt}{t} \bigg \}.
	\end{equation} 
	For every fix $Q_1$ we can use the dual form of the optimization over $Q$ to get \cite{BoroumandTran} 
	\begin{align}
	\Etzz &=\min_{Q_1\in \Pc(\Xc)} \Big ( t\max_{ 0 \leq \lambda }    -\frac{\gzt}{t} \lambda - \log   \sum_{x\in \Xc} P_0^{1-\lambda}(x) Q_1^{\lambda}(x)  \Big )+\alpha t D(Q_1\|P_1)\Big )\\
	&\geq \min_{Q_1\in \Pc(\Xc)} \Big ( t\max_{ 0 \leq \lambda \leq 1}    -\frac{\gzt}{t} \lambda - \log   \sum_{x\in \Xc} P_0^{1-\lambda}(x) Q_1^{\lambda}(x)  \Big )+\alpha t D(Q_1\|P_1)\Big ) \label{eq:dualE00}
	\end{align}		
	Let
	\begin{align}
	E(\gamma)=\max_{0 \leq \lambda \leq 1 }  \lambda \gamma -\log   \sum_{x\in \Xc} P_0^{1-\lambda}(x) Q_1^{\lambda}(x) .
	\end{align}	
	Then $E(\gamma)$ is convex as it is the supremum of linear functions in $\gamma$ \cite{Boyd}. Therefore letting the $E_0$ to be the nonzero minimum of the optimization, we have
	\begin{equation}
	E\Big(\frac{\gamma}{t}\Big) \geq E_0 +\frac{\partial E(\gamma)}{ \partial \gamma}\Bigg |_{\gamma=0}  \frac{\gamma}{t}.  
	\end{equation}
	Applying the expansion to \eqref{eq:dualE00} we get
	\begin{align}
	\Etzz&  \geq t  \min_{Q_1\in \Pc(\Xc)}     \Big ( C(P_0\|Q_1)- \frac{n}{t} D(P_0\|Q_1)- (4 |\Xc|+4) \frac{\log(t+1)}{t} \Big)^{+} +\alpha  D(Q_1\|P_1)
	\end{align}
	where  we have used $\frac{\partial E(\gamma)}{ \partial \gamma} = \lambda^*$ with $\lambda^*\leq 1$ to be the Lagrange multiplier solving \eqref{eq:dualE00}, and lower bounding $\Etzz$ by setting $\lambda^*=1$. Moreover, $C(P_0\|Q_1)= \max_{0 \leq \lambda \leq 1} - \log   \sum_{x\in \Xc} P_0^{1-\lambda}(x) Q_1^{\lambda}(x) $ is the Chernoff information. Further lower bounding $\Etzz$ we get 
	\begin{align}
	\Etzz&  \geq t \min_{Q_1\in \Pc(\Xc)}    \Big ( C(P_0\|Q_1)- \frac{n}{t} D(P_0\|Q_1) \Big)^{+} \mathbbm{1} \{t \geq \zeta n  \}  - (4 |\Xc|+4) \frac{\log(t+1)}{t} +\alpha  D(Q_1\|P_1)\\
	& \geq - (4 |\Xc|+4) \log(t+1) + t \min_{Q_1\in \Pc(\Xc)} \Big ( C(P_0\|Q_1)- \frac{1}{\zeta} D(P_0\|Q_1) \Big)^{+} \mathbbm{1} \{t \geq \zeta n  \} +\alpha  D(Q_1\|P_1) \label{eq:lowerT0}
	\end{align}
	where $(x)^+=\max \{x,0 \}$. Choosing $\zeta > \frac{D(P_0\|P_1)}{C(P_0\|P_1)}$, for every $t\geq \zeta n$ the solution to the optimization   \eqref{eq:lowerT0} is nonzero since  $\alpha D(Q_1\|P_1)$ can be zero if and only if $Q_1=P_1$ while the first term  $  \Big ( C(P_0\|Q_1)- \frac{1}{\zeta} D(P_0\|Q_1) \Big)^{+} \mathbbm{1} \{t \geq \zeta n  \} $ is nonzero for that choice of $Q_1$. Therefore, we have
	\begin{align}
	\Etzz& \geq - (4 |\Xc|+4) \log(t+1) +  E_{0}t\mathbbm{1}  \{t \geq \zeta n  \} \\
	& \geq - (4 |\Xc|+4) \log(t+1) + E_{0} t(1-\zeta \frac{n}{t})\\
	&=- (4 |\Xc|+4) \log(t+1) +  E_{0}t -  \xi_0 n \label{eq:lowerexpt}
	\end{align} 
	where $\xi_0= E_{0} \zeta, E_0> 0$. Substituting \eqref{eq:lowerexpt} in \eqref{eq:expt} gives \eqref{eq:finiteupper}. To prove the result under the hypothesis $P_1$, let
	\begin{equation}
	\Etoo=\min_{(Q,Q_1)\in \Qc_{11}(t)} t \big(D(Q\| P_1)+\alpha D(Q_1\|P_1)\big ),
	\end{equation}	
	\begin{equation}
	{\mathcal{Q}_{11}}(t)=\Big \{(Q,Q_1):  D(Q\| Q_1) - D(Q\| P_0)  \geq - \frac{\got}{t} \Big \}.
	\end{equation} 
	Similar to the steps for hypothesis $P_0$, we have
	\begin{align}
	&\Etoo+  \lambda^*(4 |\Xc|+4) \log(t+1)       \\
	&\geq  t \min_{Q_1\in \Pc(\Xc)} \Big(   \max_{0 \leq \lambda } - \log   \sum_{x\in \Xc} P_1(x) Q_1(x)^{-\lambda} P_0^{\lambda}  \Big ) -\frac{n}{t}  \lambda^*  D(Q_1\|P_0) +\alpha  D(Q_1\|P_1) \label{eq:lowerE1t} \\
	& \geq t\min_{Q,Q_1: D(Q\|Q_1) \geq D(Q\|P_0)} D(Q\|P_1)+ \frac{\alpha}{2} D(Q_1\|P_1) + \min_{Q_1} \frac{\alpha}{2}tD(Q_1\|P_1) -n \lambda^* D(Q_1\|P_0) \\
	& \geq t\min_{Q,Q_1: D(Q\|Q_1) \geq D(Q\|P_0)} D(Q\|P_1)+ \frac{\alpha}{2} D(Q_1\|P_1) + n\min_{Q_1} \frac{\alpha}{2}D(Q_1\|P_1) -   \lambda^* D(Q_1\|P_0)\\
	&=E_{1}t -  \xi_1 n
	\end{align}
	where in the last step we used the fact that both optimizations are finite and $E_1 >0, \xi_1 >0$, and where $\lambda^*$ is the Lagrange multiplier solving the maximization in \eqref{eq:lowerE1t}. This concludes the proof.  
	\end{proof}
\begin{lemma}\label{lem:vanishing}
For $i\in\{0,1\}$
\begin{equation}
|\gamma_{i,n}(t+1)-\gamma_{i,n}(t)| \xrightarrow[]{a.s.} 0,
\end{equation}
as $t\rightarrow \infty$.
\end{lemma}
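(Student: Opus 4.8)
The plan is to prove the stronger, deterministic statement that $|\gamma_{i,n}(t+1)-\gamma_{i,n}(t)|\to 0$ for \emph{every} realisation of the training sequence, which in particular yields the almost-sure convergence claimed. Write $m_t$ for the number of training samples available at stage $t$, so that $m_t=\Theta(t)$ and $m_{t+1}-m_t\le\Delta$ for a fixed constant $\Delta$ (at most $\lceil\alpha\rceil$ new training symbols arrive between consecutive stages), and let $Q_{1,t}$ be the type of the first $m_t$ training samples and $Q_{1,t}'$ its perturbation as in \eqref{eq:typeplus}. By the definition of the thresholds, $\gamma_{0,n}(t)=nD(P_0\|Q_{1,t}')+(4|\Xc|+4)\log(t+1)$ and $\gamma_{1,n}(t)=nD(Q_{1,t}\|P_0)+(4|\Xc|+4)\log(t+1)$, so in both cases the difference $\gamma_{i,n}(t+1)-\gamma_{i,n}(t)$ splits into a logarithmic term $(4|\Xc|+4)\log\frac{t+2}{t+1}$, which tends to $0$, plus $n$ times an increment of a relative entropy; it therefore suffices to control the latter.

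First I would bound the increment of the type itself. An elementary computation with empirical frequencies shows that appending at most $\Delta$ symbols to a sequence of length $m_t$ changes each coordinate of the type by at most $\frac{2\Delta}{m_{t+1}}$, so that $\|Q_{1,t+1}-Q_{1,t}\|_\infty\le \frac{2\Delta}{m_{t+1}}\le \frac{c}{t}$ for a constant $c$ and all large $t$, and hence also $\|Q_{1,t+1}'-Q_{1,t}'\|_\infty\le \frac{c}{t}$. This step is purely combinatorial and uses no probabilistic input. It then remains to transfer this to the divergences via continuity. For $i=1$, the map $Q\mapsto D(Q\|P_0)$ is continuous, hence uniformly continuous, on the compact simplex $\Pc(\Xc)$, since $P_0(x)>0$ for every $x$; combined with $\|Q_{1,t+1}-Q_{1,t}\|_\infty\to0$ this gives $|D(Q_{1,t+1}\|P_0)-D(Q_{1,t}\|P_0)|\to0$. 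For $i=0$, the map $Q\mapsto D(P_0\|Q)$ is not continuous up to the boundary of the simplex, but the perturbed type satisfies $Q_{1,t}'(a)\ge \frac{\delta_n}{|\Xc|}>0$ for all $a$ and all $t$, and on the compact set $\{Q:Q(a)\ge \frac{\delta_n}{|\Xc|}\ \forall a\in\Xc\}$ this map is Lipschitz with a constant $L_n$ depending only on $P_0$, $|\Xc|$ and $\delta_n$ (its gradient $(-P_0(a)/Q(a))_{a\in\Xc}$ is bounded there). Hence $|D(P_0\|Q_{1,t+1}')-D(P_0\|Q_{1,t}')|\le L_n\|Q_{1,t+1}'-Q_{1,t}'\|_\infty\le \frac{L_nc}{t}\to0$. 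Adding back the vanishing logarithmic term proves the lemma. As an alternative for $i=0$ that does use randomness, one may dispense with the perturbation and invoke the strong law of large numbers: $Q_{1,t}\to P_1$ almost surely, and since $P_1$ is interior to $\Pc(\Xc)$ the types eventually remain in a compact subset of the interior on which $D(P_0\|\cdot)$ is Lipschitz.

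The only point that requires any care is the failure of continuity of $D(P_0\|\cdot)$ near the boundary of the simplex; this is exactly the role played by the $\delta_n$-perturbation built into the classifier, and invoking it is what makes the Lipschitz estimate uniform in $t$ (note $\delta_n$ depends only on the fixed base size $n$, so $L_n$ does not grow with $t$). Everything else is a short estimate, so I do not expect a genuine obstacle here.
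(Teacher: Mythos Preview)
Your proposal is correct, and at the structural level it coincides with the paper's proof: both split $\gamma_{i,n}(t+1)-\gamma_{i,n}(t)$ into the deterministic $\log\frac{t+2}{t+1}$ term plus $n$ times a divergence increment, both reduce the latter to controlling $\|Q_{1,t+1}-Q_{1,t}\|$ (which is $O(1/t)$ purely combinatorially), and both work for every realisation of the training sequence.

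The execution of the transfer step differs. For $i=0$ the paper writes $D(P_0\|Q_{1,t+1}')-D(P_0\|Q_{1,t}')$ directly as $\sum_x P_0(x)\log\frac{Q_{1,t}'(x)}{Q_{1,t+1}'(x)}$ and bounds the log ratio by decomposing the updated type as a convex combination of the old type and the type of the newly arrived symbols; for $i=1$ it splits $D(Q\|P_0)=-H(Q)-\sum_x Q(x)\log P_0(x)$ and invokes the $L_1$ entropy-continuity bound $|H(P)-H(Q)|\le -\|P-Q\|_1\log\frac{\|P-Q\|_1}{|\Xc|}$. You instead appeal once to uniform continuity of $Q\mapsto D(Q\|P_0)$ on $\Pc(\Xc)$ and once to a Lipschitz bound for $Q\mapsto D(P_0\|Q)$ on the sublevel set $\{Q:Q(a)\ge\delta_n/|\Xc|\}$. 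Your argument is cleaner and more conceptual; the paper's buys explicit rates (both increments are shown to be $O((\log t)/t)$ or better), which are not needed for the lemma as stated but make the dependence on $n$ and $\delta_n$ visible. Your observation that the $\delta_n$ perturbation is exactly what keeps the $i=0$ Lipschitz constant uniform in $t$ is the right diagnosis and is implicit (but not stated) in the paper's computation.
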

\begin{proof}
We first show the lemma for $i=0$. By the triangle inequality and the definition of KL-divergence, we have
\begin{align}
|\gamma_{0,n}(t+1)-\gamma_{0,n}(t)|  &\leq (4 |\Xc|+4) | \log(t+1) -\log(t)| +n \Bigg | \sum_{x\in\Xc} P_0(x)\log \frac{\TX'^{1:\alpha(t+1)}(x)}{\TX'^{1:\alpha t}(x)}\Bigg |\\
&\leq \frac{c}{t}+ n\Bigg | \sum_{x\in\Xc} P_0(x)\log \frac{\frac{ t}{t+1}\TX'^{1:\alpha t}(x) +\frac{1 }{t+1}\TX^{ \alpha t+1: \alpha(t+1)}(x)  }{\TX'^{1:\alpha t}(x)}\Bigg |\\
&\leq \frac{c}{t}+ n\Bigg | \sum_{x\in\Xc} P_0(x)\log \frac{ t}{t+1} + \frac{1 }{t+1}  \frac{\TX^{\alpha t+1: \alpha (t+1)} (x) }{\TX'^{1:\alpha t}(x)}\Bigg | \rightarrow 0
\end{align}
where $\TX^{i:j}$ is the type of the sequence $(X_i,\ldots,X_j)$, and in the last step, the logarithm will be either zero or goes to zero as $t$ goes to infinity. Next, for $\gamma_{1,n}(t)$, by the triangle inequality and the $L_1$ bound on entropy \cite{Cover} we have
\begin{align}
|\gamma_{1,n}(t+1)-\gamma_{1,n}(t)|  \leq &(4 |\Xc|+4)| \log(t+1) -\log(t)| +n \Big |H(\TX'^{ 1: \alpha (t+1)})- H(\TX'^{1: \alpha t}) \Big | \nonumber \\
&  + n \bigg|\sum_{x\in\Xc} (\TX'^{ 1: \alpha (t+1)} (x) - \TX'^{1: \alpha t}(x) ) \log P_0(x) \bigg |  \\
\leq& \frac{c}{t}-n \|\TX'^{ 1: \alpha (t+1)}- \TX'^{1: \alpha t} \|_1 \log \frac{\|\TX'^{ 1: \alpha (t+1)}- \TX'^{1: \alpha t} \|_1 }{|\Xc|}  + c'\|\TX'^{ 1: \alpha (t+1)}- \TX'^{1: \alpha t} \|_1 \\
\leq &\frac{c}{t}+ \frac{n }{t+1} \log  \frac{1 }{|\Xc| (t+1)}+\frac{c'}{t+1} \rightarrow 0,
\end{align}
where in the last step we have used
\begin{align}
\|\TX'^{ 1: \alpha (t+1)}- \TX'^{1: \alpha t} \|_1 \leq  \frac{1}{t+1} \|\TX'^{ 1: \alpha t}- \TX'^{ \alpha t : \alpha (t+1)} \|_1 \leq \frac{1}{t+1},
\end{align}
as $\TX'$ is a type of a training sequence, and $c,c'$ are positive constants.
\end{proof}

Now by the finiteness of $\tau_0$ for every $n$, and the definition of $\tau_0$, there exists a finite $\tau_0$ with probability one such that
\begin{align}\label{eq:convergp}
{S}_{\tau_0-1} < \gamma_{0,n}(\tau_0-1), ~~~ \gamma_{0,n}(\tau_0)  \leq {S}_{\tau_0}~~~   \text{w.p}.1.
\end{align}
Furthermore, for every $\tau_0$ we have
\begin{align}
\frac{{S}_{\tau_0}}{\tau_0}= D(\Tx^{\tau_0} \| \TX^{\tau_0})- D(\Tx^{\tau_0}\|P_1)+o(1).
\end{align}
Also, since by design $\tau_0 \geq n$, using the WLLN, and the continuous mapping theorem, as $n \rightarrow \infty$,  we get  
\begin{align}\label{eq:convergp1}
\frac{{S}_{\tau_0}}{\tau_0} \xrightarrow[]{p} D(P_0\|{P}_1), \quad \frac{{S}_{\tau_0-1}}{\tau_0-1} \xrightarrow[]{p} D(P_0\|{P}_1).
\end{align}
Therefore, by Lemma \ref{lem:finiteness}, Lemma \ref{lem:vanishing}, and \eqref{eq:convergp}, \eqref{eq:convergp1} we can conclude that 
\begin{equation}\label{eq:convP1}
\frac{\gamma_{0,n}(\tau_0)}{\tau_0}  \xrightarrow[]{p}   { D(P_0\|P_1)},
\end{equation}  	
as	$n \rightarrow  \infty$. Also, we have
\begin{equation}\label{eq:conveq}
\frac{\gamma_{0,n}(\tau_0)}{\tau_0}= \frac{n}{\tau_0} D(P_0\|\TX') +\frac{\log (\tau_0+1)}{\tau_0},
\end{equation}
and by assumtion $\max_{x\in\Xc} \frac{P_0(x)}{P_1(x)} \leq c$,  $D(P_0\|\TX')$ is a consistent estimator of $D(P_0\|P_1)$ \cite{KL}, and hence
\begin{equation}\label{eq:consistency}
D(P_0\|\TX')  \xrightarrow[]{p}    D(P_0\|P_1).
\end{equation}
Finally by \eqref{eq:convP1}, \eqref{eq:conveq}, \eqref{eq:consistency}, and using continuous mapping theorem \cite{Resnick}, we have
\begin{equation}
\frac{\tau_0}{n}\xrightarrow[]{p} 1,
\end{equation}
as $n\rightarrow \infty$.   

To show the convergence in $L^1$ we only need to prove the uniform integrability of the sequence of random variables $\frac{\tau_0}{ n}$, where $\tau_0$ depends on  $n$. Equivalently, we need to show that,
%as $\gzt \rightarrow \infty$, i.e.,
\begin{equation}\label{eq:uniform}
\lim_{t \rightarrow \infty} \sup_{n\geq 1} \mathbb{E}_{P_0} \Bigg [\frac{\tau_0}{ n} \mathds{1}\Big \{\frac{\tau_0}{ n} \geq t \Big  \}     \Bigg] =0.
\end{equation}
By  \eqref{eq:finiteupper}, we can upper bound the given expectation in \eqref{eq:uniform} as  
\begin{align}
\mathbb{E}_{P_0} \Big [\frac{\tau_0 }{n} \mathds{1}\big \{{\tau_0} \geq t  n \big  \}     \Big] &= \frac{1}{n} \sum_{m=1}^{\infty} \PP_0\big [\tau_0-t n \geq m \big ]\\
& \leq  \frac{1}{n} t e^{-n (t E_0-\xi_0)}  \sum_{m=0 }^{\infty} c(m+tn+1)^{4|\mathcal{X}|}     e^{-m E_0}.
\end{align}
Hence the expectation is vanishing as $t\rightarrow\infty$ for every $n$ %$\gzt \rightarrow \infty$ for any $t>\frac{d}{E(0)}$% ,
giving the uniform integrability of $\frac{\tau_0}{ n}$, and hence convergence in $L^1$  \cite{Bill}, i.e,
\begin{equation} \label{eq:convergT1}
\lim_{n \rightarrow \infty} \mathbb{E}_{P_0} \Big [ \Big|\frac{\tau_0}{ n}- 1 \Big | \Big]=0.
\end{equation}	
Finally, we prove the convergence of $\tau$. By \eqref{eq:experror1}, \eqref{eq:convP1} and the union bound, we obtain 
\begin{align}
\PP_0 \Big[\Big|\frac{\tau}{ n}- 1\Big | \geq \epsilon \Big] & \leq \PP_0 \Big[\Big|\frac{\tau}{ n}- 1\Big | \geq \epsilon, {\phi}=0 \Big ]+ \PP_0[{\phi}=1 ]\\
&=\PP_0 \Big[\Big|\frac{\tau_0}{ n}- 1 \Big | \geq \epsilon\Big] + \epsilon_0,  
\end{align}	
which tends to $0$ as $n \rightarrow \infty$, establishing the convergence of $\frac{\tau}{ n}$ in probability. Now, using that $\tau\leq \tau_0$ we have	
\begin{align}\label{eq:upper1}
\mathbb{E}_{P_0} \Big [\frac{\tau}{ n} \mathds{1}\Big \{\frac{\tau}{ n} \geq t \big  \}     \Big] \leq  \mathbb{E}_{P_0} \Big [\frac{\tau_0}{ n} \mathds{1}\Big \{\frac{\tau_0}{ n} \geq t \Big  \}     \Big].
\end{align}
Therefore, uniform integrability of $\tau_0$ gives the uniform integrability of $\tau$, and hence convergence in $L^1$ norm and also expectation of $\frac{\tau}{ n}$, which concludes the proof.
%%%%%%%%%%%%%%%%%%%%%%%%%%%%%%%%%%%%%%%%%%%%%%%%
%%%%%%%%%%%%%%%%%%%%%%%%%%%%%%%%%%%%%%%%%%%%%%%%
%%%%%%%%%%%%%%%%%%%%%%%%%%%%%%%%%%%%%%%%%%%%%%%%
\section{Proof of Theorem \ref{thm:converse}}
\label{sec:proofth52}
For the type-\RNum{2} error exponent, the converse for sequential hypothesis testing is applicable, i.e., for every sequential test with $\mathbb{E}_{P_0}[\tau] \leq n$,  we have $E_1 \leq D(P_0\|P_1)$ \cite{Poly}. To find an upper bound to $E_0$ we use the following lemma.
\begin{lemma}\label{lem:theconvseq}

	 For any type based sequential test $\Phi^{\rm seq}$, let $\tau \in \Nc_1^\epsilon$,  where $\Nc_1^\epsilon=\{1,...,t \}$ is the typical stopping time set such that
	 \begin{equation}
	  P_1(\tau \in \Nc_1^\epsilon)\geq 1-\frac{\epsilon}{2}. 
	 \end{equation}
	Also for every $t$, let $(\xv^t, \X^{\alpha t}) \in \Bc_t^\epsilon$ where
	\begin{equation}
	\Bc_t^\epsilon= \Bigg\{  (\xv^t, \X^{\alpha t}) : \max_{a \in \mathcal{X}} \Big \{ |\Tx(a)-\TX(a)| \Big \} \leq  \sqrt{\frac{2}{(\alpha \wedge 1) t} \log \frac{8|\Xc|}{\epsilon}} \Bigg \}.
	\end{equation}
	Then for any type based test $(\phi(\Tx,\TX),\tau)$   such that for all distributions ${P}_1 \in \Pc(\Xc)$,
        \begin{equation}
	\epsilon_1\big(\phi(\xv^\tau,\X^{\alpha \tau})\big) < \epsilon, \quad\epsilon \in\Big (0,\frac{1}{2}\Big), 
	\end{equation}
	we have %$ \phi(\xv^\tau,\X^{\alpha \tau})=1$. 
	\begin{equation}\label{eq:steinseq}
	\PP \Big [ \phi(\xv^\tau,\X^{\alpha \tau})=1,  (\xv^\tau, \X^{\alpha \tau}) \in \Bc_\tau^\epsilon, \tau \in \Nc_1^\epsilon    \Big] \geq 1-2\epsilon.
	\end{equation}
\end{lemma}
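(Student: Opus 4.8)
The plan is to mimic the fixed-length Stein converse in Lemmas~\ref{lem:concentration}--\ref{lem:theconv}: under $P_1$ the empirical types $\Tx$ and $\TX$ both concentrate on $P_1$, so with high probability they are close to one another \emph{and} the test has already stopped inside the typical window; the universal constraint $\epsilon_1<\epsilon$ then forces the test to declare hypothesis~$1$ on this event. Writing $\mathcal{G}$ for the event in \eqref{eq:steinseq} and $\PP_1$ for the probability when both the test and training sequences are generated by $P_1$, I would bound the complement by a three--way union bound,
\begin{align}
\PP_1\big[\mathcal{G}^{c}\big]\;\le\;\PP_1\big[\phi(\xv^{\tau},\X^{\alpha\tau})=0\big]\;+\;\PP_1\big[(\xv^{\tau},\X^{\alpha\tau})\notin\mathcal{B}_{\tau}^{\epsilon}\big]\;+\;\PP_1\big[\tau\notin\mathcal{N}_{1}^{\epsilon}\big],
\end{align}
and argue that the three terms are at most $\epsilon$, $\tfrac{\epsilon}{2}$ and $\tfrac{\epsilon}{2}$, which gives $\PP_1[\mathcal{G}^{c}]<2\epsilon$.

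The first term is exactly $\epsilon_{1}(\phi(\xv^{\tau},\X^{\alpha\tau}))$, which is $<\epsilon$ by assumption. For the third term one first notes that the typical set is well posed: an admissible sequential test has $\mathbb{E}_{P_1}[\tau]\le n<\infty$, so $\PP_1[\tau>t]\le n/t$ by Markov's inequality, and taking $t=\lceil 2n/\epsilon\rceil$ and $\mathcal{N}_{1}^{\epsilon}=\{1,\dots,t\}$ makes $\PP_1[\tau\notin\mathcal{N}_{1}^{\epsilon}]\le\tfrac{\epsilon}{2}$, as demanded in the statement.

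The core of the proof is the middle term, a time-uniform counterpart of Lemma~\ref{lem:concentration}. For a fixed $s$ the triangle inequality gives $\{(\xv^{s},\X^{\alpha s})\notin\mathcal{B}_{s}^{\epsilon}\}\subseteq\{\max_{a}|\Tx(a)-P_1(a)|\ge\delta_{s}/2\}\cup\{\max_{a}|\TX(a)-P_1(a)|\ge\delta_{s}/2\}$ with $\delta_{s}=\sqrt{\tfrac{2}{(\alpha\wedge1)s}\log\tfrac{8|\Xc|}{\epsilon}}$; Hoeffding's inequality and a union bound over $\Xc$, exactly as in Lemma~\ref{lem:concentration}, bound the first event by $2|\Xc|e^{-s\delta_{s}^{2}/2}\le 2|\Xc|\tfrac{\epsilon}{8|\Xc|}=\tfrac{\epsilon}{4}$ and the second by $2|\Xc|e^{-\alpha s\delta_{s}^{2}/2}\le\tfrac{\epsilon}{4}$, so $\PP_1[(\xv^{s},\X^{\alpha s})\notin\mathcal{B}_{s}^{\epsilon}]\le\tfrac{\epsilon}{2}$ for each fixed $s$. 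To handle the random stopping time $\tau$ I would upgrade this to a maximal bound: for every symbol $a$, $\sum_{i\le s}(\mathbbm{1}\{x_{i}=a\}-P_1(a))$ is a zero-mean, bounded-increment martingale, its exponential transform is a non-negative supermartingale, and Ville's inequality --- applied over the finite horizon $\{1,\dots,t\}$, which suffices because the bound is only used on $\{\tau\in\mathcal{N}_{1}^{\epsilon}\}$ --- converts the per-time Hoeffding estimate into one holding simultaneously for all $s$, hence at $s=\tau$; the training counts are treated identically with $\alpha s$ replacing $s$. Collecting the $2|\Xc|$ contributions yields $\PP_1[(\xv^{\tau},\X^{\alpha\tau})\notin\mathcal{B}_{\tau}^{\epsilon}]\le\tfrac{\epsilon}{2}$, and summing the three terms proves \eqref{eq:steinseq}.

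The step I expect to be the main obstacle is precisely this passage from a fixed sample size to the random stopping time: a plain union bound of the per-time estimates over $s\in\mathcal{N}_{1}^{\epsilon}$ is worthless, since $s\delta_{s}^{2}$ does not grow with $s$ and every per-time probability equals the same $\tfrac{\epsilon}{2}$, so one must exploit the martingale/maximal-inequality structure (or a dyadic peeling argument), and making the bookkeeping close exactly against the threshold in the definition of $\mathcal{B}_{t}^{\epsilon}$ is where the care is needed. Once \eqref{eq:steinseq} is in hand the rest of the converse (Theorem~\ref{thm:converse}) follows the fixed-length template: $\phi$ must equal $1$ on $\mathcal{B}_{\tau}^{\epsilon}$, and a method-of-types lower bound on $\epsilon_{0}$ together with $\min_{Q}D(Q\|P_0)+\alpha D(Q\|P_1)=D_{\frac{\alpha}{1+\alpha}}(P_1\|P_0)$ yields the stated R\'enyi-divergence converse.
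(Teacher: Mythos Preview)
Your three-term union bound and contradiction argument is exactly the paper's proof: assume $\PP[\mathcal{G}]<1-2\epsilon$, apply the union bound to $\mathcal{G}^c$, bound the three pieces by $\epsilon_1$, $\tfrac{\epsilon}{4}+\tfrac{\epsilon}{4}$, and $\tfrac{\epsilon}{2}$, and conclude $\epsilon_1>\epsilon$. For the middle term the paper writes a single line---``in the last step we used Lemma~\ref{lem:concentration}''---with no discussion of the random stopping time, so on that point you are actually more careful than the original.

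Your concern is legitimate: $\tau$ is data-dependent, and the square-root boundary in $\Bc_t^\epsilon$ is precisely the kind a random walk crosses infinitely often, so the per-$t$ Hoeffding bound does not automatically transfer to $t=\tau$. However, your Ville remedy does not close as stated either: Ville's inequality with a single tilting parameter controls a \emph{linear} boundary in $s$, not the $\sqrt{s}$ boundary that defines $\Bc_s^\epsilon$; obtaining a time-uniform square-root envelope requires peeling or a mixture over tilting parameters, which inflates the width by a logarithmic-in-$s$ factor that the stated threshold $\sqrt{\tfrac{2}{(\alpha\wedge 1)t}\log\tfrac{8|\Xc|}{\epsilon}}$ does not carry. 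For the downstream use in Theorem~\ref{thm:converse} only the $O(1/\sqrt{t})$ shrinkage matters, so enlarging the constant inside the logarithm would be harmless there, but the lemma \emph{as stated} with that specific constant is not established by either argument. Your diagnosis of the obstacle is correct; the patch still needs one more ingredient (e.g.\ dyadic peeling over $\Nc_1^\epsilon$ with an adjusted constant, or a reformulated $\Bc_t^\epsilon$).
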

\begin{proof}
We prove the lemma by contradiction. Assume 
	\begin{equation}
	\PP \Big [ \phi(\xv^\tau,\X^{\alpha \tau})=1,  (\xv^\tau, \X^{\alpha \tau}) \in \Bc_\tau^\epsilon, \tau \in \Nc_1^\epsilon    \Big] < 1-2\epsilon.
	\end{equation}
	Then
	\begin{align}
	2\epsilon &< \PP \Bigg [ \phi(\xv^\tau,\X^{\alpha \tau})\neq 1\cup (\xv^\tau, \X^{\alpha \tau}) \notin \Bc_\tau^\epsilon \cup \tau \notin \Nc_1^\epsilon    \Big]\\
	&\leq \PP \Big [ \phi(\xv^\tau,\X^{\alpha \tau})\neq 1 \Big] + \PP \Big [ (\xv^\tau, \X^{\alpha \tau}) \notin \Bc_\tau^\epsilon  \Big]+ \PP \Big [ \tau \notin \Nc_1^\epsilon    \Big]\\
	&\leq  \epsilon_1 + \frac{\epsilon}{4} +\frac{\epsilon}{4} +\frac{\epsilon}{2}. 
	\end{align}
	where in the last step we used Lemma \ref{lem:concentration}. Hence, $\epsilon \leq \epsilon_1$ which is a contradiction and hence \eqref{eq:steinseq} holds.
\end{proof}

	By Lemma \ref{lem:theconvseq} we can conclude that if the condition \eqref{eq:steinseq} does not hold there exists a distribution $P_1$ such the that the type-\RNum{2} error probability is bounded away from zero and hence the type-\RNum{2} error exponent of such test equals to zero.  Therefore, by  \eqref{eq:steinseq} we can lower bound the $\epsilon_0$ for any test with nonzero $E_1$ as 
	\begin{align}
	\epsilon_0 =& \PP[\phi(\xv^\tau, \X^{\alpha \tau}) =1]\\
	\geq &\PP \Big [ \phi(\xv^\tau,\X^{\alpha \tau})=1, (\xv^\tau, \X^{\alpha \tau}) \in \Bc_\tau^\epsilon, \tau \in \Nc_1^\epsilon    \Big]\\
	=& \PP \Big [ \phi(\xv^\tau,\X^{\alpha \tau})=1 \Big| (\xv^\tau, \X^{\alpha \tau}) \in \Bc_\tau^\epsilon, \tau \in \Nc_1 \Big]  \PP \Big [ (\xv^\tau, \X^{\alpha \tau}) \in \Bc_\tau^\epsilon , \tau \in \Nc_1    \Big]. 
	\end{align}
	By the previous lemma, we can lower bound the first probability by $1-2\epsilon$. Also, let $\epsilon$ to be sufficiently small, such that $n \in \Nc_1^\epsilon=\{1,...,N\}$, then by the method of types
	\begin{align}
	\epsilon_0 \geq& (1-2\epsilon) \sum_{t=1}^{N} \PP \Big [ (\xv^\tau, \X^{\alpha \tau}) \in \Bc_\tau^\epsilon \Big | \tau =t    \Big]  \PP_0 [ \tau=t]\\
	 \geq &(1-2\epsilon)\sum_{t=1}^{n}  (t+1)^{-|\mathcal{X}|} (\alpha t+1)^{-|\mathcal{X}|} e^{-t \min_{(Q,Q_1) \in \Bc_t^\epsilon}  D(Q\|P_0) + \alpha D(Q_1\|P_1) } \PP_0 [ \tau=t]\\
	 \geq &(1-2\epsilon)ce^{-n \min_{(Q,Q_1) \in \Bc_t^\epsilon}  D(Q\|P_0) + \alpha D(Q_1\|P_1) }  \PP_0 [ \tau\leq n]\\
	\end{align}	
	where $c$ is a positive constant. Now by $\mathbb{E}_{P_0}[\tau]\leq n$, the optimal test should stops by the time $n$  with a positive probability, i.e., $\PP_0 [ \tau\leq n]>0$, since otherwise, $\mathbb{E}_{P_0}[\tau] > n$. Finally, by letting $\epsilon \rightarrow 0, n \rightarrow \infty$, we have
	\begin{equation}
	E_0 \leq D_{\frac{\alpha}{1+\alpha}}(P_1\|P_0).
	\end{equation}
	Hence, for any sequential test  with a finite stopping time, and  the type-\RNum{2} error probability that is bounded away from one for every distribution $P_1$, the type-\RNum{1} error exponent is bounded by $D_{\frac{\alpha}{1+\alpha}}(P_1\|P_0)$,  which concludes the proof.

%	
%	
%	It is clear that for every $n$, there exists $\epsilon$ such that $n \in \Nc_1^\epsilon$ since $\PP [\tau < \infty ] =1$.
%	
%	and $P_0 [\Nc_1^\epsilon]>1-\epsilon'$. It is clear that for every $n$, there exists $\epsilon$ such that $n \in \Nc_1^\epsilon$ since $\PP [\tau < \infty ] =1$. Also, Note that there exist an $\epsilon$ such that  $P_0 [\Nc_1^\epsilon]>1-\epsilon'$. To see this, assume $P_0 [\Nc_1^\epsilon]\leq1-\epsilon'$ then otherwise, $\mathbb{E}_{P_0}[\tau] \geq $
%	\begin{align}
%	\epsilon_0 \geq & \PP_0 \Big [ \phi(\xv^\tau,\X^{\alpha \tau})=1 \Big| (\xv^\tau, \X^{\alpha \tau}) \in \Bc_\tau^\epsilon, \tau \in \Nc_1 \Big]  \PP \Big [ (\xv^\tau, \X^{\alpha \tau}) \in \Bc_\tau^\epsilon \Big| \tau \in \Nc_1    \Big] \PP_0[ \tau \in 		\Nc_1  ]\\
%	\geq &(1-2\epsilon) (n+1)^{-|\mathcal{X}|} (k+1)^{-|\mathcal{X}|} e^{-n \min_{Q}  D(Q\|P_0) + \alpha D(Q\|P_1) } (1-\epsilon)
%	\end{align}
%	
%	
%	where in the last step we used that since $\mathbb{E}_{P_0}[\tau]\leq n , \mathbb{E}_{P_1} \leq n$ then by exist a $\epsilon$ such that  $n \in \Nc_1$ , otherwise, $ \mathbb{E}_{P_1} \geq n$.
%

\bibliographystyle{ieeebib}
\bibliographystyle{ieeetr}
\bibliography{Classification_Journal}

\end{document}